\documentclass[11pt]{article}
\usepackage[utf8]{inputenc}
\usepackage[english]{babel}
\usepackage[T1]{fontenc}
\usepackage[margin=1.15in]{geometry}
\usepackage{amsmath}
\usepackage{amsthm}
\usepackage{amsfonts}
\usepackage{graphicx}
\usepackage{float}
\usepackage{pgfplots}
\pgfplotsset{compat=1.14, set layers}
\usepackage{bm}
\usepackage{mathtools}
\usepackage{subfig}
\usepackage{enumitem}
\usepackage{setspace}
\usepackage{caption} 
\usepackage{float}
\usepackage{mathtools}
\usepackage{xspace,xcolor}
\usepackage{thmtools}
\usepackage{framed}
\usepackage[bottom]{footmisc}
\usepackage{soul}

\definecolor{refcolor}{rgb}{0.23, 0.27, 0.29}
\usepackage[breaklinks,colorlinks,urlcolor={refcolor},citecolor={refcolor},linkcolor={refcolor}]{hyperref}
\usepackage[capitalize, nameinlink]{cleveref}

%% OUR OWN COMMANDS
\newtheorem{theorem}{Theorem}[section]
\newtheorem*{theorem*}{Theorem}
\newtheorem{lemma}[theorem]{Lemma}

\newtheorem{corollary}[theorem]{Corollary}
\newtheorem{definition}[theorem]{Definition}

\newtheorem{claim}[theorem]{Claim}
\newtheorem{observation}[theorem]{Observation}

\newtheorem{example}[theorem]{Example}
\newtheorem*{fact*}{Fact}

\newtheorem*{lemma*}{Lemma}
\newtheorem*{definition*}{Definition}

\Crefname{paragraph}{Paragraph}{Paragraphs}
\Crefname{observation}{Observation}{Observations}

\newcommand{\mpc}{\textsf{MPC}\xspace}
\newcommand{\local}{\textsf{LOCAL}\xspace}

\newcommand{\lcl}{\textsf{LCL}\xspace}
\newcommand{\lcls}{\textsf{LCL}s\xspace}
\newcommand{\llle}{\textsf{LLL}\xspace}

\newcommand{\dist}{\text{dist}}
\newcommand{\diam}{\text{diam}}
\newcommand{\ad}{\hat{D}}
\newcommand{\arr}{\xrightarrow{}}
\newcommand{\narr}{\not\xrightarrow{}}
\newcommand{\larr}{\xleftarrow{}}

%CC macros
\newcommand{\cc}{\textsf{MAX-ID-Solver}\xspace}

\newcommand{\clst}{{\sf CompressLightSubTrees}\xspace}
\newcommand{\cp}{{\sf CompressPaths}\xspace}
\newcommand{\acp}{{\sf AdvancedCompressPaths}\xspace}

\newcommand{\dclst}{{\sf DecompressLightSubTrees}\xspace}
\newcommand{\dcp}{{\sf DecompressPaths}\xspace}

\newcommand{\pro}{{\sf ProbeDirections}\xspace}

\newcommand{\expo}{{\sf Exp}\xspace}
\newcommand{\fd}{{\sf fullDirs}\xspace}
\newcommand{\ld}{{\sf largestDir}\xspace}

\newcommand{\con}{{\sf Contract}\xspace}
\newcommand{\ex}{{\sf Expand}\xspace}
\newcommand{\rk}{{\sf Rake}\xspace}
\newcommand{\ins}{{\sf Insert}\xspace}

\newcommand{\pre}{{\sf Preprocessing}\xspace}
\newcommand{\post}{{\sf Postprocessing}\xspace}

\newcommand{\act}{\textsf{active}\xspace}
\newcommand{\happy}{\textsf{happy}\xspace}
\newcommand{\unhappy}{\textsf{unhappy}\xspace}
\newcommand{\full}{\textsf{full}\xspace}
\newcommand{\sad}{\textsf{sad}\xspace}

\newcommand{\maxid}{\textsf{MAX-ID}\xspace}
\newcommand{\coco}{\textsf{CC}\xspace}

%LCL solver macros
\newcommand{\sr}{{\sf LCLSolver}\xspace}
\newcommand{\css}{{\sf CountSubtreeSizes}\xspace}
\newcommand{\gs}{{\sf GatherSubtrees}\xspace}
\newcommand{\cs}{{\sf CompressSubtrees}\xspace}
\newcommand{\dcs}{{\sf DecompressSubtrees}\xspace}

\newcommand{\cd}{{\sf CountDistances}\xspace}

% LCL macros
\newcommand{\inn}{{\operatorname{in}}}
\newcommand{\out}{{\operatorname{out}}}
\newcommand{\sinn}{\Sigma_{\inn}}
\newcommand{\sout}{\Sigma_{\out}}

\newcommand{\ginn}{g_{\inn}}

\newcommand{\ID}{\underline{\sf ID}\xspace}
\newcommand{\id}{{\sf id}\xspace}

\newcommand{\vs}{1 \textrm{vs.}\ 2 cycles\xspace}
\newcommand{\diamConnectivity}{$D$-diameter $s$-$t$ path-connectivity\xspace}

\DeclareMathOperator*{\Exp}{\mathbb{E}}
\DeclareMathOperator*{\argmax}{arg\,max}

\newcommand{\poly}{\operatorname{\text{{\rm poly}}}}

\newif\ifdraft
\drafttrue

% Comment environments

% Algorithm environment 
\newcommand{\falgo}[2]{
	\vspace{1mm}
	\begin{framed}
		\noindent #1
		\vspace{-1.5mm}
		
		\noindent \hrulefill
		\begin{enumerate}[leftmargin=*]
			#2
		\end{enumerate}
		\vspace{-2.5mm}
	\end{framed}
}

\begin{document}

\begin{center}
	{\huge \bf Optimal Deterministic Massively Parallel Connectivity on Forests} \\ \vspace{1cm}

\begin{minipage}[H]{14.5cm} 
	{\large \textbf{Alkida Balliu}, Gran Sasso Science Institute -- \href{mailto:alkida.balliu@gssi.it}{\texttt{alkida.balliu@gssi.it}}} \vspace{0.5mm}\\
	{\large \textbf{Rustam Latypov\footnotemark}, Aalto University -- \href{mailto:rustam.latypov@aalto.fi}{\texttt{rustam.latypov@aalto.fi}}} \vspace{0.5mm}\\
	{\large \textbf{Yannic Maus}, TU Graz -- \href{mailto:yannic.maus@ist.tugraz.at}{\texttt{yannic.maus@ist.tugraz.at}}} \vspace{1mm}\\
	{\large \textbf{Dennis Olivetti}, Gran Sasso Science Institute -- \href{mailto:dennis.olivetti@gssi.it}{\texttt{dennis.olivetti@gssi.it}}} \vspace{0.5mm}\\
	{\large \textbf{Jara Uitto}, Aalto University -- \href{mailto:jara.uitto@aalto.fi}{\texttt{jara.uitto@aalto.fi}}} \vspace{0.5mm}\\
\end{minipage}

\vspace{5mm}
\begin{minipage}[H]{13.3cm}
\begin{center}
	{\bf Abstract} \\ 
\end{center}

We show fast deterministic algorithms for fundamental problems on forests in the challenging low-space regime of the well-known Massive Parallel Computation (\mpc) model. A recent breakthrough result by Coy and Czumaj~[STOC'22] shows that, in this setting, it is possible to deterministically identify connected components on graphs in $O(\log D + \log\log n)$ rounds, where $D$ is the diameter of the graph and $n$ the number of nodes. The authors left open a major question: is it possible to get rid of the additive $\log\log n$ factor and deterministically identify connected components in a runtime that is completely independent of $n$?\\

We answer the above question in the affirmative in the case of forests. 
We give an algorithm that identifies connected components in $O(\log D)$ deterministic rounds. 
The total memory required is $O(n+m)$ words, where $m$ is the number of edges in the input graph, which is optimal as it is only enough to store the input graph. We complement our upper bound results by showing that $\Omega(\log D)$ time is necessary even for component-unstable algorithms, conditioned on the widely believed \vs conjecture. 
Our techniques also yield a deterministic forest-rooting algorithm with the same runtime and memory bounds. \\

Furthermore, we consider Locally Checkable Labeling  problems (\lcls), whose solution can be verified by checking the $O(1)$-radius neighborhood of each node. We show that any \lcl problem on forests can be solved in $O(\log D)$ rounds with a canonical deterministic algorithm, improving over the $O(\log n)$ runtime of Brandt, Latypov and Uitto [DISC'21]. We also show that there is no algorithm that solves all \lcl problems on trees asymptotically faster.

\end{minipage}
\end{center}

\vfill
\thispagestyle{empty}
\footnotetext{Supported in part by the Academy of Finland, Grant 334238}

\newpage
\thispagestyle{empty}
\tableofcontents

\newpage
\pagenumbering{arabic}

\section{Introduction}
\label{sec:intro}

Graphs offer a versatile abstraction to relational data and there is a growing demand for processing graphs at scale.
One of the most central graph problems in massive graph processing is the detection of connected components of the input graph.
This problem both captures challenges in the study of the fundamentals of parallel computing and has a variety of practical applications.
In this work, we introduce new parallel techniques for finding connected components of a graph.
Furthermore, we show that our techniques can be applied to solve a broad family of other central graph problems.

The Massively Parallel Computation (\mpc) model~\cite{KarloffSV10} is a mathematical abstraction of modern frameworks of parallel computing such as Hadoop~\cite{White2009}, Spark~\cite{Zaharia2010}, MapReduce~\cite{Dean2008}, and Dryad~\cite{Isard2007}. 
In the \mpc model, we have $M$ machines that communicate in synchronous rounds.
In each round, every machine receives the messages sent in the previous round, performs (arbitrary) local computations, and is allowed to send messages to any other machine.
Initially, an input graph of $n$ nodes and $m$ edges is distributed among the machines.
At the end of the computation, each machine needs to know the output of each node it holds, e.g., the identifier of its connected component.
We work in the low-space regime, where the \emph{local memory} $S$ of each machines is limited to $n^\delta$ words of $O(\log n)$ bits, where $0 < \delta < 1$. A word is enough to store a node or a machine identifier from a polynomial (in $n$) domain.
The local memory restricts the amount of data a machine initially holds and is allowed to send and receive per round.
Furthermore, we focus on the most restricted case of \emph{linear total memory}, i.e., $S \cdot M = \Theta(n + m)$.
Notice that $\Omega(n + m)$ words are required to store the input graph.

In recent years, identifying connected components of a graph has gained a lot of attention. 
As a baseline, the widely believed \vs conjecture states that it takes $\Omega(\log n)$ rounds to tell whether the input graph is a cycle of $n$ nodes or two cycles with $n/2$ nodes~\cite{Roughgarden18, Ghaffari2019, Behnezhad2019}.
We note that proving any unconditional lower bounds seems out of reach as any non-constant lower bound in the low-space \mpc model for any problem in P would imply a separation between $\textrm{NC}^1$ and P~\cite{Roughgarden18}.
It has been shown that this conjecture also implies conditional hardness of detecting connected components in time $o(\log D)$ on the family graphs with diameter at most $D$~\cite{Behnezhad2019, ccderandom}.

This bound has been almost matched in a sequence of works.
First, a randomized $O(\log D \cdot \log \log_{m/n} n)$ time algorithm was designed in~\cite{Andoni2018}.
This was further improved to $O(\log D + \log \log_{m/n} n)$ in~\cite{Behnezhad2019} and derandomized with the same asymptotic runtime in~\cite{ccderandom}.
All of the aforementioned algorithms require only $O(n+m)$ words of global memory.
A fundamental question is whether the runtime \emph{necessarily} depends on $n$ for some range of $m$; we give evidence towards a negative answer.
We show that in the case of forests, we can identify the connected components of a graph in $O(\log D)$ time, which we show to be \emph{optimal} under the \vs conjecture.

\begin{framed} \vspace{-5mm}
    \paragraph{Connected Components on Forests.} 
    Consider the family of forests with component-wise maximum diameter $D$. There is a deterministic low-space \mpc algorithm to find the connected components in time $O(\log D)$. The algorithm uses $O(n + m)$ global memory.
    Under the \vs conjecture, this is optimal.
\end{framed}
 
\paragraph{Sparsification and Dependence on $n$.}
In previous works on connected components, the algorithms have an inherent dependency on the total number of nodes $n$ in the input graph.
There is a technical reason for this dependency, also in the context of problems beyond connected components.
A common algorithm design pattern is to first \emph{sparsify} the input graph, i.e., the graph is made much smaller and the problem is solved in the sparser instance~\cite{Andoni2018, GU19, componentstable, Czumaj2020}.
Then, it is shown that a solution to the original input can be recovered from a solution on the sparsified graph.
As an example, a method to sparsify graphs for connectivity is to perform node/edge contractions, that make the graph smaller and preserve connectivity.

In this pattern, the denser the input graph is, the more global memory the algorithm has on the sparsified graph, relatively speaking.
In the aforementioned previous works, the base $m/n$ of the logarithm can be replaced by $F/n$, where $F$ is the global memory.
Hence, if $F = n^{1 + \Omega(1)}$, the dependency on $n$ disappears.
This suggests that the hardest instances are \emph{sparse} graphs, as the $n$ dependency in the runtime of $O(\log D + \log \log_{F / n} n)$ becomes better the larger the global memory $F$ is.
A limitation to solving connected components through independent node/edge contractions comes from the global memory bound.
If the graph is already sparse, then the sparsification cannot make the graph any sparser, and hence we do not have an advantage in terms of global memory on the sparsified graph.
In the case that $m = O(n)$ and the global memory is linear in $n$, the best we could hope for in the first round of contractions is to drop a constant fraction of the nodes.
The low-level details for the reasons behind this can be extracted from the analysis of~\cite{Andoni2018, GU19, componentstable, Czumaj2020}.
Through the relative increase in global memory,  the (remainder) graph size can be bounded by  $n \cdot 2^{-2^i}$ in the $i$th round of contractions, which leads to an $\Omega(\log \log n)$ runtime.

In previous works, there is even more evidence towards sparse graphs being the hardest instances.
Recently, it was shown that lower bound results from the \local model of distributed message passing can be lifted to \mpc under certain conditions~\cite{Ghaffari2019, componentstable}.
In the \local model, almost all hardness results are obtained on trees or in high-girth graphs, implying lower bounds on forests with potentially many connected components~\cite{KuhnMW16, BBHORS21, BBKOmis, BBOrules, BBKO22, BGKO2022}.
It was shown that a \emph{component-stable} algorithm cannot solve a problem $\pi$ faster than in $O(\log T(n, \Delta))$, where $T(n, \Delta)$ is the complexity of $\pi$ in the \local model\footnote{The \local algorithm is allowed to access shared randomness.} on a graph with $n$ nodes and maximum degree $\Delta$.
Roughly speaking, an \mpc algorithm is component-stable if the output on each node $u$ only depends on the size of the graph and the connected component of $u$ (see \Cref{def:componentStability} for more details~\cite{componentstable}).
While these methods do not yield unconditional hardness in the \mpc model, we face similar difficulties in sparse graphs in the \mpc model as in the message passing models.

\paragraph{Rooted Forests and Applications to Locally Checkable Problems.}
We believe that our technique to obtain connected components is of interest beyond solving the connectivity problem.
For example, through minor adjustments to our technique, we obtain an algorithm that roots an (unrooted) input forest.
Furthermore, we show that in a rooted tree, all \emph{Locally Checkable Labeling (\lcl)} problems can be solved very efficiently through a canonical algorithm.
This generalizes to forests and gives an algorithm that can be executed on each connected component independently of the other components.

 \begin{framed} \vspace{-5mm}
 	\paragraph{Locally Checkable Labelings on Forests.} 
 	On the family of forests with component-wise maximum diameter $D$, all \lcl problems can be solved deterministically in $O(\log D)$ time in the low-space \mpc model with $O(n + m)$ global memory.
 	Under the \vs conjecture, this is optimal.
 \end{framed}

A range of central graph problems, in particular in the area of parallel and distributed computing, are locally checkable, where the correctness of the whole solution can be verified by checking the partial solution around the local neighborhood of each node.
In particular, the class of \lcl problems consists of problems with a finite set of outputs per node/edge and a finite set of locally feasible solutions (see \Cref{def:lcl}), and includes fundamental problems such as MIS, node/edge-coloring and the algorithmic Lov\'{a}sz Local Lemma (\llle).
Our work shows that any \lcl problem can be solved in $O(\log D)$ time and that the same runtime can be obtained for many problems that are not restricted to finite descriptions.
We complement our results by showing that for \lcl problems, this bound is tight under the \vs conjecture. 

In recent works, the complexity of \lcls in \mpc was compared against \emph{locality}~\cite{mpchierarchy,mpc-landscape}, where locality refers to the round complexity of solving an \lcl in the \local model, as a function of $n$.
It was shown that all \lcls on trees can be solved exponentially faster in \mpc as compared to \local. As a consequence, all \lcls on trees can be solved in $O(\log n)$ rounds in the low-regime \mpc model. We note that it is often the case that  the diameter of a graph is small, potentially much smaller than the locality of a certain graph problem (which is \emph{independent} of the diameter).
Hence, our novel technique significantly improves on the state-of-the-art runtimes for various graph problems in a broad family of graphs.

\subsection{Our Contributions}
Our main contribution is an algorithm that deterministically detects the connected components of a forest in time logarithmic on the maximum component-wise diameter; crucially, independent of the size $n$ of the input graph, whose dependence is inherently present in the techniques used in previous works. We also show that our approach is asymptotically optimal under the \vs conjecture. Next, we present our results more formally.

\begin{restatable}[Connected Components]{theorem}{thmCC}
	\label{thm:CCMainTheorem}
	Consider the family of forests.
	There is a deterministic low-space \mpc algorithm to detect the connected components on this family of graphs. 
	In particular, each node learns the maximum ID of its component.
	The algorithms runs in $O(\log D)$ rounds, where $D$ is the maximum diameter of any component.
	The algorithm requires $O(n+m)$ words of global memory, it is component-stable, and it does not need to know $D$.
	Under the \vs conjecture, the runtime is asymptotically optimal.
\end{restatable}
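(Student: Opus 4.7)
The plan is to split the theorem into an algorithmic upper bound and a matching conditional lower bound.

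For the algorithm, I would design a rake-and-compress style tree-contraction scheme tailored to low-space \mpc whose round complexity depends on the component-wise diameter rather than on the node count. The main steps are: (i) orient each tree via a \rt-type subroutine that picks a canonical root (for instance, the maximum-ID node in the component) using the same contraction primitives as the main procedure; (ii) iteratively apply two primitives in alternation — \clst, a rake step that contracts ``light'' subtrees hanging off the surviving skeleton, and \cp (or its sharper variant \acp), a compress step that pointer-jumps along maximal degree-2 chains so that every residual chain shrinks by a constant factor per round; (iii) during contraction, each disappearing edge carries up the maximum ID it has collapsed over, so that after the final round the sole surviving representative of each tree holds the max-ID of its component; (iv) reverse the process with \dclst and \dcp to broadcast this label back down to every original node in the same number of rounds. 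The invariant to establish is that one combined round of \clst and \cp shrinks the diameter of the residual tree by a constant factor, so that $O(\log D)$ rounds suffice.

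On the \mpc side, each primitive should be realised through sort-and-scatter communication patterns that load-balance tokens across machines; memory freed by vanished nodes is reclaimed so that global memory stays $O(n+m)$ throughout. Since the scheme never consults $n$ and halts as soon as each component is locally a single node, the algorithm does not need to know $D$. Component-stability follows because information only flows along edges of the current forest, so the output of a node depends only on the component containing it.

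For the conditional lower bound, I would follow the standard reduction that turns a \vs input into a CC instance of diameter $\Theta(n)$. The resulting instances are in fact unions of paths and hence forests, so an $o(\log D)$ CC algorithm on diameter-$D$ forests would distinguish $1$ from $2$ cycles in $o(\log n)$ rounds, contradicting the conjecture.

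The principal obstacle is the diameter-parameterised analysis of the contraction. Off-the-shelf rake-compress arguments yield only $O(\log n)$, because naive rake peels off one layer of leaves per round, and a deep-but-thin tree still requires $\Theta(D)$ such layers. Driving the bound down to $O(\log D)$ seems to require \cp to be applied aggressively inside subtrees that are not yet fully raked — so that every residual chain, whether on the global skeleton or inside a partially processed subtree, genuinely halves in each round — and the interleaving of the two primitives must be argued with care. Secondary obstacles are the \mpc-specific load balancing needed to keep global memory at $O(n+m)$ while supporting pointer jumping along arbitrarily long chains, and ensuring determinism throughout the rooting, tie-breaking, and compression steps.
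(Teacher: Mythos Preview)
Your high-level decomposition (contract subtrees, compress paths, reverse to broadcast) matches the paper's outer structure, but there is a genuine gap at the core: you assume you can root each tree first and then run a rake--compress whose invariant is ``one round of \clst{}+\cp{} shrinks the diameter by a constant factor.'' Neither assumption survives contact with the memory model. Rooting a forest in $O(\log D)$ rounds with $O(n+m)$ global memory is, as the paper explains, essentially the same problem as connected components; the paper derives rooting \emph{from} the CC algorithm (\Cref{thm:RootingMainTheorem} follows \Cref{thm:CCMainTheorem}), not the other way around. So step~(i) of your plan is circular. Without a rooting, a node does not know which neighbour is ``up,'' hence cannot tell which incident subtrees are light, and cannot safely pointer-jump: the paper stresses that even a \emph{single} uniform exponentiation step on all nodes already violates $O(n+m)$ global memory.

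The paper's actual mechanism is quite different from your invariant. There are $O(1)$ phases, not $O(\log D)$; each phase shrinks the \emph{number of nodes} by a $\Theta(n^{\delta/8})$ factor (\Cref{lem:lConstant}), and the $O(\log D)$ cost sits entirely \emph{inside} \clst. There, every node runs a probing routine (\pro) that estimates, for each neighbour $u$, how many nodes an exponentiation toward $u$ would pull in, and then exponentiates in all directions \emph{except} the one that currently looks heaviest. The memory bound comes from a hypothetical rooting used only in the analysis (\Cref{lem:globalmemory,lem:memoryDirectionExponentiation}): a node never learns more in the root direction than it already holds in its own subtree, up to $\ad$ factors. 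The progress measure is that shortest paths in the virtual ``knowledge'' graph shrink by a constant factor per iteration (\Cref{lem:pathShortening}), which is where the $O(\log D)$ arises. Your proposal has no analogue of this probing/balanced-exponentiation idea, and your acknowledged obstacle---getting $O(\log D)$ rather than $O(\log n)$ out of rake--compress---is precisely what this idea resolves. You are also missing the doubly-exponential guessing scheme for $\ad$ (\Cref{ssec:ccDiameter}) that removes the need to know $D$, and the preprocessing step (\Cref{ssec:ccProcessing}) that shrinks the graph by $\poly(\ad)$ to bring global memory down from $O(m\cdot\ad^3)$ to $O(n+m)$.
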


The techniques for \Cref{thm:CCMainTheorem} can be extended to also obtain a rooted forest, where each node also knows the ID of the corresponding root.

\begin{restatable}[Rooting]{theorem}{thmTreeRooting}
	\label{thm:RootingMainTheorem}
	Consider the family of forests with component-wise maximum diameter $D$. There is a deterministic low-space \mpc algorithm that roots the forest in $O(\log D)$ rounds using $O(n+m)$ words of global memory, and it is component-stable.
\end{restatable}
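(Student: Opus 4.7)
The plan is to bootstrap from \Cref{thm:CCMainTheorem}. Running the CC algorithm, every vertex of each component learns the maximum ID in its component, so we may simply declare the max-ID vertex to be the root at no extra cost. It then remains, for every non-root $v$, to set a parent-pointer to the unique neighbor lying on the path from $v$ toward the root. A naive BFS from each root would take $\Omega(D)$ rounds, so instead I would re-use the same rake-and-compress contraction pipeline that powers the CC algorithm and extend it with a direction-propagation decompression phase.

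Concretely, the contraction phase alternates between raking light subtrees into their unique attachment vertex and compressing maximal degree-two paths, shrinking the effective diameter by a constant factor per level and producing, in $O(\log D)$ levels, a single supernode per component. At each level one records which vertices became internal to a compressed path and which were absorbed into a raked light subtree, together with the identity of the surviving endpoint(s) witnessing them; this is information the CC algorithm already needs to propagate max-IDs. After contraction the surviving supernode is exactly the root $r$, which is trivially ``oriented''. Decompression then proceeds in reverse: when a compressed path $v_0 v_1 \cdots v_k$ is restored, one of its endpoints is already oriented by induction, so every internal $v_i$ sets its parent to the neighbor on the root-side; when a raked light subtree is restored, its attachment vertex is already oriented, and the subtree can be rooted internally at that attachment point in $O(1)$ rounds. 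Thus each of the $O(\log D)$ decompression levels costs $O(1)$ extra rounds and the total runtime stays $O(\log D)$.

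The main obstacle is ensuring that the augmented bookkeeping fits within $O(n+m)$ global memory and that orientation on each level truly costs $O(1)$ MPC rounds under the $n^\delta$ local memory bound. Because each vertex and edge participates in $O(1)$ contraction events across all levels (a standard property of rake-and-compress), the stored contraction log has total size $O(n+m)$. For the round bound, the key observation is that propagating a single direction bit along a compressed path, or rooting a compact light subtree at a prescribed attachment vertex, both reduce to primitives (broadcast/sort/aggregate along stored paths and subtrees) that the CC decompression already executes to carry its own data; the orientation bit can therefore piggyback on those messages without asymptotic overhead. Component-stability is inherited directly from the CC algorithm, since every computation in both the contraction and the decompression phase only touches vertices of a single component.
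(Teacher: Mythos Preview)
Your overall approach is essentially the paper's: run the connected-components algorithm once so every node learns the maximum ID of its tree, then re-run the same contraction/decompression pipeline and, during decompression, orient each revived subtree toward its attachment vertex and each revived path segment consistently with its already-oriented endpoint. The induction you sketch for the decompression phase is exactly what the paper does.

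There is, however, a genuine gap in your component-stability argument. You assert that ``every computation in both the contraction and the decompression phase only touches vertices of a single component,'' and conclude that stability is inherited from the CC algorithm. This is not correct. The CC algorithm's \emph{output} (the maximum ID) is component-stable by definition of the problem, but its \emph{internal execution} is not: the preprocessing step that shrinks the graph to meet the $O(n+m)$ memory bound computes a large independent set via a global derandomization over \emph{all} machines (a single shared bitstring chosen by the method of conditional expectations). Which degree-two vertices get contracted in each preprocessing iteration therefore depends on the entire input, not just the component. Since the rooting you output is determined by the contraction history---in particular, by which node survives to the end---the orientation of one tree can change if another tree is added to the forest, violating component-stability.

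The paper fixes this precisely by using the first run: once every vertex knows the maximum ID of its component, the second run replaces the single global aggregation tree in the derandomization by a separate aggregation tree per component, rooted at that component's max-ID node. The independent set chosen in preprocessing is then a function of that component alone, and hence so is the surviving node and the resulting orientation. Your proposal needs this same modification; without it the algorithm still roots the forest in $O(\log D)$ rounds, but the component-stability claim fails. (A secondary minor point: the surviving supernode is whatever the contraction converges to, not necessarily the max-ID vertex you declared as root; this does not affect correctness of rooting, but your write-up conflates the two.)
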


The rooting of the input forest gives us a handle for easier algorithm design and memory allocation in low-space \mpc.
As a concrete example, our results yield an $O(\log D)$ algorithm for deterministically 2-coloring forests.
Without going into technical details, this can be achieved through a rather simple algorithm, where each node decides its color based on the parity of its distance to the root, and only needs to keep one pointer in memory for the parity counting.
In a sense, we outsource the tedious implementation details to the rooting algorithm in \Cref{thm:RootingMainTheorem} and obtain a convenient tool for algorithm design.

More broadly, we show how to solve any \lcl problem in rooted forests in $O(\log D)$ deterministic rounds. \lcl{}s have gotten ample attention in various distributed models of computation, e.g., \cite{mpchierarchy,BCMOS21, BHKLOS18, ChangKP19, CP19}.  Roughly speaking, the family of \lcl{}s is a subset of the problems for which we can check if a given solution is correct by inspecting the constant radius neighborhood of each node. (see \Cref{def:lcl} for a formal definition of \lcls).
Furthermore, we show that for any fixed $D \in \Omega(\log n)$ and $D \in n^{o(1)}$, there cannot exist an algorithm that solves all \lcl problems in time $o(\log D)$ in the family of unrooted forests of diameter at most $D$.
This holds even if $\poly(n)$ global memory is allowed.

\begin{theorem}[\lcls on trees, simplified]
\label{thm:LCLSolver}
    Consider an \lcl problem $\Pi$ on forests and let $D$ be the component-wise maximum diameter.
	There is a deterministic low-space \mpc algorithm that solves $\Pi$ in $O(\log D)$ rounds using $O(n+m)$ words of global memory.
	Under the \vs conjecture, the runtime is asymptotically optimal.
\end{theorem}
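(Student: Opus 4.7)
The plan has two parts: an upper bound that reduces an arbitrary \lcl to a tree-contraction aggregation on a rooted forest, and a matching conditional lower bound exhibiting a single hard \lcl.

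\textbf{Upper bound.} I first apply \Cref{thm:RootingMainTheorem} to root each component of the input forest in $O(\log D)$ rounds within $O(n+m)$ global memory. Fix an \lcl $\Pi$: since $\Pi$ has a constant-size description over an $O(1)$-radius with an $O(1)$ label alphabet, its constraints encode into a constant-size finite automaton. For each rooted subtree $T_v$, define the \emph{signature} of $T_v$ as the set of labels at $v$ that can be extended to a valid $\Pi$-labeling of $T_v$, together with short extension certificates. Signatures have $O(1)$ size and compose under two operations: \emph{rake} (a parent combining its children's signatures) and \emph{compress} (collapsing a chain of degree-$2$ nodes into a single effective signature-transition). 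I would bottom-up aggregate signatures by rake-and-compress, reusing the \clst/\acp chain-compression primitives that drive the rooting algorithm, so that each phase reduces the "topological height" of every surviving subtree and chain by a constant factor. After $O(\log D)$ phases the entire rooted forest is contracted. A symmetric top-down decompression then assigns each node an actual label consistent with its recorded signature and its already-committed parent label, again in $O(\log D)$ rounds. Because signatures have constant size per node and the compression primitives are themselves memory-linear, global memory stays $O(n+m)$.

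\textbf{Lower bound.} It suffices to exhibit a single \lcl $\Pi^\star$ on forests whose \mpc complexity on diameter-$D$ instances is $\Omega(\log D)$ under the \vs conjecture. Any \lcl with \local complexity $\Theta(D)$ on trees of diameter $D$ works: concrete candidates come from the $\Theta(n)$-complexity region of the \lcl hierarchy on trees (e.g., problems in the style of \cite{BBKOmis,BBHORS21}). Such a $\Theta(D)$ \local lower bound then lifts, under the \vs conjecture and via the component-stable framework of \cite{componentstable}, to an $\Omega(\log D)$ \mpc lower bound for component-stable algorithms, and hence for any algorithm that solves \emph{every} \lcl (including $\Pi^\star$).

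\textbf{Main obstacle.} The technical heart is getting the $O(\log D)$ bound rather than the standard $O(\log n)$ of tree contraction while staying within $O(n+m)$ global memory. The bottleneck is aggregating \lcl signatures along potentially long chains of degree-$2$ nodes without any single machine exceeding its $n^{\delta}$ local memory; this is exactly what the chain-compression primitives \clst and \acp developed for \Cref{thm:RootingMainTheorem} are designed to handle, and the fact that \lcl signatures are associative and of $O(1)$ size makes them compatible with those primitives. A secondary care point is guaranteeing that the bottom-up signatures carry enough information for the top-down pass to unambiguously commit to valid labels; this is handled by defining signatures to include, for each realizable root-label, a concrete extension witness for the contracted fragment.
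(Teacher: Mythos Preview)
Your upper bound plan is in the right spirit (root, track per-node/per-edge ``signatures'' encoding which partial labelings are extendable, compress, then decompress), and this is essentially what the paper does via its \emph{compatibility tree} $(\phi,\psi)$. However, your round-count analysis is off. You claim that each rake-and-compress phase shrinks the ``topological height'' by a constant factor and that therefore $O(\log D)$ phases suffice. This is false already on a complete binary tree of depth $h$: its diameter is $D=2h$, raking removes only the bottom layer, and there are no degree-$2$ nodes to compress, so you need $\Theta(h)=\Theta(D)$ phases, not $O(\log D)$. The paper avoids this by making each phase much more aggressive: on a rooted tree it gathers and compresses every subtree of size at most $n^{\delta/2}$ into its root (using $O(\log D)$ rounds of exponentiation) and then compresses all maximal degree-$2$ paths. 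After one such phase every surviving leaf accounts for at least $n^{\delta/2}$ removed nodes, so the tree shrinks by a polynomial-in-$n$ factor and only $O(1/\delta)=O(1)$ phases are needed; the total is $O(1)\cdot O(\log D)=O(\log D)$. Also, \clst is the unrooted primitive from the \maxid algorithm; once the forest is rooted the paper uses simpler dedicated routines (\css, \gs, \cs) for the subtree step.

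Your lower bound has a more serious gap. Invoking the component-stable lifting framework of \cite{componentstable} yields a bound only against \emph{component-stable} algorithms, and your final sentence (``hence for any algorithm that solves every \lcl'') does not follow: a component-unstable algorithm could in principle beat the lifted bound. The paper's lower bound is designed to hold for arbitrary (even component-unstable) algorithms. It does this by a direct reduction: it defines an explicit \lcl (each node gets input $0$ or $1$; edges are oriented or not; degree-$2$ nodes must be consistently oriented; a degree-$1$ node with input $1$ forces orientation, one with input $0$ forbids being a source) whose unique solution reveals the other endpoint of the path containing $s$, thereby solving the \diamConnectivity problem of \Cref{def:hardproblem}. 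That problem in turn is shown $\Omega(\log D)$-hard under the \vs conjecture by a recursive shrinking argument that does not go through component stability at all. So to match the theorem as stated you need a concrete \lcl plus a direct reduction to (a diameter-bounded version of) \vs, not the generic lifting.
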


\subsection{Challenges and Techniques}
\label{ssec:challenges}
A canonical approach to solve connected components on forests is to root each tree and identify each tree with the ID of the root.
Also, examining the challenges in rooting demonstrates the challenges we face when identifying connected components.
A natural approach to root a tree is to iteratively perform \emph{rake} operations, i.e., pick all the leaves of the tree and each leaf picks the unique neighbor as its parent.
This approach clearly roots a tree in $O(D)$ parallel rounds and furthermore, in the case of a forest, each tree performs its rooting process independently.
If we ignore the memory considerations in the low-space \mpc model, this process could be implemented in $O(\log D)$ rounds using the graph exponentiation technique, where, in $O(\log D)$ rounds,  \emph{every} node gathers their $D$-hop neighborhoods, i.e., the whole graph, to simulate the process fast.
However, when we limit the global memory to $O(n+m)$, we get into trouble.
A simulation through graph exponentiation requires that \emph{all} nodes iteratively gather larger and larger neighborhoods \emph{simultaneously}.
With the strict memory bound, this implies that a node can only gather a \emph{constant} radius neighborhood (and in non-constant degree graphs that we deal with even that is not possible!), which allows only for simulating a constant number of rake-iterations in one \mpc round.

A hope towards a more efficient approach is to show that the amount of total memory \emph{relative}  to the nodes remaining in the graph increases as we rake the graph (similarly to previous work~\cite{Andoni2018, Behnezhad2019, componentstable}).
If one can reduce the size of the graph by a constant factor in each \mpc round, then the available total memory increases by a constant factor per remaining node.
Then, we can gather a slightly larger neighborhood per node in the next step of the simulation.
However, even if we had this guarantee, the best we could hope for is a runtime that depends on $n$, since this approach relies on a progress measure that depends on shrinking the graph.
Informally speaking, this observation says that we need to have a fundamentally different approach than gradually sparsifying the graph.

\paragraph{Balanced Exponentiation.}
One of our main technical contributions is to introduce a new method to gather a \emph{part} of the neighborhood of each node that is balanced in the following sense.
Suppose, for the sake of argument, that we have a rooted tree.
Then, if a node $u$ has, say, $\gamma$ descendants, we ensure that $u$ will  only gather $O(\gamma)$ nodes in the direction of the root, i.e., the direction opposing its descendants.
Furthermore, it will also gather its $\gamma$ descendants, resulting in a memory demand of $O(\gamma)$ (for $u$).
A crucial step in our analysis is to show that even if each node gathered their $\gamma$ descendants and $O(\gamma)$ nodes in the direction of the root, we do not create too much redundancy and we respect the linear total memory bound.
A key technical challenge here is that there is no way for a node to know who are its descendants (because the input graph is unrooted).
We show that, without an asymptotic loss in the runtime, we can deterministically determine which neighbor of $u$ is the \emph{worst case} for a choice of a parent and gather the respective nodes slower.

\paragraph{Progress Measure.}
As mentioned above, to obtain a runtime independent of $n$, we need to avoid arguments that are based on the size of the graph getting smaller during the execution of our algorithm.
The topology gathering through exponentiation can be seen as creating a virtual graph, where a virtual edge $\{u, v\}$ corresponds to the fact that $u$ knows how to reach $v$ and vise versa.
The base of our progress measure is to aim to show that in this virtual graph, the diameter is reduced by a constant factor in each iteration.
Unfortunately, having this type of guarantee seems too good to be true.
Already on a path, it requires too much memory to create a virtual graph where the distances between all pairs of nodes are reduced.
Our contribution is to show that this example is degenerate in the sense that either we can guarantee that the balanced exponentiation reduces the diameter or we can reduce it through a node-contraction type of operation.

\subsection{Further Related Work}
In relation to our work, previous works have studied finding rooted spanning forests.
In~\cite{sirocco, Andoni2018}, $O(\log D \cdot \log \log n)$ algorithms for rooting were given and the runtime was improved to $O(\log D + \log \log n)$ by~\cite{ccderandom}.

Locally checkable problems have been intensively studied in the \mpc model.
Many classic algorithms from PRAM imply \mpc algorithms with the same runtime, e.g., the MIS, maximal matching and coloring~\cite{luby86, alon86}.
The runtime of such simulations are typically polylogarithmic and, in \mpc, the aim is to obtain something significantly faster.
For MIS and maximal matching, there are $\widetilde{O}(\sqrt{\log \Delta} + \log \log \log n)$ time algorithms~\cite{GU19} and $(\Delta + 1)$-node-coloring can be solved in $O(\log \log \log n)$ rounds, even deterministically~\cite{Chang2019, componentstable}.

Many of the current state-of-the-art algorithms for locally checkable problems are (at least to some degree) based on distributed message-passing algorithms.
The common design pattern is to design a message-passing algorithm, for example in the \local model of distributed computing~\cite{Linial92} where the output of each node is decided according to their $t$-hop neighborhood in $t$-rounds.
These algorithms are then implemented faster in the \mpc model through the graph exponentiation technique~\cite{wattenhofer} that, in the ideal case, collects the $t$-hop ball around each node in $O(\log t)$-rounds. This framework was used to obtain an exponential speedup for many locally checkable problems in general, and in particular,
it was used recently to show that all \lcl problems on trees with $t$-round complexity in \local can be solved in $O(\log t)$ \mpc rounds~\cite{mpchierarchy,mpc-landscape}.
Our work broadens our understanding on the complexities of \lcls as a function of the diameter, which is somewhat orthogonal to previous works.

On a technical level, a related work gave a clever approach to encode the feasible outputs around each node into a constant sized \emph{type} of the node~\cite{CP19}.
Given a rooted tree, the type of a node $u$ (or its subtree) is determined through the set of possible outputs of its descendants.
This encoding gives rise to an efficient convergecast protocol, where the root learns its type and effectively broadcasts a valid global solution to the rest of the tree.
In a recent work, related techniques were used to implement a message passing algorithm for \lcls on trees using small messages~\cite{BCMOS21}.
In our work, we employ similar ideas to aggregate and broadcast information efficiently through the input tree.

\paragraph{Lower Bounds.}
While simulating \local message passing algorithms in \mpc has been fruitful in algorithm design, there is an inherent limitation to this approach.
A na\"{i}ve implementation results in a \emph{component-stable} algorithm, where we can show that the simulation cannot be more than exponentially faster than the message passing algorithm~\cite{Ghaffari2019, componentstable}.
An algorithm is said to be \emph{component-stable} if the output on node $v$ depends (deterministically) only on the topology, the input of the nodes, and the IDs of the nodes in the connected component of $v$.
Furthermore, the output is allowed to depend on the number of nodes $n$, the maximum degree $\Delta$ of the input graph, and in case of randomized algorithms, the output can depend on shared randomness.
It was shown that for component-stable algorithms and under the \vs conjecture, $\Omega(\log t)$ rounds cannot be beaten if $t$ is a lower bound on the complexity of the given problem in the \local model.
We emphasize that our lower bounds also work for component-unstable algorithms (still relying on the \vs conjecture).

\section{Overview, Roadmap and Notation}
Our formal results are presented in \Cref{sec:connectedComponents,sec:ConnectedComponentsTheReal,sec:LCLSection,sec:Hardness}.  In \Cref{sec:highlevel}, we present the core techniques of our algorithm. The formal version of this algorithm appears in \Cref{sec:connectedComponents}.

\paragraph{\Cref{sec:connectedComponents}:} This section contains the most involved part of our work, i.e., an algorithm that lets every node of an input tree output the maximum identifier of the tree. On a tree $G$, our algorithm runs in $O(\log \ad)$ rounds and uses global memory $O((n+m)\cdot \ad^3)$  where the parameter $\ad\in  [\diam(G), n^{\delta/8}]$ needs to be known to the algorithm. This sounds like a foolish approach, as this problem can be trivially solved in $O(1)$ rounds if we were really given a single tree as input. We still chose to present our result in this way, as our seemingly na\"ive algorithm is the core of our connected components algorithm that we present in \Cref{sec:ConnectedComponentsTheReal}.

\paragraph{\Cref{sec:ConnectedComponentsTheReal}:}  In fact,  we show that our algorithm can be correctly extended to forests. Note that if every node knows the maximum identifier of its tree, we automatically solve the connected components problem. In this section, we also show how to remove the requirement of knowing $\ad$ via doubly exponentially increasing guesses for $\ad$. 
We also show that we can reduce the overall memory requirement to $O(n+m)$ by preprocessing the graph, that is, we spend additional $O(\log \ad)$ rounds to reduce the size of the graph by a factor $\ad^3$. 
Lastly, we show that the runtime reduces to $O(\log \max_i\{ D_i \})$ where $D_i$ is the diameter of the $i$-th component of the input graph. 
In this section,  we also present the full proof of our connected components algorithm (\Cref{thm:CCMainTheorem}) and our rooting algorithm (\Cref{thm:RootingMainTheorem}).

\paragraph{\Cref{sec:LCLSection}:}
In this section, we show a nice application of our rooting algorithm from \Cref{thm:RootingMainTheorem}. In particular, we show that any \lcl problem can be solved in just $O(\log D)$ rounds, once each tree of the forest is rooted (\Cref{thm:LCLSolver}).

Our approach has a dynamic programming flavor and we explain it for a single tree of the forest. We iteratively reduce the size of the tree, by compressing small subtrees into single nodes, and paths into single edges. 
While performing these compressions, we set additional constraints on the solution allowed on the nodes into which we compress subtrees, and on the edges that represent compressed paths. We maintain the invariant that, if we obtain a solution in the smaller tree, then it can be extended to the original one. We show that, by performing a constant number of compression steps, we obtain a tree that is comprised of a single node, where it is straightforward to compute a solution. We then perform the same operations in the reverse order, in order to extend the solution to the whole tree. All of this is preceded by using \Cref{thm:RootingMainTheorem} to compute a rooting of the tree/forest. The rooting helps, as with a given rooting it is significantly easier to identify the suitable subtrees to compress without breaking memory bounds. 

\paragraph{\Cref{sec:Hardness}:}
In this section, we show that the runtimes of our algorithms are tight, conditioned on the widely believed \vs conjecture.
Our aim is to use a reduction from the \vs problem to solving connectivity on paths.
In previous work~\cite{Ghaffari2019}, a reduction to connectivity on paths was introduced, but for technical reasons, it is not sufficient for our purposes.
We require a guarantee that each path is of bounded diameter, which is not directly guaranteed by the previous work.
Hence, we start by defining a problem on forests, called \diamConnectivity, for which we can prove conditional hardness. By a reduction, we obtain a conditional lower bound of $\Omega(\log D)$ for the connected components problem.

We then define an \lcl{} problem such that, given an algorithm for it, we can use it to solve the \diamConnectivity problem. Hence, we obtain a lower bound of $\Omega(\log D)$ for the problem, implying that our generic \lcl{} solver is also conditionally tight.

\subsection{Definitions and Notation}
\label{ssec:definitionNotation}
Given a graph $G = (V,E)$, we denote with $\Delta$ the maximum degree of $G$, with $n = |V|$ the number of nodes in $G$, and with $m=|E|$ the number of edges in $G$. We denote with $N_G(v)$ the neighbors of $v$, that is, the set $\{u \mid \{u,v\} \in E\}$. We denote with $\deg_G(v)$ the degree of a node $v$, that is, the number of neighbors of $v$ in $G$. If $G$ is clear from the context, we may omit $G$ and simply write $N(v)$ and $\deg(v)$. If $G$ is a directed graph, we denote with $\deg(v)$ the degree of $v$ in the undirected version of $G$, and with $\deg_{\text{in}}(v)$ and $\deg_{\text{out}}(v)$ its indegree and its outdegree, respectively.
We define $\dist_G(u,v)$ as the hop-distance between $u$ and $v$ in $G$. Again, we may omit $G$ if it is clear from the context. The \emph{radius-$r$ neighborhood} of a node $v$ is the subgraph $G_r(v)=(V_{r}(v), E_{r}(v))$, where $V_{r}(v)=\{u\in V~:~ \dist(u,v)\le r \}$, and $E_{r}(v)=\{(u,w)\in E~:~ \dist(v,u)\le r \mbox{ and } \dist(v,w)\le r \}$. Also, we denote with $G^k$ the $k$-th power of $G$, that is, a graph containing the same nodes of $G$, where we connect two nodes $u$ and $v$ ($u \neq v)$ if and only if they satisfy $\dist_G(u,v) \le k$. The \emph{eccentricity} of a node $v$ in a graph $G$ is the maximum of $\{\dist(u,v) ~|~ u \in V\}$.

\section{The \maxid Problem: Overview and Techniques} \label{sec:highlevel}

In this section, we present the core techniques of our specialized algorithm for solving the \maxid problem, that is the core ingredient for solving connected components on forests (upper bound of \Cref{thm:CCMainTheorem}). In the \maxid problem, one is given a  connected tree with a unique identifier for each node, and all nodes must output the maximum identifier in the tree. We note that it is trivial to solve the problem in $O(1)$ \mpc rounds using a broadcast tree; however, this approach does not extend to forests, and hence a more sophisticated solution is required. The purpose of this section is to present the high level ideas of an algorithm that solves \maxid and can also be extended to forests. Some lemma statements have been adapted to fit this (informal) version.

\medskip

\noindent\textbf{\Cref{lem:maxid}} (Solving \maxid on trees)\textbf{.} 
\emph{Consider the family of trees. There is a deterministic low-space \mpc algorithm that solves \maxid on any graph $G$ of that graph family when given $\ad \in [\diam(G),n^{\delta/8}]$. The algorithms runs in $O(\log \ad)$ rounds, is component-stable\footnote{By the formulation of \Cref{def:maxIDproblem}, any algorithm solving \maxid is component-stable by definition. This is discussed in detail in \Cref{ssec:ccForestsAndStability}}, and requires $O(m \cdot \ad^3)$ words of global memory.}

\medskip

We begin with definitions that are essential not only to define our algorithm but also for proving its memory bounds. 
Let $v$ be a vertex of a tree $G$. 
For all nodes $u\in N(v)$, define
\[G_{v \arr u}=\{w\in V(G)\mid \text{$u$  is contained in the shortest path from $v$ to $w$}\}\]
to be all nodes in the tree that are reachable from $v$ via $u$, including $u$. Also, let $G_{v \narr u} \coloneqq V(G) \setminus G_{v \arr u}$.  For every $w \in G$, let $r_v(w)$ be the node $u \in N(v)$ satisfying that $w \in G_{v \arr u}$, i.e., $r_v(w)$ is the neighbor of $v$ which is on the unique path from $v$ to $w$.

\begin{definition*}[Light and heavy nodes]
	Let $0<\delta<1$ be a constant. A node $v$ is \emph{light against} a neighbor  $u \in N(v)$ if $|G_{v \narr u}| \leq n^{\delta/8}$.
	A node is \emph{light} if it is light against at least one of its neighbors. 
	Nodes that are not light are \emph{heavy}.  
\end{definition*}

\begin{restatable}{figure}{exampleFig}
	\centering
	\includegraphics[width=0.7\textwidth]{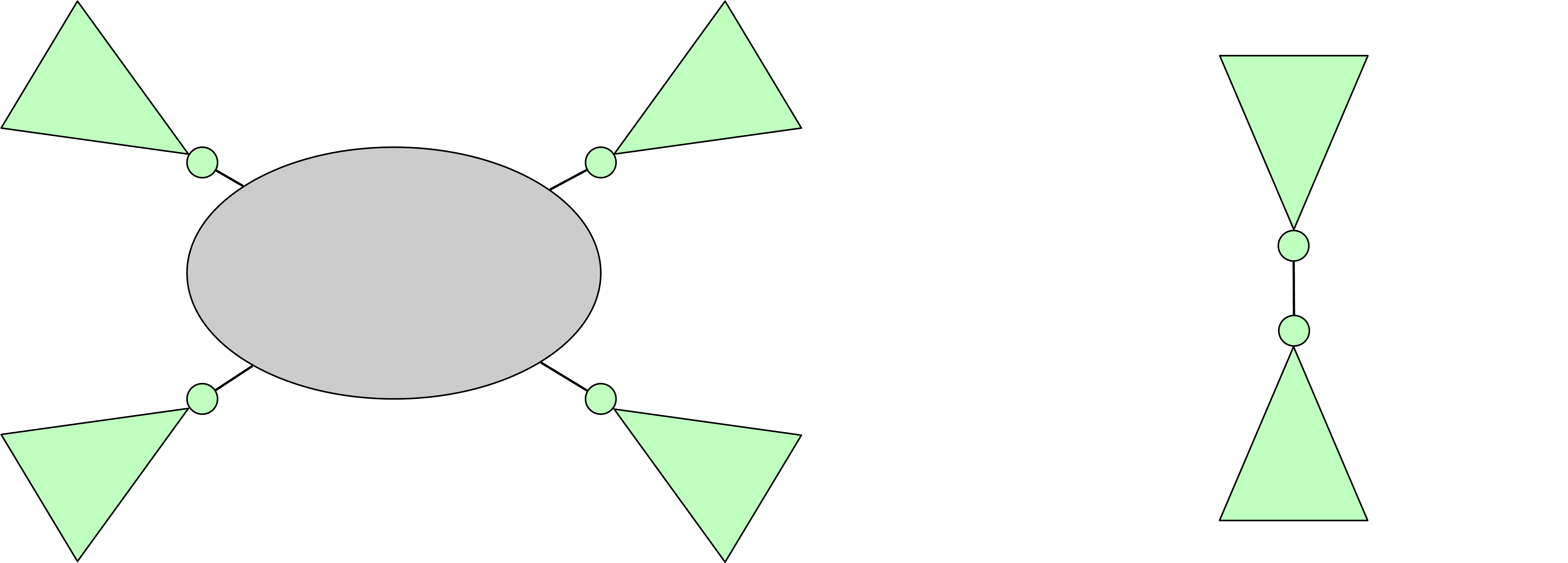}
	\caption{Light nodes are green and heavy nodes are gray.}
	\label{fig:exampleGraphs}
\end{restatable}

If there are no heavy nodes,  the graph is small and fits into the local memory of one machine.

\begin{lemma*}[see \Cref{cor:sizeOfGraph}]
	Any tree with no heavy nodes contains at most $2n^{\delta/2}$ vertices. 
\end{lemma*}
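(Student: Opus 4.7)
The plan is to distill, from the assumption that every node is light, a single edge $\{u,v\}$ whose removal cuts the tree into two pieces of size at most $n^{\delta/8}$ each; summing those two pieces then caps $|V(G)|$.

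First, I use the ``light'' property to equip each vertex with a witness neighbor. Since no node is heavy, every $v\in V(G)$ has at least one neighbor $u$ with $|G_{v\narr u}|\le n^{\delta/8}$; pick any one such and call it $p(v)$. This produces $|V(G)|$ ordered pointers $(v,p(v))$, each of which lives on the undirected edge $\{v,p(v)\}$ of the tree.

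Next, I apply a pigeonhole argument. The tree has only $|V(G)|-1$ edges, so two of the $|V(G)|$ pointers must share the same underlying undirected edge. Because each undirected edge admits only two orientations, this forces two adjacent vertices $u$ and $v$ to point at each other: $p(u)=v$ and $p(v)=u$. Consequently, both $|G_{u\narr v}|\le n^{\delta/8}$ and $|G_{v\narr u}|\le n^{\delta/8}$ hold. The sets $G_{u\narr v}$ and $G_{v\narr u}$ are exactly the two components obtained after deleting the edge $\{u,v\}$, and so they partition $V(G)$; summing the two bounds therefore yields $|V(G)| = |G_{u\narr v}| + |G_{v\narr u}| \le 2n^{\delta/8} \le 2n^{\delta/2}$.

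The only nontrivial step is the pigeonhole observation itself---matching $|V(G)|$ pointers against $|V(G)|-1$ edges to force a mutually-pointing pair. The corner case $|V(G)|=1$ is ruled out automatically, since an isolated vertex has no neighbor against which it could be light, and is therefore heavy by the hypothesis' contrapositive.
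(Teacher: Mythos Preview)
Your proof is correct and follows essentially the same pigeonhole argument as the paper: assign to each vertex a pointer (the paper calls it a token) to a neighbor witnessing lightness, use that a tree has one fewer edge than vertices to find an edge receiving pointers from both endpoints, and then bound $|V(G)|$ by $|G_{u\narr v}|+|G_{v\narr u}|\le 2n^{\delta/8}$. The only cosmetic difference is that the paper allows a vertex to place tokens on \emph{all} edges along which it is light (so ``at least $|V|$ tokens on $|V|-1$ edges''), whereas you pick exactly one witness per vertex; the pigeonhole conclusion is identical.
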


We prove that as soon as there is at least one heavy node, the graph has to look like the one depicted on the left hand side of \Cref{fig:exampleGraphs}, that is, light subtrees that are attached to a connected component of heavy nodes. We exploit this structure in our algorithm. 

\subsection{\maxid: The Algorithm}

The high-level idea is to iteratively compress parts of the graph (without disconnecting it) such that the knowledge of  the maximum identifier of the compressed parts is always kept within the resulting graph. We repeat this process until there remains only one node, that knows \ID, the maximum identifier in the graph. Then, we backtrack the process by iteratively decompressing and broadcasting the knowledge about \ID. Eventually, we are left with the original graph where all nodes know \ID. 

More in detail, our algorithm consists of $\ell=O(1)$ \emph{phases} and the same number of \emph{reversal phases}. During the phases, we first compress all light subtrees into single nodes (a procedure that we refer to as $\clst$) and then replace all paths by a single edge ($\cp$). We denote the resulting graphs by $G_0,G_1$,\ldots,$G_{\ell}$. 
The phases are followed by reversal phases, in which we undo all compression steps of the regular phases in reverse order to spread \ID to the whole graph.

\paragraph{Bounding the number of Phases.}
Consider some phase $i$ and graph $G_i$ with heavy nodes that looks as illustrated in \Cref{fig:exampleGraphs} (left). If we remove all light subtrees from graph $G_i$, for the resulting graph $G_{i+1}$, it holds that every leaf (aka a formerly heavy node) corresponds to a distinct removed subtree of size at least $n^{\delta/8}$ (if the subtree was smaller the leaf would not be heavy). If we then contract all paths in $G_{i+1}$ into single edges, leaving no degree-$2$ nodes in $G_{i+1}$, it holds that at least half of the nodes in $G_{i+1}$ corresponds to a removed subtree. As each of these subtrees has $\geq n^{\delta/8}$ distinct nodes,   we have removed a polynomial-in-$n$ fraction of nodes from $G_i$ to obtain $G_{i+1}$. Hence, we can only repeat the process a constant time until the graph becomes small.

\subsection{\maxid: Compressing Light Subtrees}

For the sake of this high level overview, we focus on our most involved part, the procedure that compresses (maximal) light subtrees into the adjacent heavy node (\clst). The difficulty is that nodes do not know whether they are light or heavy, and already one single exponentiation step in the ``wrong'' direction of the graph can ruin local and global memory bounds. However, there seems to be no way to obtain a runtime that is logarithmic in the diameter without exponentiation. Thus, we perform careful exponentiations that always ensure the memory bounds but at the same time make enough progress. 

Consider a graph $G$ with $n$ nodes---starting from the second phase we will actually use this algorithm on graphs with fewer than $n$ nodes. At all times, every node $v$ has some set of nodes $S_v$ in its memory, which we initialize to $N(v)$. Set $S_v$ can be thought of as the node's view or knowledge. During the execution, $S_v$ grows, and if $|S_v| \geq 2n^{\delta/4}$, $v$ becomes \full. Similarly to definitions $G_{v \arr u}$ and $G_{v \narr u}$, let us define the following. For a node $v$ and a node $u \in N(v)$, let $S_{v \arr u}=S_v\cap G_{v\arr u}$. Also, let $S_{v \narr u} \coloneqq S_v \setminus S_{v \arr u}$. 

All nodes in the graph have the property that they are either light or heavy. Initially, nodes themselves do not  know whether they are light or heavy, since these properties depend on the topology of the graph. During the algorithm each node is in one of the four states: \act, \happy, \full, or \sad. Initially, all nodes are \act. A node $v$ becomes \happy, if at some point during the execution, there exists $u \in N(v)$ such that $G_{v \narr u}\subseteq S_{v}$ and $|G_{v \narr u}| \leq n^{\delta/8}$.  If a node, that is not \full, realizes that it can never become \happy (for example by having $|S_{v \arr u}|>n^{\delta/8}$ for two different neighbors $u$), it becomes \sad. Upon becoming \happy, \sad or \full, nodes do not partake in the algorithm except for answering queries from \act nodes. We call nodes \unhappy if they are in some other state than \happy (including state \act). The goal is that all light nodes become \happy. We will prove that the algorithm that we will provide satisfies the following lemma.  
\begin{restatable*}{lemma}{lemCorrectnessLemmaLight} \label{lem:correctnessLemmaLight}
	After $O(\log \ad)$ iterations, all light nodes become \happy, while heavy nodes always remain \unhappy.
\end{restatable*}
Intuition for its correctness requires further details and is defered to the end of this section.

When comparing the definitions of \happy and light, it is evident that when a node becomes \happy, it knows that it is light. Similarly, a node becoming \full or \sad knows that it is heavy. 
At the end of the algorithm, \happy nodes with a \full or \sad neighbor compress their whole subtree in that neighbor. A crucial challenge here is to ensure that these compressions are not conflicting as all such nodes execute these in parallel and without a global view. 

\paragraph{Exponentiation.} Recall the definition of $r_v(w)$ at the beginning of the section. For a node $v$ and any $X\subseteq N(v)$, define an exponentiation operation as
\begin{align*}
	\expo(X): ~~ S_v \larr \bigcup_{u \in X} \bigcup_{w \in S_{v \arr u}} S_{w \narr r_w(v)}. 
\end{align*}

We say that node $v$ \emph{exponentiates  in the direction of $u\in N(v)$} if $v$ performs $\expo(X)$ with $u\in X$. 

The algorithm consists of $O(\log \ad)$ iterations, in each of which nodes perform a carefully designed graph exponentiation procedure. The aim is for light nodes $v$ to become \happy by learning their subtrees, after which, (certain) light nodes compress into their \unhappy neighbor.

\paragraph{Failed Exponentiation Approaches.}
If there were no memory constraints and every node could do a proper (uniform) exponentiation step in every iteration of the algorithm, i.e., execute $\expo(N(v))$, after $O(\log \ad)$ iterations \emph{all} nodes would learn the whole graph---a proper exponentiation step executed on all nodes halves the diameter---and the highest ID node could compress the whole graph into itself. However, uniform exponentiation would result in all nodes exceeding their local memory $O(n^\delta)$, and also significantly breaking the global memory requirement.
 Even if we were to steer the exponentiation procedure such that light nodes would learn a $D_v$ radius ball around them, where $D_v$ is the diameter of their light subtree, this would still break global memory. In fact, we cannot even do a single exponentiation step for all nodes in the graph without breaking memory bounds!

\paragraph{Our Solution (Careful Exponentiation \& Probing).} Hence, we let every node exponentiate in all but one direction, sparing the direction which currently looks most likely to be towards the heavy parts of the graph. Note that the knowledge of a node about the tree changes over time and in different iterations it may spare different directions. This step is further complicated as nodes neither know whether they are heavy or light nor do they know the size of their subtree, nor in which direction the heavy parts of the graph lie. Thus, in our algorithm, nodes perform a careful probing for the number of nodes into all directions to determine in which directions they can safely exponentiate without using too much memory. More formally, a node $v$ computes 
$B_{v \arr u} = \sum_{w \in S_{v \arr u}} |S_{w \narr r_w(v)}|$  for every neighbor $u \in N(v)$ as an estimate for the number of nodes it may learn when exponentiating towards $u$. This estimate may be inaccurate and may contain a lot of doublecounting. The precise guarantees that this probing provides are technical and presented in \Cref{sec:CC}. 

We now reason (in a nutshell) why this algorithm meets the memory requirements and why we still make enough progress in order to make all light nodes \happy in $O(\log \ad)$ iterations.

\paragraph{Local and Global Memory Bounds.}

If there were no memory limitation, we would already know that  after $\ell=O(1)$ phases $G_{\ell}$ would consist of a single node $r$. For the sake of analysis, we assume a rooting of $G$ at $r$. We emphasize that fixing a rooting is only for analysis sake, and we do not assume that the tree is actually rooted beforehand.

Given the rooting at $r$, we define $T(v,r)$ as the subtree rooted at $v$ (including $v$ itself). Then, the following lemma is crucial to bound the memory. 
The lemma is standalone as it does not use any properties of $r$.

\begin{restatable*}{lemma}{lemglobalMemory} \label{lem:globalmemory}
	Consider an $n$-node tree $T$ with diameter $D$ that is \emph{rooted} at node $r$. Let $T(v,r)$ denote the subtree rooted at $v$ (including $v$) when $T$ is rooted at $r$. It holds that $\sum_{v \in V} |T(v,r)| \leq (D+1) \cdot n$.
\end{restatable*}

\begin{proof}
	Consider the unique path $P_{rv}$ from the root $r$ to a node $v$. Observe that node $v$ is only in the subtrees of the nodes in $P_{rv}$. Since $|P_{rv}| \leq D+1$, node $v$ is overcounted at most $D$ times, and $\sum_{v \in V} |T(v,r)| \leq (D+1) \cdot n$.
\end{proof}

The probing ensures that a node, if it exponentiates into a direction, essentially never learns more nodes than there are contained in its ``rooted subtree'' $T(v,r)$.  
\begin{lemma*}[see \Cref{lem:memoryDirectionExponentiation}] 
	Let $v$ be any node with a parent $u$ (according to the hypothetical rooting at $r$). If in some iteration, node $v$ exponentiates in the direction of $u$, i.e., it performs $\expo(X)$ with $u \in X$, the size of the resulting set $S_{v \arr u}$ is bounded by $|T(v,r)| \cdot \ad$.
\end{lemma*}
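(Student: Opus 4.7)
The plan is to bound $|S_{v \arr u}|$ after the exponentiation in two steps. First, I would show that right after the update, $|S_{v \arr u}|$ is at most the probing quantity $B_{v \arr u}$ computed at the start of the iteration. Second, I would use the algorithmic rule that $v$ spares the direction with the largest probing value in order to upper bound $B_{v \arr u}$ by the probing value in some \emph{child} direction of $v$ (in the hypothetical rooting at $r$), whose size can be controlled via \Cref{lem:globalmemory}.

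For the first step, I would unfold $\expo(X)$. By definition, the new $S_v$ equals $\bigcup_{u' \in X}\bigcup_{w \in S_{v \arr u'}} S_{w \narr r_w(v)}$, and only the inner term corresponding to $u' = u$ can contribute to the new $S_{v \arr u}$: for $u' = u$, every $w \in S_{v \arr u}$ lies in $G_{v \arr u}$ and $S_{w \narr r_w(v)} \subseteq G_{w \narr r_w(v)} \subseteq G_{v \arr u}$, while the terms with $u' \neq u$ are contained in the opposite sides of $v$ and hence disjoint from $G_{v \arr u}$. Therefore
\[
|S_{v \arr u}^{\text{new}}| \;\le\; \sum_{w \in S_{v \arr u}^{\text{old}}} |S_{w \narr r_w(v)}^{\text{old}}| \;=\; B_{v \arr u},
\]
with every quantity on the right evaluated at the start of the iteration.

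For the second step, I would exploit that the algorithm spares the direction $u^\star \in N(v)$ maximizing $B_{v \arr \cdot}$. Since $v$ does exponentiate towards $u$, we have $B_{v \arr u} \leq B_{v \arr u^\star}$ for some $u^\star \neq u$, and because $u$ is the parent of $v$ in the rooting at $r$, the direction $u^\star$ must correspond to a child of $v$ in that rooting. Hence $G_{v \arr u^\star} = T(u^\star, r) \subseteq T(v,r)$. Moreover, for each $w$ in this subtree, $r_w(v)$ is the parent of $w$ in the rooting at $v$, which coincides with the parent of $w$ in the rooting at $r$, so $G_{w \narr r_w(v)} = T(w, r)$ and in particular $S_{w \narr r_w(v)} \subseteq T(w, r)$. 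Combining,
\[
B_{v \arr u^\star} \;\le\; \sum_{w \in T(u^\star,r)} |T(w,r)|,
\]
and applying \Cref{lem:globalmemory} to the rooted subtree $T(u^\star,r)$, whose diameter is at most $D \le \ad$, upper bounds the right-hand side by $(\ad + 1)\cdot |T(u^\star, r)| \le (\ad + 1)\cdot |T(v, r)|$. The extra unit should be absorbable into the $\ad$ factor in the formal statement, yielding $|S_{v \arr u}^{\text{new}}| \le \ad \cdot |T(v,r)|$.

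The main obstacle, in my view, is the bookkeeping around the hypothetical rooting: the algorithm maintains the $S$-sets without any knowledge of a root, yet the analysis must argue that these sets respect the subtree structure induced by $r$, in particular that $S_{w \narr r_w(v)}$ falls inside $T(w, r)$ as soon as $w$ sits in the subtree of a child of $v$. Once this translation between the root-agnostic algorithmic view and the rooted analysis view is in place, the two inequalities above compose cleanly into the target bound.
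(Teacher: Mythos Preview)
Your first step, bounding $|S_{v \arr u}^{\mathrm{new}}| \le B_{v \arr u}$, is correct. The gap is in the second step: the algorithm does \emph{not} always spare the direction maximizing $B_{v \arr \cdot}$. When $\fd = \emptyset$, the spared direction $\ld$ is defined (see \Cref{lem:mainProbing}(ii)) to be the one maximizing the \emph{post-exponentiation} size $|S_{v \arr \cdot}|$, not the probing estimate $B_{v \arr \cdot}$. Since $B$ can overcount by a factor up to $\ad$ (\Cref{lem:DApprox}), these two argmaxes can differ; concretely, in the branch of $\pro$ that performs a trial full exponentiation and sets $\ld = \argmax_{u'} |S_{v \arr u'}|$, your inequality $B_{v \arr u} \le B_{v \arr u^\star}$ need not hold. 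At best one gets $B_{v \arr u} \le \ad \cdot B_{v \arr \ld}$, which, pushed through your \Cref{lem:globalmemory} step, yields only $|T(v,r)| \cdot \ad^2$ rather than $|T(v,r)| \cdot \ad$.

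The paper's proof avoids this detour entirely and uses the Probing Lemma's guarantees on post-exponentiation sizes directly. If $|\fd|=1$ and $u \notin \fd$, part~(i) gives $|S_{v \arr u}| \le n^{\delta/8} \cdot \ad$; since the spared child direction $\fd$ satisfies $|G_{v \arr \fd}| > n^{\delta/8}$, one has $|T(v,r)| > n^{\delta/8}$ and hence $|S_{v \arr u}| < |T(v,r)| \cdot \ad$. If $\fd = \emptyset$, part~(ii) gives $|S_{v \arr u}| \le |S_{v \arr \ld}| \le |G_{v \arr \ld}| \le |T(v,r)|$ outright, because $\ld$ is a child direction. No appeal to \Cref{lem:globalmemory} is needed at this point; that lemma enters only later, in the global-memory bound. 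Your argument is easily repaired by switching the comparison from $B$-values to post-exponentiation sizes, at which point it collapses to the paper's two-case analysis.
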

This is sufficient to sketch the global memory bound. 
\begin{lemma*}[see \Cref{lem:rootingGlobalMemory}]
	In \clst, the global memory never exceeds $O(n \cdot \ad^3)$.
\end{lemma*}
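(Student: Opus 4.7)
The plan is to decompose the memory of each node by direction in a hypothetical rooting, plug in the two ingredients stated immediately above, and aggregate via \Cref{lem:globalmemory}. Fix for analysis the rooting of the tree at the node $r$ that would remain after the $\ell = O(1)$ phases, and for every non-root $v$ let $p(v)$ denote its parent. By the defining partition of $G-v$ into components, the sets $\{S_{v\arr u}\}_{u\in N(v)}$ are disjoint and cover $S_v$, so $|S_v| = \sum_{u\in N(v)}|S_{v\arr u}|$. I would split each term into one upward contribution and several downward contributions.

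For the upward direction $u = p(v)$, \Cref{lem:memoryDirectionExponentiation} gives $|S_{v\arr p(v)}|\leq |T(v,r)|\cdot\ad$ at every point of the execution: if $v$ ever exponentiates toward $p(v)$ the lemma applies directly to the largest value $S_{v\arr p(v)}$ ever attains, and otherwise $|S_{v\arr p(v)}|$ stays at $1$ because that set can only grow via such an exponentiation. For any downward direction, i.e., $u = c$ a child of $v$, the inclusion $S_{v\arr c}\subseteq G_{v\arr c} = T(c,r)$ is immediate from the definition of $S_{v\arr c}$, so $|S_{v\arr c}|\leq |T(c,r)|$.

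Summing over all $v$, the upward contributions total at most $\ad\cdot\sum_v |T(v,r)|\leq \ad\cdot(D+1)\cdot n$ by \Cref{lem:globalmemory}, and the downward contributions total
\[
\sum_v\sum_{c\text{ child of }v}|T(c,r)| \;=\; \sum_{c\neq r}|T(c,r)| \;\leq\; (D+1)\cdot n,
\]
since every non-root node is the child of exactly one parent, and \Cref{lem:globalmemory} is applied once more. Adding the two estimates yields $\sum_v |S_v| = O(n\cdot D\cdot \ad)$, which is $O(n\cdot \ad^2)$ because the algorithm is only invoked in the regime $D\leq \ad$, comfortably within the claimed $O(n\cdot \ad^3)$.

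The main obstacle is conceptual rather than computational: the algorithm never actually roots the tree and has no access to $r$, so the decomposition is purely an analytic device. This is fine because the directional partition of $S_v$ is a topological property of $G$ independent of any rooting, and both \Cref{lem:memoryDirectionExponentiation} and \Cref{lem:globalmemory} are already phrased relative to an arbitrary hypothetical root. The only minor subtlety I would explicitly verify is that the bound on $|S_{v\arr p(v)}|$ holds at every step of the execution, not just immediately after an exponentiation; this follows from the monotonicity argument above, namely that $S_{v\arr p(v)}$ grows only via exponentiations toward $p(v)$, so its post-exponentiation bound automatically controls all later times.
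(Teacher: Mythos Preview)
Your decomposition idea is exactly the paper's, but your monotonicity claim for the upward direction is false. You assert that $S_{v\arr p(v)}$ can grow only when $v$ itself exponentiates toward $p(v)$, and hence that \Cref{lem:memoryDirectionExponentiation} bounds its value at all times. This overlooks Step~1(d)ii of $\clst$: whenever some other node $w$ has added $v$ to $S_w$, node $v$ reciprocates by adding $w$ to $S_v$. If $w$ lies in $G_{v\arr p(v)}$, this enlarges $S_{v\arr p(v)}$ without $v$ ever having exponentiated toward its parent, so neither your ``stays at $1$'' branch nor your ``post-exponentiation bound controls all later times'' branch is valid.

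The paper repairs this by separating out $C_v\subseteq S_v$, the nodes $v$ acquired through its own $\expo$ calls, bounding $\sum_v|C_v|$ exactly along your lines, and then observing that every element of $S_v\setminus C_v$ was placed there because it appears in some $C_w$; hence $\sum_v|S_v|\le 2\sum_v|C_v|$. This factor-$2$ charging is the missing ingredient. Two smaller omissions: you do not treat the case where the tree has no heavy node (where \Cref{lem:memoryDirectionExponentiation} is not stated; the paper dispatches it via \Cref{cor:sizeOfGraph}), and you do not account for the transient duplicate blow-up during a single $\expo$ step (\Cref{obs:EXPduplicates}), which is what pushes the paper's bound from $O(n\cdot\ad^2)$ to the stated $O(n\cdot\ad^3)$.
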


\begin{proof}[Proof sketch]
	Assume that there is at least one heavy node and consider an arbitrary iteration $j$ of the algorithm. For node $v$ define the set $C_v \subseteq S_v$ as the set of nodes that $v$ has added to $S_v$ as a result of performing $\expo$ in all iterations up to iteration $j$. Let $u$ be the parent of $v$ (according to the hypothetical rooting at $r$). For that $u$, let $C_{v \arr u} \coloneqq C_v\cap G_{v\arr u}$. We obtain
	\begin{align*}
		|C_v| &\leq |C_{v \arr u}| + \sum_{w \in N(v) \setminus u} |S_{v \arr w}| \leq |T(v,r)|\cdot \ad + |T(v,r)| = (1 + \ad) |T(v,r)|~.
	\end{align*}
	
	The bound on $|C_{v \arr u}|$ is obtained by applying the previous lemma for the last iteration where $v$ has exponentiated in the direction of $u$, and the bound on the sum is by the definition of $T(v,r)$.
	
	We need to introduce the notation $C_v$, as in our actual algorithm, exponentiations are not symmetric. In order to ensure a \emph{symmetric enough} view, nodes $v$ that add some vertex $u$ to their set $S_v$ also add themselves to the set $S_u$. Thus $C_v\neq S_v$. However, this results in at most a factor $2$ increase in global memory. The total memory is then bounded by

	\begin{align*}
		 \sum_{v \in V} |S_v| = \sum_{v \in V}  2|C_v| \leq \sum_{v \in V}  2(1+\ad) |T(v,r)| = O(n \cdot \ad^2).
	\end{align*}
	Here, the bound on $\sum_{v \in V} |T(v,r)|$ is due to \Cref{lem:globalmemory}.
	The additional $\ad$ factor in the lemma statement is due to the fact that a node may learn about the same node $\ad$ times in a single exponentiation step resulting in a local peak in global memory; details are given in the full proof. 
\end{proof}

The bounds on local memory use that the probing ensures that we do not exponentiate into a direction if it would provide us with too many new nodes.

\paragraph{Measure of Progress.}
In order to show that all light nodes become happy, we prove that the distance between a light node and a leaf in its subtree decreases by a constant fraction in a constant number of rounds. Distance, in this case, can be measured via a virtual graph where there is an edge between two nodes $u$ and $v$ if $v\in S_u$ or $u\in S_v$. Our algorithm design ensures that light nodes always exponentiate in all but one direction. This is sufficient to show that each segment $x_1,\ldots x_5$ of length $5$ of a shortest path in the virtual graph $H$, shortens by at least one edge in each iteration. Intuitively, one can simply use that $x_3$ in such a segment either exponentiates in the direction of $x_1$ or $x_5$ and will hence add the respective node to its memory. The actual proof needs a more careful reasoning, e.g., as we cannot rely on $x_1$ being part of the memory of $x_2$, due to non homogeneous exponentiations in previous iterations.

\section{The \maxid Problem}

\label{sec:connectedComponents}

In this section, we give a specialized algorithm for the \maxid problem on trees, which will be the core ingredient for solving connected components on forests (upper bound of \Cref{thm:CCMainTheorem}). Once having the algorithm for solving \maxid, one can extend it to work as a connected components algorithm. We defer this extension and its proofs to \Cref{sec:CC}. We define the problem as follows.

\begin{definition}[The \maxid problem] \label{def:maxIDproblem}
    Given a connected graph with a unique identifier for each node, all nodes output its maximum identifier.
\end{definition}

\begin{lemma}[Solving \maxid on trees] \label{lem:maxid}
    Consider the family of trees. There is a deterministic low-space \mpc algorithm that solves \maxid on any graph $G$ of that graph family when given $\ad \in [\diam(G),n^{\delta/8}]$. The algorithms runs in $O(\log \ad)$ rounds, is component-stable\footnote{By the formulation of \Cref{def:maxIDproblem}, any algorithm solving \maxid is component-stable by definition. This is discussed in detail in \Cref{ssec:ccForestsAndStability}}, and requires $O(m \cdot \ad^3)$ words of global memory.
\end{lemma}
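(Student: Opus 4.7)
The plan is to instantiate the compression/decompression scheme sketched in \Cref{sec:highlevel}: we run $\ell = O(1)$ \emph{phases}, each consisting of a call to \clst followed by \cp, producing a sequence of graphs $G=G_0, G_1, \ldots, G_\ell$; then we run the same number of \emph{reversal phases} in reverse order to propagate the maximum identifier \ID back to every node. Correctness follows from the invariant that the max identifier is preserved under every compression, and once $G_\ell$ has a single node that node trivially outputs \ID. It therefore suffices to establish three things: (i) a constant number of phases drives the graph to a single node, (ii) each call to \clst (and \cp, which is easier) runs in $O(\log \ad)$ rounds, and (iii) the global memory never exceeds $O(m\cdot \ad^3)$.

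For (i), I would argue the potential decrease as in the overview. If there are no heavy nodes in $G_i$ then by the cited consequence (\Cref{cor:sizeOfGraph}) the graph already has at most $2n^{\delta/2}$ vertices and can be solved directly on one machine. Otherwise, \clst strips off every maximal light subtree, collapsing it into its adjacent heavy node; \cp then contracts maximal degree-$2$ paths. After these two steps, every leaf of $G_{i+1}$ corresponds to a distinct removed light subtree of size $\geq n^{\delta/8}$, and since $G_{i+1}$ has no degree-$2$ vertices, at least half of its nodes are leaves. This charges a polynomial-in-$n$ fraction of vertices to each phase and gives $\ell = O(1)$ phases until we reach a single node.

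For (ii), the core is \clst. I would run the careful-exponentiation procedure for $c\log \ad$ iterations for a sufficiently large constant $c$. Each iteration: every \act node $v$ first executes the probing, computing $B_{v \arr u}$ for each $u\in N(v)$, and then performs $\expo(X)$ where $X\subseteq N(v)$ is the set of all neighbors except the one currently estimated to lie toward the heaviest direction; after updating $S_v$, $v$ transitions to \happy/\full/\sad whenever the corresponding condition is met. Progress follows from \Cref{lem:correctnessLemmaLight}, proven via a shortest-path contraction argument in the virtual graph $H$ with edge set $\{\{u,v\}: v\in S_u\text{ or }u\in S_v\}$: any length-$5$ segment $x_1,\dots,x_5$ of a shortest path shrinks by one edge per iteration, since the middle vertex $x_3$ spares only one direction and so either adds $x_1$ or $x_5$ to its memory; hence after $O(\log \ad)$ iterations every light $v$ sees its entire subtree and becomes \happy, and then the compression step fires. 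The compressions are non-conflicting because a \happy node only folds into a unique \full/\sad neighbor (the one separating it from the remaining graph), which is guaranteed by the definition of \happy and the structural lemma that heavy nodes form a connected subgraph with light subtrees hanging off of it. \cp takes $O(\log \ad)$ rounds by a standard pointer-doubling (path-compression) routine on the remaining graph.

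For (iii), the memory analysis is the delicate part and I expect it to be the main obstacle. I would fix, for analysis only, a rooting at the (eventual) single remaining node $r$ of $G_\ell$ and use the subtree-size accounting given in the overview. The probing rule guarantees (\Cref{lem:memoryDirectionExponentiation}) that whenever $v$ exponentiates toward its parent $u$, $|S_{v\arr u}|\le |T(v,r)|\cdot \ad$, while for every non-parent neighbor $w$ we trivially have $|S_{v\arr w}|\le |T(v,r)|$. Summing over children and then over $v$, and invoking \Cref{lem:globalmemory} that $\sum_v |T(v,r)|\le (D+1)n = O(n\cdot \ad)$, yields a total of $O(n\cdot \ad^2)$ memory per iteration; a further factor $\ad$ accounts for transient peaks during a single exponentiation step in which the same vertex can be learned up to $\ad$ times, giving the claimed $O(m\cdot \ad^3) = O(n\cdot \ad^3)$ bound (the two are the same up to constants on trees). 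Local memory is within the $n^\delta$ budget because the probing vetoes any direction whose estimate exceeds $2n^{\delta/4}$, and $\ad \le n^{\delta/8}$. The reversal phases cost no more than the forward phases since they just replay the recorded compressions in reverse and broadcast a single value along the recovered structure. Component-stability holds automatically from the formulation of \Cref{def:maxIDproblem}, as noted in the footnote. Combining (i)--(iii) and summing over $\ell=O(1)$ phases gives total runtime $O(\log \ad)$ and total memory $O(m\cdot \ad^3)$.
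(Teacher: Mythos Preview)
Your proposal is correct and follows essentially the same approach as the paper: the phase/reversal-phase scheme built from \clst and \cp, the constant-phase bound via the leaf-charging argument, the $O(\log\ad)$ progress analysis through virtual-graph path shortening (\Cref{lem:pathShortening,lem:correctnessLemmaLight}), and the memory accounting via a hypothetical rooting at $r$ combined with \Cref{lem:globalmemory,lem:memoryDirectionExponentiation}. One small wording fix: in the compression step a \happy node folds into its \emph{unhappy} neighbor (which may still be \act, not only \full/\sad)---heavy nodes need not ever become \full or \sad---so use ``unhappy'' as the paper's formal algorithm does in Step~2 of \clst.
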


\subsection{Definition and Structural Results} \label{ssec:ccDefs}
We begin with structural properties of trees that are essential for proving our memory bounds. Also, the introduced notation plays a central role in each step of our algorithms. 
Let $v$ be a vertex of a tree $G$. 
For all nodes $u\in N(v)$, define
 \[G_{v \arr u}=\{w\in V(G)\mid \text{$u$  is contained in the shortest path from $v$ to $w$}\}\]
  to be all nodes in the tree that are reachable from $v$ via $u$, including $u$. Also, let $G_{v \narr u} \coloneqq V(G) \setminus G_{v \arr u}$.  For every $w \in G$ let $r_v(w)$ be $u \in N(v)$ such that $w \in G_{v \arr u}$, i.e., $r_v(w)$ is the neighbor of $v$ which is on the unique path from $v$ to $w$.

\begin{definition}[Light and heavy nodes] \label{def:lightheavy}
	Let $0<\delta<1$ be a constant. A node $v$ is \emph{light against} a neighbor  $u \in N(v)$ if $|G_{v \narr u}| \leq n^{\delta/8}$.
	A node is \emph{light} if it is light against at least one of its neighbors. 
	When $v$ is light against $u$, let $T_{v,u}$ denote $G_{v \narr u}$. Nodes that are not light are \emph{heavy}. 
\end{definition}

Observe that a light node $v$ can be light against multiple neighbors $u$ and hence, we need to use a subscript in the notation $T_{v,u}$. We emphasize that $v \in T_{v,u}$. Throughout most of our proofs we need to consider the cases that a (virtual) tree contains heavy nodes and the case that it only consists of light nodes separately. Both situations are depicted in \Cref{fig:exampleGraphs}. We continue with proving structural properties for both cases. Any tree has light nodes as its leaves are light.

\begin{lemma} \label{lem:allLight} Consider a tree $G$ that contains a heavy node and let $v\in G$ be a light node against neighbor $u$. Then all nodes $x \in T_{v,u}$ are light. Moreover, any $x\in T_{v,y}$, $x\neq v$ is light against $r_x(v)$.

\end{lemma}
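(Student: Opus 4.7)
The plan is to derive both claims from the single containment $G_{x \narr r_x(v)} \subseteq T_{v,u}$, valid for every $x \in T_{v,u}$ with $x \neq v$. Combined with the hypothesis $|T_{v,u}| \leq n^{\delta/8}$ (which encodes that $v$ is light against $u$), this immediately gives $|G_{x \narr r_x(v)}| \leq n^{\delta/8}$, i.e., $x$ is light against $r_x(v)$, which is exactly the second claim. The first claim is then immediate: $v$ itself is light by hypothesis, and every other $x \in T_{v,u}$ is light against its neighbor $r_x(v)$ and is therefore light.

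To prove the containment, I would first rephrase the two objects in tree-theoretic terms. Since $G$ is a tree, $T_{v,u} = G_{v\narr u}$ is exactly the connected component of $v$ in $G$ after deleting the edge $\{v,u\}$, and $G_{x \narr r_x(v)}$ is exactly the connected component of $x$ after deleting the edge $\{x, r_x(v)\}$. A preliminary observation is that these two deleted edges are distinct: otherwise either $v = x$ (excluded by hypothesis) or $u = x$, which contradicts $u \notin T_{v,u}$ versus $x \in T_{v,u}$.

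Now fix any $w \in G_{x \narr r_x(v)}$. In $G$ with the edge $\{x, r_x(v)\}$ removed, $w$ sits in the component of $x$ while $r_x(v)$, and hence $v$ and $u$, all sit in the opposite component; consequently the unique $x$-$w$ path in $G$ does not use the edge $\{v,u\}$. The unique $v$-$x$ path lies inside $T_{v,u}$ (since $x \in T_{v,u}$) and so also avoids $\{v,u\}$. Moreover these two paths meet only at $x$, because any other shared vertex would force the edge $\{x, r_x(v)\}$ to appear on the $x$-$w$ path, contradicting $w \in G_{x \narr r_x(v)}$. Their concatenation is therefore a simple $v$-$w$ path, which must be the unique $v$-$w$ path in the tree, and it avoids $\{v,u\}$. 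This is precisely what it means for $w \in T_{v,u}$.

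Notice that the hypothesis that $G$ contains a heavy node does not actually enter this argument; I suspect it is recorded only to mark the interesting regime in which the lemma is later applied (if no node is heavy the whole tree is already small, per the statement in the overview). The only delicate bookkeeping is keeping the two edge deletions $\{v,u\}$ and $\{x, r_x(v)\}$ cleanly separated, but because $G$ is a tree and paths are unique, the whole thing collapses to a short structural manipulation and I do not anticipate a real obstacle.
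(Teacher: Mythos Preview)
Your proof is correct and follows the same route as the paper's: both rest on the containment $G_{x \narr r_x(v)} \subset G_{v \narr u} = T_{v,u}$, which the paper asserts in one line while you verify it carefully. Your observation that the heavy-node hypothesis is unused is also accurate; the paper's one-sentence argument for ``the second part'' invokes it, but as you note, the containment alone already yields that $x$ is light against $r_x(v)$.
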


\begin{proof}
	The first part of the claim must holds since $G_{x \narr r_x(v)} < G_{v \narr u} \leq n^{\delta/8}$. The second part must hold, since otherwise, all nodes are light, contradicting the  assumption that there exists a heavy node.
\end{proof}

\begin{lemma} \label{obs:heavyCCandTVunambiguous} For every tree $G$ with at least one heavy node, it holds that (i) heavy nodes induce a connected component, and that (ii) every light node is light against exactly one neighbor. 
\end{lemma}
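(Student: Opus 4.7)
The plan is to prove both parts by contradiction, leveraging the elementary fact that in a tree, when a node $w$ lies on the path between $v$ and some far vertex and $w'$ is the neighbor of $w$ toward $v$, then the ``far side'' set $G_{w \narr w'}$ is contained in the subtree $G_{v \arr y}$, where $y$ is the neighbor of $v$ on that path. This nesting lets size bounds at one node propagate to its neighbors along the tree.

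For (i), I would suppose the heavy nodes do not induce a connected subgraph. Then there exist two heavy nodes $a, b$ whose unique connecting path contains a light node $w$. Let $x \in N(w)$ be a neighbor that $w$ is light against, so $|G_{w \narr x}| \le n^{\delta/8}$. The path from $a$ to $b$ enters and leaves $w$ through two \emph{distinct} neighbors $y$ (toward $a$) and $z$ (toward $b$). Since $y \ne z$, at least one of them differs from $x$; assume WLOG that $y \ne x$, so $a \in G_{w \arr y} \subseteq G_{w \narr x}$. Now let $a'$ be the neighbor of $a$ on the path toward $w$. Every node in $G_{a \narr a'}$ is reachable from $a$ without stepping back through $a'$, and therefore lies in $G_{w \arr y}$. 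Hence $|G_{a \narr a'}| \le |G_{w \arr y}| \le |G_{w \narr x}| \le n^{\delta/8}$, contradicting the fact that $a$ is heavy (which demands this size exceed $n^{\delta/8}$ for \emph{every} neighbor of $a$).

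For (ii), I would suppose some light node $v$ is light against two distinct neighbors $u_1 \ne u_2$. Since the subtrees $G_{v \arr u_1}$ and $G_{v \arr u_2}$ are disjoint, $G_{v \arr u_1} \subseteq G_{v \narr u_2}$ and hence $|G_{v \arr u_1}| \le n^{\delta/8}$, and symmetrically $|G_{v \arr u_2}| \le n^{\delta/8}$. For every other neighbor $y$ of $v$ we have $G_{v \arr y} \subseteq G_{v \narr u_1}$, so $|G_{v \arr y}| \le n^{\delta/8}$ as well. Thus every node $w \ne v$ lies in some $G_{v \arr y}$ of size at most $n^{\delta/8}$. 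Applying the nesting observation with $w'$ the neighbor of $w$ toward $v$ gives $G_{w \narr w'} \subseteq G_{v \arr y}$, so $w$ is light against $w'$. Combined with $v$ itself being light, every node of $G$ is light, contradicting the existence of a heavy node.

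The main (very mild) obstacle is keeping the direction conventions consistent and being careful with the containment $G_{w \narr w'} \subseteq G_{v \arr y}$: once one fixes the picture of removing the edge $\{w, w'\}$ and observes that the component not containing $w'$ sits entirely inside the branch of $v$ through $y$, both parts reduce to one-line size comparisons against $n^{\delta/8}$.
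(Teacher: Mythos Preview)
Your proof is correct and follows essentially the same approach as the paper. The only difference is organizational: the paper factors out the key nesting observation (that all nodes in $T_{v,u}$ of a light node $v$ are themselves light) into a preceding lemma and then dispatches both parts in one line by contradiction with that lemma, whereas you inline the containment argument $G_{w \narr w'} \subseteq G_{v \arr y}$ directly in each part.
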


\begin{proof}
	For both parts, assume the opposite. Then there is a heavy node in $T_{v,u}$ for some light node $v$, which contradicts \Cref{lem:allLight}.
\end{proof}

Due to \Cref{obs:heavyCCandTVunambiguous}, we write $T_v$ instead of $T_{v,u}$ for a light node in a tree with (a) heavy node(s) and call $T_v$ the node's \emph{subtree}. 

\begin{observation}
\label{lem:newLemmaThreeFour}
For any tree $G$ and any two adjacent nodes $u, v\in G$, we have $G=G_{v\narr u}\cup G_{u\narr v}$.
\end{observation}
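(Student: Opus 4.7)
The plan is to unfold the definitions and exploit the tree structure: since $\{u,v\}$ is an edge of the tree $G$, deleting this single edge splits $G$ into exactly two connected components, and I will identify $G_{v\narr u}$ and $G_{u\narr v}$ with these two components.

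Concretely, let $C_v$ denote the component of $G$ minus the edge $\{u,v\}$ that contains $v$, and define $C_u$ analogously. I would first show $G_{v\narr u} = C_v$. By the definition of $G_{v\arr u}$, a node $w$ lies in $G_{v\narr u}$ if and only if the unique $v$-$w$ path in $G$ does not pass through $u$, equivalently if and only if this path does not use the edge $\{u,v\}$, equivalently if and only if $w$ lies in the same component as $v$ after removing that edge. A symmetric argument gives $G_{u\narr v} = C_u$. Since $C_v$ and $C_u$ partition $V(G)$, combining the two identifications yields $G = C_v \cup C_u = G_{v\narr u} \cup G_{u\narr v}$, which is exactly the claim.

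There is no real obstacle here: the statement is essentially a definitional observation about paths in a tree. The only thing to be slightly careful about is the boundary cases, namely that $v \in G_{v\narr u}$ (because the trivial path from $v$ to itself does not contain $u$) and $u \in G_{u\narr v}$ (symmetrically), both of which are consistent with the identification above. In writing this up I would keep it to a two- or three-line proof, since going through the case analysis of where the $v$-$w$ path crosses the edge $\{u,v\}$ is the entire content.
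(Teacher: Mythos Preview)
Your proposal is correct and essentially the same as the paper's argument. The paper is even terser: it simply notes $G_{v\arr u} = G_{u\narr v}$ and concludes $G_{u\narr v}\cup G_{v\narr u} = G_{v\arr u}\cup G_{v\narr u} = G$ directly from the definition of $G_{v\narr u}$ as the complement of $G_{v\arr u}$, whereas you name the two components after edge deletion explicitly; the content is identical.
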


\begin{proof}
	Since $G_{v \arr u} = G_{u \narr v}$, we obtain $G_{u \narr v} \cup G_{v \narr u}= G_{v \arr u} \cup G_{v \narr u}=G$.
\end{proof}

\begin{lemma} \label{cor:sizeOfGraph}
	Any tree with no heavy nodes contains at most $2n^{\delta/2}$ vertices. 
\end{lemma}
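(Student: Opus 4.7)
The plan is to exhibit two adjacent vertices $u,v$ such that each is light against the other; such a pair immediately forces $n$ to be small, since combining the two lightness witnesses with the partition identity $|G_{v\narr u}|+|G_{u\narr v}|=n$ yields $n\le 2n^{\delta/8}\le 2n^{\delta/2}$. So all of the work goes into locating this symmetric pair.

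To find it, I would use the hypothesis that every node is light to pick, for each $v\in V(G)$, a witness neighbor $u_v\in N(v)$ satisfying $|G_{v\narr u_v}|\le n^{\delta/8}$. This produces a functional digraph $D$ on $V(G)$ in which every vertex has out-degree exactly one. Every finite such digraph contains a directed cycle, and I would then observe that any arc of $D$ is an edge of the tree $G$; since $G$ is acyclic, the only directed cycles in $D$ on distinct vertices have length two. Hence there exist adjacent $u,v$ with $u_v=u$ and $u_u=v$, so that both $|G_{v\narr u}|\le n^{\delta/8}$ and $|G_{u\narr v}|\le n^{\delta/8}$.

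To conclude, I would invoke \Cref{lem:newLemmaThreeFour} to get $V(G)=G_{v\narr u}\cup G_{u\narr v}$, and observe that the two sets are in fact the two connected components obtained by deleting the edge $\{u,v\}$ from the tree, so they are disjoint. Thus $|G_{v\narr u}|+|G_{u\narr v}|=n$, and summing the two witness bounds gives $n\le 2n^{\delta/8}\le 2n^{\delta/2}$.

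The only mildly delicate point will be the claim that every directed cycle of $D$ must have length two; this is a one-line consequence of the tree being acyclic, but worth stating explicitly. The rest is a short calculation.
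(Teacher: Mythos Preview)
Your proof is correct and takes essentially the same approach as the paper: both arguments locate an edge $\{u,v\}$ with each endpoint light against the other and then bound $|G|$ via $G=G_{v\narr u}\cup G_{u\narr v}$. The only difference is cosmetic---the paper uses a token/pigeonhole count (at least $n$ tokens on $n-1$ edges forces a doubly-tokened edge), whereas you phrase the same pigeonhole as ``an out-degree-$1$ digraph on a tree must have a $2$-cycle.''
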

\begin{proof}
For the sake of analysis, let each node $v$ put one token on each incident edge $\{v,u\}$ where $v$ is light against $u$, i.e.,  $|G_{v\narr u}|\leq n^{\delta/8}$.  As all nodes are light the total number of tokens is at least as large as the number of nodes. Since the graph is a tree, at least one edge receives two tokens. Let $\{u,v\}$ be such an edge and observe that $G=G_{v\narr u}\cup G_{u\narr v}$ holds due to \Cref{lem:newLemmaThreeFour}. It holds that $|G|=|G_{v\narr u}\cup G_{u\narr v}|\leq 2n^{\delta/8}$, because $v$ is light against $u$ and $u$ is light against $v$. 
\end{proof}

The following lemma will be central to bounding the global memory of our algorithm. It considers a rooted tree, which we will only use for analysis; we do not assume a rooting is given as input.

\lemglobalMemory

\begin{proof}
	Consider the unique path $P_{rv}$ from the root $r$ to a node $v$. Observe that node $v$ is only in the subtrees of the nodes in $P_{rv}$. Since $|P_{rv}| \leq D+1$, node $v$ is overcounted at most $D$ times, and $\sum_{v \in V} |T(v,r)| \leq (D+1) \cdot n$.
\end{proof}

\subsection{\maxid: The Algorithm} \label{ssec:ccTheAlgorithm}

In this section, we present a \maxid algorithm for trees, which we refer to as \cc. In our algorithm, every node of an input tree $G$ outputs the maximum identifier of the tree, which we denote by \ID. We assume we are given $\ad\in  [\diam(G), n^{\delta/8}]$. The runtime of our algorithm is $O(\log \ad)$ and it requires $O(n \cdot \ad^3)$ words of global memory.

The high-level idea is to iteratively compress parts of the graph (without disconnecting it) such that the knowledge of the  maximum identifier of the compressed parts is always kept within the resulting graph. We repeat this process until there remains only one node, that knows \ID. Then, we backtrack the process by iteratively decompressing and broadcasting the knowledge about \ID. Eventually, we are left with the original graph where all nodes know \ID. 

As reasoned in \Cref{sec:intro} it is far from clear how to implement this simple outline with neither breaking the runtime nor the global memory bounds. From a high level point of view our algorithm consists of $O(1)$ \textbf{phases} and $O(1)$ \textbf{reversal phases}. During the phases, we first compress all light subtrees into single nodes (a procedure that we refer to as $\clst$) and then replace all paths by a single edge ($\cp$). In this section, we blackbox the properties of both procedures and prove  that $O(1)$ phases are sufficient to reduce the graph to a single node (\Cref{lem:lConstant}). By far the most technically involved part of our algorithm is the procedure $\clst$, which we explain in detail in \Cref{ssec:ccSinglePhaseTrees}. The phases are followed by reversal phases, in which we undo all compression steps of the regular phases in reverse order to spread \ID to the whole graph.

Let us be more formal and define the compression/decompression steps. Throughout the algorithm, every node $v$ keeps track of a variable $\id_v$, which is initially set to be the identifier of $v$. The intuition behind variable $\id_v$ is that it represents the largest identified $v$ has ``seen'' so far. Let us define compressing and decompressing operations for node $v$ and any node set $X$. Note that decompressing $X$ from $v$ is only defined for $X,v$ such that $X$ was at some point compressed into $v$.

\begin{itemize}
	\item Compress $X$ into $v$: set $\id_v \larr \max_u \{\id_u \mid u \in X\}$ remove $X$ (and its incident edges) from the graph. For any edge $\{x,y\}$ with  $x\in X$ and $v\neq y \notin X$ we introduce a new edge $\{v,y\}$. 
	
	\item Decompress $X$ from $v$: set $\id_u \larr \id_v, ~\forall u \in X$ and add $X$ (and its incident edges) back to the graph. Remove any edge $\{v,y\}$ that was added during the compression step of $X$ into $v$.
\end{itemize}

\paragraph{Phases.} We initialize $G_0$ as the input graph. From $G_0$, we derive a sequence $G_1, G_2, \dots, G_\ell$ of smaller trees until eventually, for some $\ell=O(1)$, it holds that $G_\ell=\{v\}$ for which $\id_v=\ID$. The tree $G_{i+1}$ ($0 < i \leq \ell$) is obtained from $G_{i}$ as follows: first compressing all light subtrees via $\clst(G_i,\ad)$ and call the resulting tree $G_i'$, then $G_{i+1}$ is the result of compressing all paths of $G_i'$ into single edges via $\cp(G'_i,\ad)$.

Throughout the sequence, we maintain the properties that compressions do not overlap, every $G_i$ is connected and non-empty, and that $\id_w=\ID$ for \emph{some} node $w \in G_i$.

\paragraph{Reversal Phases.} From $G_\ell=\{v\}$, we derive a reversal sequence $G_{\ell-1}, G_{\ell-2}, \dots, G_0$ such that any $G_i$ ($\ell > i \geq 0$) has the same node and edge sets as $G_i$ during the regular phases, and $\id_w=\ID$ for \emph{every} node $w \in G_i$. The tree $G_{i-1}$ is obtained from $G_{i}$ as follows: first decompressing all paths via $\dcp(G_i)$ and call the resulting tree $G_{i-1}'$, then $G_{i-1}$ is the result of decompressing all light subtrees via $\dclst(G'_{i-1})$. Note that in reversal phase $i$ we only decompress paths and subtrees that were compressed during the regular phase $i$.

\falgo{$\cc(G,\ad)$}{
	\item[] \hspace{-6mm} Initialize $G_0 \larr G$
	\item  For $i = 0,\dots,\ell-1$ phases:
	\begin{enumerate}
		\item $G'_i=\clst(G_i,\ad)$ \vspace{1mm} \\
		\textit{// If there are heavy nodes, all light nodes are compressed into the closest heavy node. Otherwise, all nodes are light and are compressed into a single node.}
		\item $G_{i+1}=\cp(G'_i,\ad)$ \vspace{1mm} \\
		\textit{// All paths are compressed into single edges.}
	\end{enumerate}
	\item For $i = \ell-1, \dots, 0$ reversal phases:
	\begin{enumerate}
		\item $G'_i=\dcp(G_{i+1})$ \vspace{1mm} \\
		\textit{// All paths that were compressed during Step 1(b) are decompressed.}
		\item $G_i=\dclst(G'_i)$ \vspace{1mm} \\
		\textit{// All light nodes that were compressed during Step 1(a) are decompressed from $v$.}
	\end{enumerate}
}

The correctness of \cc is contained in the following lemma.

\begin{lemma} \label{lem:lPhases} \label{lem:lReversalPhases}
There exists some $\ell$ such that
	\begin{enumerate}
\item 	after $\ell$ phases, graph $G_\ell$ consists of exactly one node $v$ for which $\id_v=\ID$.
\item 	after $\ell$ reversal phases, graph $G_0$ is the input graph and all nodes know \ID. 
\end{enumerate}
\end{lemma}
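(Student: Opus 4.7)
The plan is to prove the two parts by induction on the phase index, treating the behaviour of $\clst$ and $\cp$ as black boxes whose properties are established in the dedicated subsections that follow. First, I would establish two invariants that ride along both parts: every $G_i$ is a connected tree (since compressing a connected subgraph into an existing node and contracting a path into a single edge both preserve connectivity and acyclicity), and there is always at least one surviving node $w\in G_i$ with $\id_w=\ID$, because each compression updates $\id$ by $\max$ and hence propagates the globally largest identifier onto a surviving node.

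For part 1, the key is to show that each phase shrinks $|G_i|$ by a factor of $\Theta(n^{\delta/8})$ as long as $G_i$ still contains a heavy node. Suppose $G_i$ has at least one heavy node. By \Cref{obs:heavyCCandTVunambiguous}, the heavy nodes of $G_i$ induce a connected subtree and each light node is light against a unique neighbor, which gives a canonical (conflict-free) direction in which $\clst$ absorbs each light node. Hence $G_i'$ has exactly the heavy nodes of $G_i$ as its vertex set, each carrying the absorbed light subtrees from its light directions. The subsequent $\cp$ call contracts every internal degree-$2$ node of $G_i'$ into a single edge, so every internal node of $G_{i+1}$ has degree $\geq 3$; a handshake argument then yields that the number of leaves of $G_{i+1}$ is at least $|G_{i+1}|/2+1$. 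Every such leaf $h$ has a unique heavy neighbour $h'$ in $G_i$, and since $h$ is heavy, $|G_{h\narr h'}|> n^{\delta/8}$; as $G_{h\narr h'}$ is exactly the mass absorbed into $h$ by $\clst$ (together with $h$ itself), each leaf of $G_{i+1}$ carries an absorbed subtree of size $\geq n^{\delta/8}-1\geq n^{\delta/8}/2$. These absorbed subtrees are pairwise disjoint, hence $|G_i|\geq (|G_{i+1}|/2)\cdot (n^{\delta/8}/2)$, i.e.\ $|G_{i+1}|\leq 4|G_i|/n^{\delta/8}$. Iterating, after $\ell_0=O(1/\delta)$ phases we reach $|G_{\ell_0}|<n^{\delta/8}$; by \Cref{def:lightheavy} no node of $G_{\ell_0}$ is heavy, so \Cref{cor:sizeOfGraph} gives $|G_{\ell_0}|\leq 2n^{\delta/2}$ and the next $\clst$ call (operating in its ``no heavy node'' mode) compresses the entire remaining graph into a single node $v$. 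Taking $\ell=\ell_0+1=O(1)$, the second invariant forces $\id_v=\ID$.

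For part 2, I would argue by downward induction on $i$ that after reversal phase $i$ the resulting graph coincides with $G_i$ from the forward process and every one of its nodes has $\id=\ID$. The base case $i=\ell$ is precisely the conclusion of part 1. For the step, suppose every node of $G_{i+1}$ has $\id=\ID$; applying $\dcp(G_{i+1})$ re-inserts exactly the path nodes removed by the corresponding $\cp$ call in regular phase $i$ and sets each new node's $\id$ to that of the surviving endpoint, which by hypothesis equals $\ID$. Then $\dclst$ restores each light subtree previously absorbed by $\clst$ in phase $i$ and copies $\id=\ID$ from the absorbing heavy node to every restored vertex. Hence $G_i$ is exactly reconstructed with $\id=\ID$ on all vertices, closing the induction; taking $i=0$ gives the second part. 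The main obstacle will be the structural claim in part 1 --- namely, that after $\clst$ the surviving vertex set is exactly the heavy subtree of $G_i$ with inherited adjacencies, and that each new leaf of $G_{i+1}$ carries an absorbed mass of at least $n^{\delta/8}/2$. This relies on \Cref{lem:correctnessLemmaLight} (all light nodes become $\happy$, all heavy nodes remain $\unhappy$) together with the uniqueness in \Cref{obs:heavyCCandTVunambiguous} (which ensures the parallel compressions do not conflict); both properties are invoked here as black-box consequences of the detailed analysis in \Cref{ssec:ccSinglePhaseTrees}.
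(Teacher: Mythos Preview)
Your proof is correct, but you are proving considerably more than the statement asks for. The lemma only claims that \emph{some} $\ell$ exists; the paper's own proof dispatches part~1 in two lines by observing that every tree has leaves, leaves are light, and hence $\clst$ strictly decreases $|G_i|$ in each phase (invoking \Cref{lem:CompressLightSubTrees,lem:CompressPaths} as black boxes), so the process must terminate at a singleton for some finite $\ell$. The quantitative claim that $\ell=O(1)$---which you establish via the leaf-counting and $n^{\delta/8}$-shrinkage argument---is deferred in the paper to the separate \Cref{lem:manyLightNodesPerHeavy} and \Cref{lem:lConstant}. Your argument essentially inlines those two lemmas here.

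Your part~2 matches the paper's downward induction almost verbatim. One organizational remark: the structural facts you invoke directly (that $G_i'$ is exactly the heavy subtree, that each leaf of $G_{i+1}$ was already a leaf of $G_i'$ and hence has a unique heavy neighbour in $G_i$, and that the absorbed masses are disjoint) are precisely what the paper packages into \Cref{lem:CompressLightSubTrees} and \Cref{lem:manyLightNodesPerHeavy}. Citing \Cref{lem:correctnessLemmaLight} and \Cref{obs:heavyCCandTVunambiguous} directly, as you do, is fine but bypasses the intended modular boundary. Also, the appeal to \Cref{cor:sizeOfGraph} after you have already concluded $|G_{\ell_0}|<n^{\delta/8}$ is superfluous (the latter already forces every node to be light), and your case analysis should technically read ``either some $G_j$ with $j\le\ell_0$ has no heavy node, or the shrinkage bound applies at every step up to $\ell_0$''---but this is a minor reordering, not a gap.
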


\begin{proof}
The proof is straightforward, given the thee essential lemmas (\Cref{lem:CompressLightSubTrees,lem:CompressPaths,lem:Decompress}) on the subroutines that we prove in the sections hereafter. Let us prove the two claims separately. 
\begin{enumerate} 
	\item Consider graph $G_i$ at the start of any phase $i$. We first claim that $G_i$ never becomes empty during phase $i$, for which there are two cases: either $G_i$ contains heavy nodes, or all nodes in $G_i$ are light. In the case of the former: in Step 1(a),  by \Cref{lem:CompressLightSubTrees}, if there are heavy nodes in the graph, they are never compressed. In Step 1(b), by \Cref{lem:CompressPaths}, all degree-2 nodes are compressed into single edges, leaving the graph non-empty. In the case of the latter, by \Cref{lem:CompressLightSubTrees}, we are left with a single node. Observe that since any tree always contains light nodes (leaves are always light), the number of nodes decreases in every phase, and the first part of the claim 1 holds for some $\ell$. Since $G_\ell=\{v\}$ is a result of consecutive compression steps applied to the input graph $G_0$ without disconnecting it, by the definition of compression, it holds that $\id_v=\ID$.
	
	\item \textbf{Observation.} \emph{Graph $G_i$ during reversal phases $i$ has the same node and edge sets as graph $G_i$ during phase $i$.}
	\begin{proof}
        We prove the claim by induction. The base case holds since $G_{\ell}$ from Step 1 is given directly to Step 2 as input. Assume that the claim holds for reversal phase $i+1$. By \Cref{lem:Decompress}, all nodes that were compressed in phase $i$ during Step 1(a) (resp. (b)) can decompress themselves in reversal phase $i$ during Step 2(b) (resp. (a)), proving the claim.
	\end{proof}
	
	Consider graph $G_\ell$ that consists of a single node $v$ for which $\id_v=\ID$ by \Cref{lem:lPhases}. Since graph $G_0$ after $\ell$ reversal phases (which is the input graph by the observation above) is a result of consecutive decompression steps applied to $G_\ell$,  by the definition of decompression, it holds that $\id_u=\ID$ for all $u\in G_0$. \qedhere
	\end{enumerate}
\end{proof}

\begin{restatable}[\clst]{lemma}{lemCompressLightSubTrees}
	\label{lem:CompressLightSubTrees}
	Let $G$ be a tree and $\ad \in [\diam(G),n^{\delta/8}]$.
	If $G$ contains a heavy node, then $\clst(G,\ad)$ returns a tree in which all light nodes of $G$ are compressed into the closest heavy node. If $G$ does not contain any heavy nodes, all nodes are compressed into a single node. The algorithm runs in $O(\log \ad)$ low-space \mpc rounds using $O(n \cdot \ad^3)$ words of global memory.
\end{restatable}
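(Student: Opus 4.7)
\textbf{Plan for the proof of Lemma~\ref{lem:CompressLightSubTrees}.} The algorithm I would specify precisely is the careful exponentiation procedure sketched in \Cref{sec:highlevel}. Each node $v$ maintains a set $S_v$ initialized to $N(v)\cup\{v\}$, together with a state in $\{\act,\happy,\sad,\full\}$. In each of $O(\log\ad)$ iterations every \act node first performs the \emph{probing}: using only its current $S_v$ it computes, for each $u\in N(v)$, the estimate $B_{v\arr u}=\sum_{w\in S_{v\arr u}}|S_{w\narr r_w(v)}|$ (which requires one round of communication with the nodes in $S_v$). The node then performs $\expo(X)$ where $X\subseteq N(v)$ is the set of all neighbors except the one $u^*$ maximizing $B_{v\arr u}$, provided all the other $B_{v\arr u}$'s are below a carefully chosen threshold of $\Theta(\ad\cdot n^{\delta/8})$; otherwise $v$ becomes \sad. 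After the exponentiation, $v$ checks whether there is $u\in N(v)$ with $G_{v\narr u}\subseteq S_v$ and $|G_{v\narr u}|\le n^{\delta/8}$ (detectable from $S_v$ via standard consistency checks among the $S_w$'s), in which case it becomes \happy. If $|S_v|\geq 2n^{\delta/4}$ the node becomes \full. After the $O(\log\ad)$ iterations, each \happy node with a \sad or \full neighbor compresses its subtree $T_v$ into that neighbor. When no heavy nodes exist we include a final step that elects the single node of maximum identifier among those that learned the whole tree, and compress everything into it.

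\textbf{Correctness.} I would prove the main claim in three pieces. First, show that in the virtual graph $H$ with edge set $\{\{u,v\}\mid v\in S_u\text{ or }u\in S_v\}$, for every shortest path of length $\geq 5$ between two \act nodes, every five-node subpath $x_1,\dots,x_5$ loses at least one edge per iteration: since $x_3$ skips exponentiation in at most one direction, it must exponentiate toward $x_1$ or $x_5$, and the ``symmetric-view'' bookkeeping (every $v$ adding itself to $S_u$ when it adds $u$) guarantees the new edge materializes in $H$ in both directions. This gives \Cref{lem:correctnessLemmaLight}: after $O(\log\ad)$ iterations every light node $v$ has $T_v\subseteq S_v$ and hence becomes \happy. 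Second, show that heavy nodes can never become \happy (they would need $|G_{v\narr u}|\le n^{\delta/8}$ in every direction, contradicting heaviness) and must become \sad or \full in time, so the classification of light vs.\ heavy is faithfully recovered. Third, check that the compression step is well-defined and non-conflicting: by \Cref{obs:heavyCCandTVunambiguous} every light node in a tree with a heavy node is light against a unique neighbor, so each subtree $T_v$ is identified unambiguously, and because we only compress into \sad/\full (hence heavy) neighbors, any two compressions target different heavy endpoints of the induced heavy connected component, yielding exactly the tree described in the lemma. The no-heavy-node case is handled by \Cref{cor:sizeOfGraph}, which says $|V|\le 2n^{\delta/2}$, so after the exponentiation phase some node learns the entire tree and can act as the compression target.

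\textbf{Memory.} This is where the main obstacle lies and where I would spend most of the write-up. Fix for the sake of analysis the rooting at the node $r$ surviving at the end (or any node after $\clst$). The key technical step is the direction-exponentiation lemma: if in some iteration $v$ exponentiates toward its parent $u$, then $|S_{v\arr u}|\le |T(v,r)|\cdot\ad$. I would prove this by induction on iterations: the probing bound guarantees the new nodes added this round are at most $B_{v\arr u}\le \ad\cdot n^{\delta/8}$, and inductively the pre-existing $S_{v\arr u}$ only contains nodes reachable through descendants of $v$, whose total count is bounded by $|T(v,r)|$ with a multiplicative $\ad$ slack coming from at most $\diam(G)\le\ad$ ``layers'' of propagation. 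For the opposite direction (away from the parent) the bound $|S_{v\narr u}|\le|T(v,r)|$ holds trivially because those sets are contained in the union of the descendant subtrees of $v$. Summing yields $|C_v|\le (1+\ad)|T(v,r)|$ for the cumulative memory of $v$, and symmetric bookkeeping costs only a factor of $2$. Applying \Cref{lem:globalmemory} gives $\sum_v |S_v|\le 2(1+\ad)\sum_v|T(v,r)|\le O(n\cdot\ad^2)$; an extra factor of $\ad$ arises from the peak usage during a single exponentiation step (a node can be touched up to $\ad$ times in one round before duplicates are pruned), yielding the claimed $O(n\cdot\ad^3)$ global bound. Local memory is handled directly: the probing threshold enforces $|S_v|=O(\ad\cdot n^{\delta/8})=O(n^{\delta/4})$, well within $n^\delta$.

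\textbf{Main obstacle.} The delicate point is reconciling the probing—which can overcount heavily because the $S_w$'s are not disjoint—with the progress argument. I need the probing estimate $B_{v\arr u}$ to be a valid upper bound on the memory cost of exponentiating (for the memory lemma) while simultaneously being small enough in the direction away from the heavy component that light nodes never get blocked from exponentiating toward their leaves (for the progress lemma). Resolving this requires showing that for a light node $v$ with subtree $T_v$ of size $\le n^{\delta/8}$, all the $B_{v\arr u}$ toward $u\in T_v$ stay below the threshold throughout the execution, essentially because every $|S_w|$ in that direction is itself bounded by a function of $|T(w,r)|\le|T_v|$. The round complexity is then immediate: each iteration is implementable in $O(1)$ low-space \mpc rounds (standard sorting and aggregation over $S_v$), and there are $O(\log\ad)$ iterations plus $O(1)$ rounds for the final compression.
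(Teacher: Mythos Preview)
Your approach is essentially the paper's: careful exponentiation with probing, path-shortening in a virtual graph for progress, and a hypothetical rooting plus \Cref{lem:globalmemory} for the memory bound. The structure is right, but your sketch of the central memory lemma is where the argument breaks.

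\textbf{The direction-exponentiation lemma.} You state the right target: if $v$ exponentiates toward its (hypothetical) parent $u$, then $|S_{v\arr u}|\le |T(v,r)|\cdot\ad$. But your proposed proof does not establish this. You write that ``the probing bound guarantees the new nodes added this round are at most $B_{v\arr u}\le \ad\cdot n^{\delta/8}$''---that threshold gives $n^{\delta/8}\cdot\ad$, not $|T(v,r)|\cdot\ad$, and for a node with a small subtree this is far too large to sum to $O(n\cdot\ad^2)$. You also say ``inductively the pre-existing $S_{v\arr u}$ only contains nodes reachable through descendants of $v$'': this is backwards, since $S_{v\arr u}$ is the part of $S_v$ in the direction of the \emph{parent}, not the descendants. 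The correct argument is not inductive at all. It is: if $v$ exponentiates toward its parent $u$, then the single direction it skipped, call it $u^*$, must lie in $T(v,r)$. Because $u^*$ maximizes the probing estimate, $B_{v\arr u}\le B_{v\arr u^*}$; and since a hypothetical exponentiation toward $u^*$ stays inside $G_{v\arr u^*}\subseteq T(v,r)$, the $\ad$-approximation of the probing estimate (your $B_{v\arr u^*}\le \ad\cdot |G_{v\arr u^*}|$) yields $|S_{v\arr u}|\le B_{v\arr u}\le B_{v\arr u^*}\le \ad\cdot|T(v,r)|$. This is exactly how the paper gets the bound (its \Cref{lem:memoryDirectionExponentiation}), and without it the summation via \Cref{lem:globalmemory} does not close.

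\textbf{Two smaller points.} First, you say heavy nodes ``must become \sad{} or \full{} in time''---they need not; the compression step only requires the neighbor to be \unhappy{}, which includes \act{}. Second, your local-memory line ``$|S_v|=O(\ad\cdot n^{\delta/8})$'' drops the degree: a node may have up to $n^{\delta/8}+1$ neighbors, each direction contributing up to $n^{\delta/8}\cdot\ad$, so the correct bound is $O(n^{3\delta/8})$ (or, as the paper does, a looser $n^{7\delta/8}$ after accounting for the one possibly-large direction). Neither of these is fatal, but the direction-exponentiation argument is the load-bearing step and your current sketch does not carry it.
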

 
\begin{restatable}[\cp]{lemma}{lemCompressPaths}	\label{lem:CompressPaths} 
	For any tree $G$ and $\ad \in [\diam(G),n^{\delta/8}]$, $\cp(G,\ad)$ returns the graph that is obtained from $G$ by replacing all paths of $G$ with a single edge. The algorithm runs in $O(\log \ad)$ low-space \mpc rounds using $O(n \cdot \ad^2)$ words of global memory.
\end{restatable}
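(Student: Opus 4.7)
The procedure \cp needs to identify each maximal chain of degree-$2$ vertices in $G$ and replace it by a single edge between its two non-degree-$2$ endpoints (treating the two leaves as the endpoints in the degenerate case that $G$ is itself a single path). My plan is classical pointer jumping along these chains, which achieves the stated $O(\log \ad)$ round bound because any such chain has at most $\diam(G)\le\ad$ internal nodes. First, in $O(1)$ \mpc rounds, each node computes its own degree and flags itself as \emph{internal} if $\deg(v)=2$; every internal $v$ initializes two pointers $\ell_v, r_v$ to its two neighbors, plus two initially empty ordered lists $L_v, R_v$ that will record the internal nodes seen on each side (these lists are used later by \dcp).

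The core of the algorithm is $\lceil \log_2(\ad+1)\rceil$ rounds of symmetric pointer jumping. In each round, every still-active internal node $v$ inspects $\ell_v$ in parallel (via one message to the machine holding $\ell_v$): if $\ell_v$ is internal, $v$ appends $\ell_v$ together with the left-side list currently stored at $\ell_v$ to $L_v$ and advances $\ell_v$ to $\ell_v$'s outward pointer; otherwise $\ell_v$ freezes. The symmetric update applies to $r_v$. Since each step doubles the reach and the chain containing $v$ has at most $\ad+1$ nodes, after $O(\log \ad)$ iterations the pair $(\ell_v, r_v)$ equals the two endpoints of $v$'s maximal chain and $\{v\}\cup L_v\cup R_v$ equals the full set of internal nodes on that chain. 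A single representative internal node per chain (for instance, the one of largest identifier, elected locally in $O(1)$ rounds since $v$ now sees all of its chain) then emits the single edge $\{\ell_v, r_v\}$ in the compressed graph together with the ordered list of interior nodes as metadata; the endpoints learn of this new edge in $O(1)$ extra rounds, and every edge of $G$ with at least one non-internal endpoint is kept verbatim.

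The runtime is $O(\log \ad)$ since there are that many iterations, each costing a constant number of low-space rounds of message routing. For the memory bound, after $i$ iterations each internal node stores lists of total size $O(2^i)$, so peak local memory per node is $O(\ad) \le O(n^{\delta/8})$, well within $n^{\delta}$. Summed over all internal nodes, the raw list storage is $O(n\cdot\ad)$; the remaining slack to the stated $O(n\cdot\ad^2)$ bound absorbs duplicated bookkeeping such as the copies of the list data transferred during a doubling step and the in-flight messages, each of which can inflate usage by up to another factor of $\ad$ in the worst case. The main obstacle I foresee is not the pointer jumping itself, which is classical, but organizing the compressed-edge metadata so that \dcp can later rebuild every chain in the correct order without exceeding memory; I plan to handle this by keeping the canonical ordered node list on the chain's representative throughout pointer jumping and transferring exactly one copy onto the resulting edge at the very end, which keeps total global memory within the claimed bound.
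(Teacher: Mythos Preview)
Your proposal is correct and essentially the same as the paper's: both run symmetric pointer jumping (the paper phrases it as repeatedly performing $\expo(N(v))$ on each maximal degree-$2$ chain) so that after $O(\log\ad)$ rounds every internal node knows its whole chain, and then the highest-ID internal node acts as the representative that installs the single replacement edge between the two endpoints; the memory accounting ($O(\ad)$ per node, times an extra $\ad$ for in-flight duplicates) is also the same. One small slip to fix: you say that every edge of $G$ with \emph{at least one} non-internal endpoint is kept verbatim, but the two boundary edges joining each endpoint to the first and last internal node of the chain must be dropped, since they are subsumed by the new edge $\{\ell_v,r_v\}$; you presumably meant edges with \emph{both} endpoints non-internal.
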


\begin{restatable}[\dcp,\dclst]{lemma}{lemDecompress}
	\label{lem:Decompress} All nodes that were compressed by \clst and \cp can be decompressed by \dclst and \dcp, respectfully. The algorithms run in $O(1)$ low-space \mpc rounds using $O(n)$ words of global memory.
\end{restatable}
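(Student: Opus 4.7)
The plan is to make \dclst and \dcp direct inverses of their compression counterparts by having each compressed node store, at the moment of compression, just enough metadata to undo the operation. When \clst compresses a light node $u$ into a heavy node $v$, I would store at $u$ a pointer $\mathrm{par}(u) \gets v$ together with $u$'s original adjacency list; when \cp replaces an internal node $p_i$ of a path by writing it into one of the path's endpoints $v$, I would likewise store $\mathrm{par}(p_i) \gets v$ and $p_i$'s original edges. Each compressed node thus carries $O(1)$ words of metadata beyond its original adjacency, so the total storage remains $O(n)$ words throughout. Decompression then simply reads this metadata: every compressed node $u$ fetches $\id_{\mathrm{par}(u)}$, sets $\id_u \gets \id_{\mathrm{par}(u)}$, reinstates its original incident edges, and removes the edges of the form $\{\mathrm{par}(u), y\}$ that were added during the compression; for \dcp the path-replacement edge $\{u,w\}$ is also deleted. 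By the definitions of compress/decompress in \Cref{ssec:ccTheAlgorithm}, this is exactly the inverse of the compression step, restoring both the node and edge sets of $G_i$ and propagating the current $\id$-value from each target back to its former compressed children.

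The only non-trivial ingredient is a one-to-many broadcast: a single node $v$ may be the parent of many compressed nodes that must each read $\id_v$ in parallel. I would implement this with standard low-space \mpc primitives. Sort all compressed-node records by $\mathrm{par}(\cdot)$ so that records sharing the same parent become contiguous, use a segmented scan to deliver $\id_v$ to every record with $\mathrm{par} = v$, and ship the results back to the machines holding the compressed nodes via a reverse routing. All three subroutines run in $O(1)$ \mpc rounds on linear total memory, and reinstating the original edges is purely local because the adjacency lists are stored with each node. Since each input node is compressed at most once per phase, the total number of records is at most $n$, giving $O(1)$ rounds and $O(n)$ global memory overall.

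The main anticipated obstacle is a bookkeeping one: I must ensure that the stored metadata uniquely distinguishes edges added during compression (which should be removed on decompression) from genuine edges of $G_i$, and that the one-to-many broadcast does not blow up any single machine's local memory when a particular $v$ has many compressed children. Both issues are handled by the sort-and-scan scheme, which distributes the load evenly across machines regardless of how concentrated the parents are; the edge-distinction issue is resolved by tagging compression-introduced edges at creation time inside \clst and \cp, so \dclst and \dcp only delete tagged edges. With these details in place, invoking \dclst and \dcp as in \cc reverses each phase in $O(1)$ rounds and linear memory, matching the bounds claimed in the lemma.
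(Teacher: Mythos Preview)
Your approach is correct, but it takes a noticeably different route from the paper. The paper's proof is a one-liner: it observes that whichever node performed the compression of a set $X$ already has $X$ sitting in its local memory (as part of $S_v$), so by simply \emph{retaining} $X$ until the matching reversal phase, that same node can perform the decompression locally in $O(1)$ rounds. No parent pointers, no sorting, no broadcast primitives are needed, because the information is already concentrated in one place.

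Your scheme instead distributes the bookkeeping: each compressed node stores a pointer $\mathrm{par}(u)$ and its own adjacency, and decompression is implemented as a sort-and-segmented-scan broadcast of $\id_{\mathrm{par}(u)}$. This is perfectly valid low-space \mpc and respects the $O(n)$ memory bound, but it introduces machinery (sorting, tagging of compression edges, reverse routing) that the paper avoids entirely by exploiting the fact that \clst and \cp already leave the compressed set $X$ in a single node's memory. The trade-off is that your approach never relies on any single node holding an entire light subtree or path, which is arguably cleaner from a load-balancing standpoint; the paper's approach is shorter because it piggybacks on memory already allocated during compression.
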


We will now show that the number of phases of (and therefore reversal phases) is bounded by $O(1)$. In particular, we want to prove that after $\ell=O(1)$ phases, graph $G_\ell$ consists of exactly one node. After a clever observation in \Cref{lem:manyLightNodesPerHeavy}, we will prove the claim in \Cref{lem:lConstant}.
	
\begin{lemma} \label{lem:manyLightNodesPerHeavy}
	If $|G_{i+1}| \geq 2$, all nodes in $G_{i+1}$ were heavy in $G_i$. Moreover, for every leaf node $w \in G_{i+1}$ it holds that $\geq n^{\delta/8}$ light nodes were compressed into $w$ during phase $i$.
\end{lemma}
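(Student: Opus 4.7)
The plan is to do a case analysis on whether $G_i$ contains a heavy node, using \Cref{lem:CompressLightSubTrees} to describe the effect of \clst in each case. If $G_i$ contains no heavy nodes, then \clst compresses $G_i$ into a single vertex and \cp leaves it untouched, yielding $|G_{i+1}|=1$ and contradicting the hypothesis $|G_{i+1}|\geq 2$. Thus I may assume $G_i$ has at least one heavy node.

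In this case, I intend to show that $G_i'$ coincides with the subgraph of $G_i$ induced on the heavy vertices. Because heavy nodes form a connected subtree (\Cref{obs:heavyCCandTVunambiguous}), each maximal light subtree is attached to the heavy component through a single heavy ``root''; compressing such a subtree into that root therefore introduces no edges between heavy pairs. Consequently, $G_i'$ is precisely the heavy induced subgraph of $G_i$, and since \cp only removes degree-2 internal nodes on paths, every vertex of $G_{i+1}$ was heavy in $G_i$. This settles the first claim.

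For the second claim, I would fix a leaf $w\in G_{i+1}$ and first observe that \cp preserves leaf status (it only deletes degree-2 internal path nodes), so $w$ is also a leaf of $G_i'$. Hence $w$ had exactly one heavy neighbor $u^*$ in $G_i$; all other neighbors of $w$, if any, were light. Since $w$ is heavy, \Cref{def:lightheavy} gives $|G_{w\narr u^*}| > n^{\delta/8}$. The crux is to argue that every node in $G_{w\narr u^*}\setminus\{w\}$ is light and has $w$ as its closest heavy node, so that all of them are compressed into $w$ by \clst. If some $h\in G_{w\narr u^*}\setminus\{w\}$ were heavy, then connectedness of the heavy subtree would force the unique $h$-to-$u^*$ path (which lies entirely inside the heavy subtree) to pass through $w$, so $w$ would have a second heavy neighbor---contradicting the leaf property of $w$ in $G_i'$. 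Consequently, every node of $G_{w\narr u^*}\setminus\{w\}$ is light, and its unique path to any heavy node passes through $w$, making $w$ its nearest heavy node; therefore \clst compresses all $|G_{w\narr u^*}|-1 \geq n^{\delta/8}$ of them into $w$.

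I expect the main obstacle to be the careful identification of $G_i'$ with the heavy induced subgraph, together with the bookkeeping of which light nodes are compressed into which heavy node. Both depend critically on \Cref{obs:heavyCCandTVunambiguous}, and the leaf argument in particular requires ruling out any hidden heavy node inside $G_{w\narr u^*}$, which is the only genuinely non-trivial structural step in the proof.
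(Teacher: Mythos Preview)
Your proposal is correct and follows essentially the same approach as the paper's proof: rule out the no-heavy-node case via \Cref{lem:CompressLightSubTrees}, identify $G_i'$ with the heavy-induced subtree using \Cref{obs:heavyCCandTVunambiguous}, and then for a leaf $w$ use heaviness of $w$ to lower-bound $|G_{w\narr u^*}|$ while arguing that every node of $G_{w\narr u^*}\setminus\{w\}$ is light with $w$ as its closest heavy node. Your version is in fact slightly more careful than the paper's in that you work with the heavy neighbor $u^*$ of $w$ in $G_i$ (after observing that \cp preserves leaf status), whereas the paper takes $u$ to be the neighbor of $w$ in $G_{i+1}$, which need not coincide with $u^*$ when a path was compressed.
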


\begin{proof}
	Since $|G_{i+1}| \geq 2$ (and not $|G_{i+1}|=1$), by \Cref{lem:CompressLightSubTrees}, there must have been heavy nodes in $G_i$. Since all light nodes were compressed in phase $i$, all nodes in $G_{i+1}$ were heavy in $G_i$. Observe that even though $w$ is a leaf in phase $i+1$, it was not a leaf node in phase $i$, since leaf nodes are light by definition. Let $u$ be the unique neighbor of $w$ in $G_{i+1}$. We must show that $|G_{w \narr u}| > n^{\delta/8}$ and that $G_{w \narr u} \setminus w$ was compressed into $w$ during phase $i$. It must be that $|G_{w \narr u}| > n^{\delta/8}$, since otherwise, $w$ would have been light against $u$ in phase $i$. Nodes $G_{w \narr u} \setminus w$ were compressed into $w$ during phase $i$ by \Cref{lem:CompressLightSubTrees}, since $w$ was their closest heavy node (due to the graph being a tree).
\end{proof}

\begin{lemma} \label{lem:lConstant}
	After $\ell=O(1)$ phases, graph $G_\ell$ consists of exactly one node.
\end{lemma}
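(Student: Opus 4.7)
}
The plan is to show that $|G_{i+1}|$ shrinks by a polynomial-in-$n$ factor compared to $|G_i|$ as long as $|G_{i+1}|\ge 2$, so that a constant number of phases suffices to reach a single node. First I would dispose of the trivial case: if at any point $|G_i|=1$, then the algorithm is already done with that phase and every subsequent phase leaves the single node untouched. So fix a phase $i$ in which $|G_{i+1}|\ge 2$.

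Next I would extract the key per-leaf gain from \Cref{lem:manyLightNodesPerHeavy}: every leaf $w$ of $G_{i+1}$ was a heavy node of $G_i$ into which at least $n^{\delta/8}$ light nodes of $G_i$ were compressed during Step~1(a). By \Cref{lem:CompressLightSubTrees}, each light node of $G_i$ is compressed into \emph{its} closest heavy neighbor, so the sets of light nodes absorbed by distinct heavy nodes are pairwise disjoint. Summing over the leaves of $G_{i+1}$ therefore yields
\[
|G_i| \;\ge\; L_{i+1}\cdot n^{\delta/8},
\]
where $L_{i+1}$ denotes the number of leaves of $G_{i+1}$.

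I would then bound $L_{i+1}$ from below by $|G_{i+1}|/2$. By \Cref{lem:CompressPaths}, $G_{i+1}$ is obtained from $G_i'$ by replacing every maximal path with a single edge, so no node of $G_{i+1}$ has degree exactly $2$ (and if $|G_{i+1}|=2$, both of its nodes are leaves). A standard handshaking argument on a tree with $N$ vertices whose internal nodes all have degree $\ge 3$ gives $L+3I\le 2(N-1)$ where $L+I=N$, so $I\le L-2$ and hence $L\ge (N+2)/2$. Combining, we obtain the recursion
\[
|G_{i+1}| \;\le\; \frac{2\,|G_i|}{n^{\delta/8}}.
\]
Starting from $|G_0|\le n$ and unrolling gives $|G_i|\le 2^{i}\cdot n^{\,1-i\delta/8}$, which drops below $2$ once $i\ge 8/\delta + O(1)$. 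Since $\delta$ is a constant, this fixes an $\ell=O(1)$ after which $G_\ell$ must consist of exactly one node.

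The main thing to get right is not any single inequality but making sure the recursion actually applies; specifically, that $\cp$ really leaves no degree-$2$ vertices (handled by its specification in \Cref{lem:CompressPaths}) and that the tiny corner cases $|G_{i+1}|\in\{1,2\}$ do not break the leaf-count bound. Both are straightforward, so once the per-phase shrinkage $|G_{i+1}|\le 2|G_i|/n^{\delta/8}$ is in hand the statement follows by a one-line induction.
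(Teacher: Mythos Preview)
Your proposal is correct and follows essentially the same route as the paper: both invoke \Cref{lem:manyLightNodesPerHeavy} to get $\ge n^{\delta/8}$ disjoint absorbed nodes per leaf of $G_{i+1}$, use \Cref{lem:CompressPaths} to ensure $G_{i+1}$ has no degree-$2$ vertices so that leaves make up at least half the tree, and derive the recursion $|G_{i+1}|\le 2|G_i|/n^{\delta/8}$. The only cosmetic difference is that the paper frames the case split in terms of the number of heavy nodes in $G_i$ (zero, one, or at least two) rather than conditioning on $|G_{i+1}|\ge 2$, and includes the additive $n_{i+1}$ term in the inequality $n_i \ge n_{i+1} + L_{i+1}\cdot n^{\delta/8}$; neither affects the argument.
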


\begin{proof}
    Consider graph $G_i$ at the beginning of some phase $i$. If there are no heavy nodes in $G_i$, this is the last phase of the algorithm by \Cref{lem:CompressLightSubTrees}. If there is exactly one heavy node in $G_i$, we are also done by \Cref{lem:CompressLightSubTrees}. What remains to be proven is that if there are at least two heavy nodes in the graph, we reduce the size of the graph by a polynomial factor in $n$.
	
	Assume that there are at least 2 heavy nodes in graph $G_i$, and let us analyze what happens. In Step 1(a), all light nodes are compressed into the closest heavy node by \Cref{lem:CompressLightSubTrees}. In Step 1(b), all paths are compressed into single edges by \Cref{lem:CompressPaths}, leaving no degree-2 nodes in the graph (compressing paths never creates new degree-2 nodes). Consider graph $G_{i+1}$, which by \Cref{lem:CompressLightSubTrees} consists of the nodes that were heavy in $G_i$. By \Cref{lem:manyLightNodesPerHeavy} it also holds that during phase $i$, at least $n^{\delta/8}$ light nodes were compressed into every leaf node $w$ of graph $G_{i+1}$. It holds that
	\begin{align*}
		n_{i} \geq n_{i+1} + |\{w \in G_{i+1} \mid \deg_{G_{i+1}}(w)=1\}| \cdot  n^{\delta/8} 
		> n_{i+1} + n^{\delta/8} \cdot n_{i+1}/2 
		= n_{i+1} (1+n^{\delta/8}/2)
	\end{align*}
	and $n_{i+1} < n_i / (1+n^{\delta/8}/2) < 2n_i / n^{\delta/8}$~.
	
	The first strict inequality stems from the fact that there are no degree-2 nodes left after phase $i$, and hence the number of leaf nodes in $G_{i+1}$ is strictly larger that $n_{i+1}/2$. The proof is complete, as we have shown that if graph $G_{i}$ contains at least 2 heavy nodes, $G_{i+1}$ is smaller than $G_i$ by a factor of $\Theta(n^{\delta/8})$.
\end{proof}

The outline for the rest of this section is as follows. The procedure  $\clst$ and the proof of  \Cref{lem:CompressLightSubTrees} are presented in \Cref{ssec:ccSinglePhaseTrees}. This is the most technically involved part of our algorithm. The procedure $\cp$ and the proof of 
\Cref{lem:CompressPaths} are presented in \Cref{ssec:ccSinglePhasePaths}. The procedures $\dcp$ and $\dclst$  and the proof of \Cref{lem:Decompress} are presented in \Cref{ssec:ccReversalPhases}. In \Cref{sec:MPCdetails}, we show technical details how \cc can be implemented in the low-space \mpc model. 

\subsection{\maxid: Single Phase (\clst)}
\label{ssec:ccSinglePhaseTrees}

In this section, we focus on a single execution of $\clst(G,\ad)$ on a graph $G$ and prove \Cref{lem:CompressLightSubTrees}. With out loss of generality, we assume there are $n$ nodes in the graph---starting from the second phase of \cc we will actually use this algorithm on graphs with fewer than $n$ nodes.

At all times, every nodes $v$ has some set of nodes $S_v$ in its memory, which we initialize to $N(v)$. Set $S_v$ can be thought of as the node's view or knowledge. During the execution, $S_v$ grows, and if $|S_v| \geq 2n^{\delta/4}$, $v$ becomes \full. Similarly to definitions $G_{v \arr u}$ and $G_{v \narr u}$, let us define the following. For a node $v$ and a node $u \in N(v)$, let $S_{v \arr u}=S_v\cap G_{v\arr u}$. Also, let $S_{v \narr u} \coloneqq S_v \setminus S_{v \arr u}$. Recall the definition of $r_v(w)$: for every $w \in G$ let $r_v(w)$ be $u \in N(v)$ such that $w \in G_{v \arr u}$.

All nodes in the graph have the property that they are either light or heavy (see \Cref{def:lightheavy}). Initially, nodes themselves do not  know whether they are light or heavy, since these properties depend on the topology of the graph. During the algorithm each node is in one of the four states: \act, \happy, \full, or \sad. Initially, all nodes are \act. A node $v$ becomes \happy, if at some point during the execution, there exists $u \in N(v)$ such that such that $G_{v \narr u}\subseteq S_{v}$ and $|G_{v \narr u}| \leq n^{\delta/8}$. In that case, we say that node $v$ is \emph{\happy against} $u$. If a node, that is not \full, realizes that it can never become \happy (for example by having $|S_{v \arr u}|>n^{\delta/8}$ for two different neighbors $u$), it becomes \sad. Upon becoming \happy, \sad or \full, nodes do not partake in the algorithm except for answering queries from \act nodes. We call nodes \unhappy if they are in some other state than \happy (including state \act). The goal is that all light nodes eventually become \happy, and heavy nodes always remain \unhappy. 
When comparing the definitions of \happy and light, it is evident that when a node becomes \happy, it knows that it is light. Similarly, a node becoming \full or \sad knows that it is heavy.

For a node $v$ and any $X\subseteq N(v)$, define an exponentiation operation as
\begin{align*}
	\expo(X): ~~ S_v \larr \bigcup_{u \in X} \bigcup_{w \in S_{v \arr u}} S_{w \narr r_w(v)}. 
\end{align*}

We say that a node $v$ \emph{exponentiates towards (or in the direction of) $u$} if $u \in N(v)$ and $v$ performs $\expo(X)$ with $u\in X$.

\paragraph{High level overview of \clst.}

The algorithm consists of $O(\log \ad)$ iterations, in each of which nodes perform a carefully designed graph exponentiation procedure. The aim is for light nodes $v$ to become \happy by learning their subtrees $T_v$, after which, (certain) light nodes compress $T_v$ into their \unhappy neighbor.
If there were no memory constraints and every node could do a proper (uniform) exponentiation step in every iteration of the algorithm, i.e., execute $\expo(N(v))$, after $O(\log \ad)$ iterations \emph{all} nodes would learn the whole graph---a proper exponentiation step executed on all nodes halves the diameter---and the highest ID node could compress the whole graph into itself. However, uniform exponentiation would result in all nodes exceeding their local memory $O(n^\delta)$, and also significantly breaking the global memory requirement. Even if we were to steer the exponentiation procedure such that light nodes would learn a $D_{T_v}$ radius ball around them, where $D_{T_v}$ is the diameter of their light subtree, this would still break global memory. In fact, we cannot even do a single exponentiation step for all nodes in the graph without breaking memory bounds!
Hence, we need to steer the exponentiation with some even more stronger invariant in order to abide by the global memory constraint.

\begin{observation}
    If every light node $v$ keeps $O(|T_{v,u}|)$ nodes in its local memory for some (possibly unique) neighbor $u$ it is light against, this does not violate local memory $O(n^\delta)$ nor global memory $O(n \cdot \ad)$.
\end{observation}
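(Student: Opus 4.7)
The plan is to verify the two memory bounds separately. The local memory bound is immediate from the definition of ``light against'': since $v$ is light against the chosen neighbor $u$, we have $|T_{v,u}| = |G_{v \narr u}| \leq n^{\delta/8}$, which is comfortably below the local memory bound $n^\delta$. The interesting part is the global memory bound $O(n \cdot \hat{D})$, for which I would split into two cases depending on whether $G$ contains any heavy nodes.

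In the case that $G$ contains at least one heavy node, the structural lemmas already proven do most of the work. By \Cref{obs:heavyCCandTVunambiguous}, every light node $v$ is light against a unique neighbor $u_v$, and by \Cref{lem:allLight} the set $T_{v,u_v}$ consists entirely of light nodes. Fixing an arbitrary heavy node $r$ and rooting $G$ at $r$ purely for analysis, I claim that $u_v$ is precisely the parent of $v$ in this rooting: otherwise the parent of $v$ would lie in $T_{v,u_v}$, and then the path from $v$ up to the heavy root $r$ would force a heavy node into $T_{v,u_v}$, contradicting that $T_{v,u_v}$ is entirely light. Consequently $T_{v,u_v}$ coincides with the analytic subtree $T(v,r)$ rooted at $v$, and summing yields
\[
\sum_{v \text{ light}} |T_{v,u_v}| \;\leq\; \sum_{v \in V(G)} |T(v,r)| \;\leq\; (\diam(G)+1)\cdot |V(G)| \;=\; O(n\cdot \hat{D}),
\]
where the middle inequality is \Cref{lem:globalmemory} and the final equality uses $\diam(G) \leq \hat{D}$.

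In the remaining case, where $G$ has no heavy node, \Cref{cor:sizeOfGraph} gives $|V(G)| \leq 2 n^{\delta/2}$, so the crude bound $\sum_{v} |T_{v,u_v}| \leq |V(G)| \cdot n^{\delta/8} = O(n^{5\delta/8}) = o(n)$ easily fits inside $O(n\cdot \hat{D})$ (using $\hat{D}\geq 1$). The only conceptually non-obvious step is the structural insight in the first case, namely that the locally chosen ``light directions'' $u_v$ of all light nodes are globally consistent with a single rooting at any heavy node; once this is recognized, \Cref{lem:globalmemory} closes the argument.
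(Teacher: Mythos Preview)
Your proof is correct and follows essentially the same approach as the paper's: split into the two cases based on whether a heavy node exists, invoke \Cref{cor:sizeOfGraph} in the all-light case, and in the other case root at a heavy node and apply \Cref{lem:globalmemory}. Your write-up is in fact more careful than the paper's terse sketch, as you explicitly justify the key structural point that the unique ``light direction'' $u_v$ of each light node coincides with its parent in the rooting at $r$ (so that $T_{v,u_v} = T(v,r)$), which the paper leaves implicit.
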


\begin{proof}
    If there is a heavy node in the graph, $|T_{v,u}|$ is unique by \Cref{obs:heavyCCandTVunambiguous}. The claim follows by considering a hypothetical rooting of the tree at some heavy node and applying \Cref{lem:globalmemory}. Otherwise, the claim holds trivially because the graph is of size $\leq 2n^{\delta/8}$ by \Cref{cor:sizeOfGraph}.
\end{proof}
 
Inspired by the observation above, we aim to steer the exponentiation such that it is performed
in a balanced way, where a node learns roughly the same number of nodes in each direction (or sees only leaves in one direction). In fact, we do not want to exponentiate in a direction if that exponentiation step would provide us with $\gg |T_v|$ nodes. This step is further complicated as nodes neither know whether they are heavy or light nor do they know the size of their subtree. In our algorithm that is presented below we perform a careful probing for the number of nodes into all directions to determine in which directions we can safely exponentiate without using too much memory. In the probing procedure $\pro$, a node $v$ computes 
$B_{v \arr u} = \sum_{w \in S_{v \arr u}} |S_{w \narr r_w(v)}|$  for every neighbor $u \in N(v)$ as an estimate for the number of nodes it may learn when exponentiating towards $u$. This estimate may be very inaccurate and may contain a lot of doublecounting. 
In \Cref{sssec:probing}, we present the full procedure and prove the following lemma. 
\begin{restatable}[\pro]{lemma}{lemMainProbing}
	\label{lem:mainProbing}
	Consider an arbitrary iteration of algorithm $\clst$. Then algorithm $\pro(\ad)$ returns:
	\begin{enumerate}
		\item[(i)] $\fd \subseteq N(v)$ such that if we were to exponentiate in all directions, we would obtain $|S_{v\rightarrow u'}| > n^{\delta/8}$ for all $u' \in \fd$ and $|S_{v \arr u'}| \leq n^{\delta/8} \cdot \ad$ for all $u' \in N(v) \setminus \fd$.  
		
		\item[(ii)] $\ld \in N(v)$ (returned if $\fd =\emptyset$) such that if we were to exponentiate in all directions, we would obtain $|S_{v\rightarrow \ld}|\geq |S_{v\rightarrow u'}|$ for all $u'\in N(v)$ and $|S_{v \arr \ld}| \leq n^{\delta/8} \cdot \ad$. 

	\end{enumerate}
	\pro can be implemented in $O(1)$ low-space \mpc rounds, using $O(n\cdot \ad^3)$ global memory. It does not alter the state of $S_v$ for any node $v$ in the execution of $\clst$. 
\end{restatable}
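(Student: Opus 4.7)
The plan is to implement \pro as a single constant-round aggregation that, for every node $v$ and every neighbor $u \in N(v)$, computes the probe value
\[
B_{v \arr u} \;=\; \sum_{w \in S_{v \arr u}} |S_{w \narr r_w(v)}|,
\]
and then classifies directions by a threshold at $n^{\delta/8}\cdot \ad$, setting $\fd = \{u \in N(v) : B_{v \arr u} > n^{\delta/8}\cdot \ad\}$ and, if $\fd = \emptyset$, $\ld$ to an argmax of $B_{v \arr u}$ over $u \in N(v)$. Exploiting the symmetry invariant of \clst (whenever $v$ adds $w$ to $S_v$, $w$ also adds $v$ to $S_w$), each $w$ knows $r_w(v)$ and $|S_{w \narr r_w(v)}|$ for every $v \in S_w$ and can ship the pair $(r_w(v), |S_{w \narr r_w(v)}|)$ to $v$; then $v$ buckets the incoming messages by $r_v(w)$ (read off its view of $w \in S_v$) and sums within each bucket, producing all $B_{v \arr u}$ in $O(1)$ MPC rounds. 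The procedure only reads $S_v$ and $r_v(\cdot)$, so it does not mutate any $S_v$; the total message volume is $\sum_w |S_w|$, which is absorbed by the $O(n\cdot \ad^3)$ global-memory budget of \clst.

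Claim (i) reduces to the double-counting sandwich
\[
B_{v \arr u} / \ad \;\le\; |U_u| \;\le\; B_{v \arr u}, \qquad U_u \;:=\; \bigcup_{w \in S_{v \arr u}} S_{w \narr r_w(v)},
\]
where $U_u$ is exactly the set $v$ would obtain by exponentiating in direction $u$. The upper bound is union-versus-sum. The lower bound is the main obstacle: any $x \in U_u$ contributes to the defining sum only through those $w$ with $x \in S_w$ for which $w$ lies strictly between $v$ and $x$ on the unique $v$-to-$x$ tree path (equivalently, $r_w(v)$ is not the first step of $P(w,x)$). Since $v \notin S_v$ is invariant under \expo (because $v \in S_{w \arr r_w(v)}$ whenever the edge $\{v,w\}$ is traversed), and since the graph is a tree with $\dist_G(v,x) \le \diam(G) \le \ad$, at most $\ad - 1$ such intermediate $w$ exist, so each element is counted at most $\ad$ times in $B_{v \arr u}$. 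Combined with the threshold, this gives (i): $u \in \fd$ forces $|U_u| \ge B_{v \arr u}/\ad > n^{\delta/8}$, and $u \notin \fd$ forces $|U_u| \le B_{v \arr u} \le n^{\delta/8}\cdot \ad$.

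For (ii), when $\fd = \emptyset$ every $B_{v \arr u} \le n^{\delta/8}\cdot \ad$, so $|U_{\ld}| \le n^{\delta/8}\cdot \ad$ is immediate from the upper half of the sandwich. To secure the pointwise-maximal guarantee $|U_{\ld}| \ge |U_{u'}|$ for every $u' \in N(v)$, I would have $v$ actually materialize each $U_u$, $u \in N(v)$, in a second constant-round phase; each set has cardinality at most $n^{\delta/8}\cdot \ad$, so the added global memory is bounded by $2|E|\cdot n^{\delta/8}\cdot \ad = O(n\cdot \ad^2)$, comfortably inside the $O(n\cdot \ad^3)$ budget, after which $\ld$ is chosen as an exact $\arg\max_u |U_u|$. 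The remaining accounting (round count, non-mutation of $S_v$, local memory within $n^\delta$) is routine once the sandwich is in place; the essential technical point the hypothesis $\ad \ge \diam(G)$ is there to resolve is exactly the overcounting bound in the lower half of the sandwich.
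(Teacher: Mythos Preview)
Your argument for part~(i) is correct and matches the paper's: the sandwich $B_{v \arr u}/\ad \le |U_u| \le B_{v \arr u}$ is exactly the paper's Lemma~3.20, and the threshold at $n^{\delta/8}\cdot\ad$ gives the claimed properties of $\fd$.

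The gap is in the global memory accounting for part~(ii). You write $2|E|\cdot n^{\delta/8}\cdot \ad = O(n\cdot \ad^2)$, but this is an arithmetic slip: $2|E| = O(n)$, so the bound is $O(n\cdot n^{\delta/8}\cdot \ad)$, and since $\ad$ is only assumed to lie in $[\diam(G),\,n^{\delta/8}]$ it can be arbitrarily small relative to $n^{\delta/8}$ (e.g.\ $\ad = \Theta(\log n)$). Hence $n\cdot n^{\delta/8}\cdot \ad$ need not be $O(n\cdot \ad^3)$, and your ``always materialize when $\fd=\emptyset$'' strategy does not fit the stated budget. The per-direction bound $|U_u|\le n^{\delta/8}\cdot\ad$ is simply too coarse to sum over all $2|E|$ half-edges.

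The paper avoids this by a conditional branch. After computing $u_{\max}=\arg\max_u B_{v\arr u}$, it checks whether $B_{v\arr u_{\max}} \ge \ad\cdot B_{v\arr u'}$ for every other $u'$. If so, the sandwich already forces $|U_{u_{\max}}|\ge |U_{u'}|$, and $\ld:=u_{\max}$ is returned \emph{without} materializing anything. Only when this domination fails does $v$ actually perform $\expo(N(v))$; and in that case one argues by contradiction that the parent-direction size (in the hypothetical rooting at the eventual survivor $r$) satisfies $|S_{v\arr\,\mathrm{parent}}|\le |T(v,r)|\cdot\ad^2$, because otherwise the sandwich would have forced the parent to dominate and we would have been in the first branch. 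Now the memory sums via the rooted-subtree identity $\sum_v |T(v,r)|\le (D+1)n$, yielding $O(n\cdot\ad^3)$. This two-case split is the missing idea: it is precisely what lets the memory be charged against the subtree sizes rather than against the uniform cap $n^{\delta/8}\cdot\ad$.
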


The main difficulty of \clst lies in ensuring the global and local memory constraints (\Cref{lem:rootingGlobalMemory,lem:rootingLocalMemory}) that prevent us from blindly exponentiating in all directions, while at the same time ensuring enough progress for light nodes such that every light node becomes \happy by the end of the algorithm (\Cref{lem:correctnessLemmaLight}).

\falgo{$\clst(G_i,\ad)$}{
	\item[] \hspace{-6mm} All nodes are active. Initialize $S_v \larr N(v)$. If $|S_v| > n^{\delta/8}+1$, $v$ becomes sad. 
	\item For $O(\log \ad)$ iterations:
	\begin{enumerate}
		\item $\fd, \ld  \larr \pro(\ad)$ \vspace{1mm} \\ // The properties of $\pro(\ad)$ are formally stated in \Cref{lem:mainProbing}. Informally, $\fd \subseteq N(v)$ contains directions with $>n^{\delta/8}$ nodes, and  $\ld$ contains the direction with the largest number of nodes if $\fd = \emptyset$.
		
		\item If $|\fd| \geq 2$, $v$ becomes sad.
		
		\item If $|\fd| = 1$:
		
		\begin{enumerate}
			\item Perform $\expo(N(v) \setminus \fd)$
		\end{enumerate}
		
		\item If $|\fd| = 0$:
		
		\begin{enumerate}
			\item Perform $\expo(N(v) \setminus \ld)$
			\item If $v$ is in $S_w$ for some $w$, add $w$ to $S_v$ \hfill \text{// ensure symmetric view}
		\end{enumerate}
		
		\item Node $v$ asks nodes $w \in S_v$ whether or not they are happy against $r_w(v)$, and if so, what is the size of subtree $T_{w,r_w(v)}$. Node $v$ can locally compute if it can become happy by learning subtrees $T_{w,r_w(v)}$. If $v$ can, it asks for them and becomes happy. 
	\end{enumerate}
	// After Step 1, all light nodes are happy, and all heavy nodes are unhappy (\Cref{lem:correctnessLemmaLight})
	\item Happy nodes $v$ with an unhappy neighbor $u$ compress $S_{v \narr u}=G_{v\narr u}$ into $u$. \vspace{1mm} 
	
	\item  Nodes $v$ that are happy against $u$ such that $u$ is happy against $v$ update $S_v \larr S_v \cup S_u$ and compress $S_v$ into the highest ID node in $S_v$.  \vspace{1mm} 
}

In \Cref{sssec:progressAndCorrectness}, we discuss the measure of progress and correctness, with the final correctness proof of \Cref{lem:CompressLightSubTrees}. In \Cref{sssec:memoryBounds}, we discuss local and global memory bounds, with the final memory proofs of \Cref{lem:CompressLightSubTrees}. The \mpc implementation is deferred to \Cref{sec:MPCdetails}.

\subsubsection{Measure of Progress and Correctness} \label{sssec:progressAndCorrectness}
We begin by proving the measure of progress and correctness, which will give us the means to analyze the memory requirements as if the tree was rooted. 

\begin{lemma} \label{lem:lightExpo}
	Let $v$ be a node that is light against neighbor $u$. If in some iteration of $\clst$, $v$ exponentiates in the direction of $u$, i.e., it performs $\expo(X)$ with $u\in X$, the size of the resulting set $S_{v \arr u}$ is bounded by $|T_v|$.
\end{lemma}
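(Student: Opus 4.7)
The plan is to argue via the structure of the probing procedure that when $v$ is light against $u$, the only scenario in which $v$ can actually exponentiate toward $u$ is the case $|\fd|=0$ with $\ld \neq u$, and in that case the bound is enforced because $\ld$ must point into the small side $T_v$.

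First, I would record the following geometric fact: for every neighbor $u' \in N(v) \setminus \{u\}$, we have $G_{v \arr u'} \subseteq G_{v \narr u} = T_v$, since in a tree any path out of $v$ starting with $u'$ cannot later pass through $u$. Using $S_{v \arr u'} \subseteq G_{v \arr u'}$ by definition, together with $|T_v| \leq n^{\delta/8}$ (which is precisely what ``light against $u$'' means), this yields $|S_{v \arr u'}| \leq |T_v| \leq n^{\delta/8}$ for every hypothetical post-exponentiation value. Plugging this into \Cref{lem:mainProbing}(i), the defining condition $|S_{v \arr u'}| > n^{\delta/8}$ for membership in \fd fails for every $u' \neq u$, so $\fd \subseteq \{u\}$.

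Next I run a short case analysis of what \clst does in the given iteration. If $|\fd|\geq 2$, $v$ becomes \sad and does not exponentiate at all. If $|\fd|=1$, then $\fd=\{u\}$ by the previous step, so $v$ exponentiates in $N(v)\setminus\{u\}$, contradicting the hypothesis $u \in X$. Hence necessarily $|\fd|=0$, in which case $v$ exponentiates in $N(v)\setminus\{\ld\}$, and the hypothesis $u \in X$ forces $\ld \neq u$.

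In this surviving case, I finish by combining \Cref{lem:mainProbing}(ii) with the geometric fact: since $\ld \in N(v) \setminus \{u\}$, the same bound gives $|S_{v \arr \ld}| \leq |T_v|$ for the hypothetical value, and \Cref{lem:mainProbing}(ii) yields $|S_{v \arr u}| \leq |S_{v \arr \ld}|$. Chaining these gives $|S_{v \arr u}| \leq |T_v|$ in the hypothetical ``exponentiate in all directions'' scenario. The one piece of bookkeeping I expect to be the main obstacle is reconciling this hypothetical value with the actual $|S_{v \arr u}|$ produced by $\expo(X)$ when $u \in X$; this is resolved by observing that for any $u' \neq u$, every $w \in S_{v \arr u'}$ contributes only $S_{w \narr r_w(v)} \subseteq G_{v \arr u'}$, which is disjoint from $G_{v \arr u}$, so contributions from directions $u' \neq u$ cannot alter $S_{v \arr u}$, making the hypothetical and actual values agree.
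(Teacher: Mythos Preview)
Your proposal is correct and follows essentially the same case analysis as the paper's proof: rule out $\fd$ containing any neighbor other than $u$ (the paper phrases this as deriving a contradiction from $u \notin \fd$ but $\fd \neq \emptyset$), and in the remaining case $\fd = \emptyset$, $\ld \neq u$, bound $|S_{v \arr u}|$ via $|S_{v \arr \ld}| \leq |G_{v \arr \ld}| \leq |T_v|$ using \Cref{lem:mainProbing}(ii). Your final bookkeeping step reconciling the hypothetical and actual values of $S_{v \arr u}$ is a detail the paper leaves implicit.
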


\begin{proof}
	Consider an arbitrary iteration of the algorithm. 
	If $u\in \fd$, we do not exponentiate towards $u$, so there is nothing to prove. If $u\notin \fd$, but $\fd \neq \emptyset$, there is some $w \neq u$ such that, by the Probing \Cref{lem:mainProbing}, $|G_{v\arr w}| > n^{\delta/8}$, which is a contradiction to $v$ being light against $u$.
	
	Hence, consider the case that  $\fd=\emptyset$. If $\ld=u$, we do not exponentiate towards $u$ and there is nothing to prove. If $\ld\neq u$, then we exponentiate towards $u$ and by \Cref{lem:mainProbing} $(ii)$, we have $|S_{v \arr u}| \leq |G_{v \arr \ld}| \leq |T_v|$.
	\end{proof}

\begin{lemma} \label{lem:lightNeverFullSad}
	In any iteration of \clst, a light node neither becomes \full nor \sad. 
\end{lemma}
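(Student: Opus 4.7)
The plan is to separately establish that a light node $v$ becomes neither \sad nor \full. Fix a neighbor $u\in N(v)$ against which $v$ is light, so $|T_v|=|G_{v\narr u}|\le n^{\delta/8}$. Two useful consequences follow immediately from the definitions: first, $N(v)\setminus\{u\}\cup\{v\}\subseteq G_{v\narr u}$ yields $|N(v)|\le|T_v|\le n^{\delta/8}$; second, for every neighbor $u'\ne u$ we have $G_{v\arr u'}\subseteq G_{v\narr u}=T_v$, hence $|G_{v\arr u'}|\le n^{\delta/8}$.

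The \sad transitions are easy to rule out. The initialization check $|S_v|>n^{\delta/8}+1$ does not fire since $|S_v|=|N(v)|\le n^{\delta/8}$. Within the loop, the only other \sad trigger is $|\fd|\ge 2$. For any $u'\ne u$, the containment $S_{v\arr u'}\subseteq G_{v\arr u'}$ holds even under the hypothetical full exponentiation considered by $\pro$, so $|S_{v\arr u'}|\le n^{\delta/8}$ in the sense of \Cref{lem:mainProbing}(i), which therefore excludes $u'$ from \fd. Hence $\fd\subseteq\{u\}$ and $|\fd|\le 1$ in every iteration.

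For \full I would bound $|S_v|<2n^{\delta/4}$ by decomposing $|S_v|=|S_{v\arr u}|+|S_{v\narr u}|$. The inclusion $S_{v\narr u}\subseteq G_{v\narr u}=T_v$ gives $|S_{v\narr u}|\le n^{\delta/8}$ for free. For the other summand I would induct on the iteration index, with invariant $|S_{v\arr u}|\le n^{\delta/8}\cdot\ad$. Invoking the \sad claim, in every iteration either $\fd=\{u\}$, in which case Step~3(a) exponentiates $N(v)\setminus\{u\}$ and leaves $S_{v\arr u}$ unchanged, or $\fd=\emptyset$, in which case Step~4(a) is $\expo(N(v)\setminus\ld)$; in the latter case, if $\ld=u$ then $S_{v\arr u}$ is again unchanged, and if $\ld\ne u$ then $v$ exponentiates toward $u$ and \Cref{lem:lightExpo} caps $|S_{v\arr u}|$ at $|T_v|\le n^{\delta/8}$. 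Step~4(b) is the remaining contributor and I would control it by appealing to \Cref{lem:mainProbing}(ii), which bounds the largest direction under hypothetical full exponentiation by $n^{\delta/8}\cdot\ad$; an inductive bookkeeping argument carries this bound across the symmetric-view step. Combining, $|S_v|\le n^{\delta/8}+n^{\delta/8}\cdot\ad\le n^{\delta/8}+n^{\delta/4}<2n^{\delta/4}$ since $\ad\le n^{\delta/8}$, so $v$ never becomes \full.

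The main technical obstacle is precisely Step~4(b): each $w\in G_{v\arr u}$ holding $v\in S_w$ inflates $|S_{v\arr u}|$ by one when pulled in, and the Probing Lemma only directly bounds exponentiation steps, not symmetric-view additions. The resolution I envisage is a charging argument that attributes every such $w$ to an earlier exponentiation by some node on the $w$--$v$ path that first inserted $v$ into $S_w$; each of those exponentiations was constrained by its own probing, and aggregating these constraints shows that the total symmetric-view inflation of $S_{v\arr u}$ is at most the $n^{\delta/8}\cdot\ad$ guaranteed by $v$'s own probing, closing the induction.
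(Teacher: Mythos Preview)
Your approach is essentially the paper's: the \sad case is dispatched by the degree bound and the observation $\fd\subseteq\{u\}$, and the \full case by tracking $|S_v|$ through each sub-step while invoking \Cref{lem:lightExpo} and \Cref{lem:mainProbing}. Your explicit decomposition $|S_v|=|S_{v\arr u}|+|S_{v\narr u}|$ is slightly cleaner bookkeeping than the paper's step-by-step prose, but the content is the same.

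The one substantive divergence is Step~1(d)ii (your Step~4(b)). You flag it as the ``main technical obstacle'' and outline---but do not carry out---a charging argument attributing each symmetric-view addition to an earlier probed exponentiation. The paper does not go that route at all: it handles the step in one sentence, observing that since $\fd=\emptyset$, the Probing Lemma caps a hypothetical exponentiation toward $u$ at $n^{\delta/8}\cdot\ad<2n^{\delta/4}$, and takes this to bound the symmetric-view additions as well. Your proposed charging is more elaborate than anything the paper supplies, and since you leave it as a sketch, it does not obviously buy more rigor than the paper's direct appeal. You also omit Step~1(e) (pulling in subtrees of \happy descendants), which the paper treats separately; there $v$ enlarges $S_v$ only if it becomes \happy, which immediately caps $|S_{v\narr u}|\le n^{\delta/8}$ while leaving $|S_{v\arr u}|$ untouched, so it fits your invariant without difficulty.
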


\begin{proof}
	Node $v$ never becomes \full due to initialization $S_v \larr N(v)$, since for a light node it must hold that $|N(v)| \leq |T_v|+1 \leq n^{\delta/8}+1 < 2n^{\delta/4}$. During execution, $S_v$ grows only in Steps 1(c)--(e). During (c), it must be that $\fd=u$, since otherwise it would imply that $|T_v|>n^{\delta/8}$. Hence, as a result of (c), $v$ cannot become \full. During (d)i, if $\expo(X)$ with $u \not\in X$, it holds that $X \subset T_v$ and $v$ cannot become \full. Otherwise if $u \in X$, by \Cref{lem:lightExpo}, $v$ cannot become \full. Node $v$ cannot become \full even when performing Step 1(d)ii, since a hypothetical exponentiation step in the direction of $u$ would yield a set that is bounded by $n^{\delta/8} \cdot \ad < 2n^{\delta/4}$ (\fd is empty and $u$ is \ld). During (e), node $v$ becomes happy against $u$ and hence $|S_{v \narr u}| \leq n^{\delta/8}$. In the worst case, $|S_{v \arr u}| < n^{\delta/8} \cdot \ad \leq n^{\delta/4}$. Hence, as a result of (e), $v$ cannot become full.
	
	A node can become sad only if its degree is too large, or in Step 1(b). A light node $v$ never becomes sad since it must hold that $|N(v)| \leq |T_v|+1 \leq n^{\delta/8}+1$, and $v$ cannot have two or more neighbors $u$ with $G_{v \arr u}>n^{\delta/8}$ (one neighbor would have to be in $T_v$, implying that $|T_v|>n^{\delta/8}$).
\end{proof}

For the proofs of the next two lemmas, let $G=(V,E)$ be the input graph, and consider graph $G'=(V',E')$ such that $V'=V$ and $E' = E \cup \{ \text{ $\{v,w\} \mid v,w\in V \text{and } w \in S_v$ or $v \in S_w$} \}$. 

\begin{lemma}[Measure of progress] \label{lem:pathShortening}
	At the start of any iteration $j$, consider a light (but still \act) node $v$, and the longest shortest path $P^j_{vw}$ in $G'$ between $v$ and an a leaf node $w \in T_v$. If $|P^{j}_{vw}| \geq 4$ holds, then holds that $|P^{j+1}_{vw}| \leq \lceil 3/4 \cdot |P^j_{vw}| \rceil$ holds. 
\end{lemma}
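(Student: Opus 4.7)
The plan is to exhibit, at the start of iteration $j+1$, a walk from $v$ to $w$ in $G'$ of length at most $\lceil 3k/4\rceil$, where $k = |P^j_{vw}|$. Writing the shortest $j$-th path as $P^j_{vw} = (v = x_0, x_1, \ldots, x_k = w)$ and splitting it into consecutive blocks of four edges, it suffices to show that each block $(x_i, x_{i+1}, x_{i+2}, x_{i+3}, x_{i+4})$ admits a three-edge bypass in the iteration-$(j+1)$ virtual graph, since then a $v$-to-$w$ walk of length at most $k - \lfloor k/4\rfloor = \lceil 3k/4\rceil$ exists. Before attacking a single block, I first argue that $P^j_{vw}$ stays inside $T_v$: any $G'$-shortcut leaving $T_v$ must re-enter through the unique boundary vertex $v$, which cannot shorten a $v$-to-$w$ path. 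Consequently every internal node $x_i$ is light by \Cref{lem:allLight} and, by \Cref{lem:lightNeverFullSad}, is always either \act or \happy, never \full or \sad.

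To bypass a fixed block, I focus on its middle vertex $y := x_{i+2}$. If $y$ is \happy at the start of iteration $j$ then, by definition, $G_{y \narr u} \subseteq S_y$ for some $u \in N(y)$, and since both $x_i$ and $x_{i+4}$ lie in the light subtree on the away-from-$v$ side of $y$, at least one of them is in $S_y$, which already yields a bypass of length at most two. Otherwise $y$ is \act and, by Steps 1(c)/1(d) of $\clst$, performs $\expo(N(y) \setminus \{d\})$ for a single excluded $G$-direction $d$. The $G$-directions $r_y(x_i)$ and $r_y(x_{i+4})$ are distinct: were they equal, the $G$-path from $x_i$ to $x_{i+4}$ would bypass $y$ entirely, and the resulting $G$-shortcut (which is automatically a $G'$-edge) would contradict that $(x_i, \ldots, x_{i+4})$ is a shortest $G'$-segment of length four. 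Hence $y$ must exponentiate in the direction of at least one of $x_i, x_{i+4}$; say it exponentiates toward $x_i$ via $u := r_y(x_i)$.

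It remains to check that this exponentiation step actually places $x_i$ in $S_y$. Using the $G'$-edges $\{x_i, x_{i+1}\}$ and $\{x_{i+1}, y\}$ together with the symmetrization Step 1(d)ii applied by $y$ and $x_{i+1}$ in earlier iterations, I obtain $x_{i+1} \in S_{y \arr u}$ and $x_i \in S_{x_{i+1} \narr y}$, so the definition of $\expo$ inserts $x_i$ into $S_y$ during iteration $j$. This creates a new $G'$-edge $\{x_i, y\}$ at the start of iteration $j+1$, yielding the three-edge bypass $x_i \to y \to x_{i+3} \to x_{i+4}$ for the block. Concatenating the per-block bypasses across the whole path gives the claimed length bound.

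The main obstacle is the asymmetry of the $S$-relation: the existence of a $G'$-edge $\{a, b\}$ only guarantees $a \in S_b$ or $b \in S_a$, not both, so the identity $\expo$ alone does not immediately promote a two-hop $G'$-walk into a one-hop $G'$-edge. Resolving this rigorously requires careful bookkeeping of which nodes have executed Step 1(d)ii in previous iterations, complemented by a small case analysis covering the possibility that $y$ repeatedly performs Step 1(c) instead of 1(d) (in which case the $G$-neighbor of $y$ on the direction $u$ is already in $S_y$ from the initialization $S_y \leftarrow N(y)$, which is enough to make the $\expo$ step reach $x_i$).
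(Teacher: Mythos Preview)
Your high-level plan (split the path into length-$4$ segments and save one edge per segment) matches the paper's, but the execution has a genuine gap exactly where you flag it in your last paragraph. Two concrete problems. First, the claim that if $r_y(x_i) = r_y(x_{i+4})$ then the $G$-path from $x_i$ to $x_{i+4}$ ``is automatically a $G'$-edge'' is false: a multi-hop $G$-path is not a single $G'$-edge, so your argument for distinct $G$-directions at $y$ does not stand. Second, even granting distinct directions, for $\expo(\{u\})$ with $u = r_y(x_i)$ to pick up $x_i$ you need $x_{i+1} \in S_{y \arr u}$ and $x_i \in S_{x_{i+1} \narr r_{x_{i+1}}(y)}$; the first forces $r_y(x_{i+1}) = r_y(x_i)$ and the second forces $r_{x_{i+1}}(x_i) \neq r_{x_{i+1}}(y)$, and neither tree-direction fact follows from the mere existence of a $G'$-edge plus a possible past run of Step~1(d)ii.

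The paper does not fix this by symmetric bookkeeping; it breaks the symmetry using the orientation of the path from $v$ toward the leaf $w$. It first proves an \emph{oriented} containment: for every consecutive pair, the forward node $x_{i+1}$ lies in $S_{x_i}$. The reason is that $r_{x_i}(x_{i+1})$ points into the light subtree of $x_i$ (size at most $n^{\delta/8}$) and hence can never be in $\fd$; so in whatever iteration $x_{i+1}$ first adds $x_i$, node $x_i$ either exponentiates in that direction or runs Step~1(d)ii, and in both cases $x_{i+1}\in S_{x_i}$. With this one-sided claim in hand the paper makes a short case split on whether $x_2 \in S_{x_3}$: if not, then $r_{x_3}(x_2)$ must have been in $\fd$ (hence points toward the heavy side), so $x_3$ exponentiates toward $x_4$ and, since $x_5 \in S_{x_4}$ by the claim, reaches $x_5$; if yes, $x_3$ exponentiates toward $x_2$ or $x_4$ and reaches $x_1$ or $x_5$ via the claim. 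Your symmetric attack on the middle vertex treats $x_i$ and $x_{i+4}$ interchangeably and so cannot exploit this orientation; the ``careful bookkeeping'' you defer is in fact the heart of the argument, and it hinges on the leaf-ward direction never being full.
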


\begin{proof}
	Consider any subpath $P_{x_1x_5} = \{x_1,x_2,x_3,x_4,x_5\} \subseteq P^j_{vw}$ of length $4$. For $1\leq i\leq 5$, let  $S_{x_i}$ ($S'_{x_i}$) be the memory of node $i$ at the start (end) of iteration $j$. 
    Note that all nodes on the path are light. By \Cref{lem:lightNeverFullSad}, nodes in $P_{x_1x_5}$ never get \full nor \sad, and hence always exponentiate in all but one direction (either $\fd$ or $\ld$). 

\smallskip

\noindent	\textbf{Claim. } For $1 \leq i < 5$ it holds that $x_{i+1} \in S_{x_i}$.

\begin{proof}
	Since edge $\{x_i,x_{i+1}\}$ exists in $G'$, it must be either that either $x_{i+1} \in S_{x_i}$ or $x_{i} \in S_{x_{i+1}}$. In the first case the claim holds, so consider the latter. Since $x_i$ is light, $r_{x_i}(x_{i+1})$ has never been in \fd for $x_i$. Hence, whenever $x_{i+1}$ had added $x_i$ to $S_{x_{i+1}}$, either $x_i$ added $x_{i+1}$ to $S_{x_{i}}$ via exponentiation, or via Step 1(d)ii.
\end{proof}

We continue with proving that the path shortens. It is sufficient to prove that for some $i,j \in [1,5]$, $i \neq j$, it holds that $x_i \in S'_j \setminus S_j$, as this shortens the path between $x_1$ and $x_5$ by one edge. 
	\begin{enumerate}
	
		\item If $x_2 \not\in S_{x_3}$: Since there is an edge in $G'$ such that $x_2 \not\in S_{x_3}$, it means that $x_2$ added $x_3$ to $S_{x_2}$ during some iteration $j$, and since $x_3$ did not add $x_2$ in Step 1(d)ii of iteration $j$, direction $r_{x_3}(x_2)$ must have been in \fd for $x_3$. Hence, $x_3$ will exponentiate in all directions besides $r_{x_3}(x_2)$. In particular, as $r_{x_3}(x_4)\neq r_{x_3}(x_2)$, $x_3$ will exponentiate towards $x_4$. As $x_5\in S_{x_4}$, we obtain $x_5\in S'_{x_3}\setminus S_{x_3}$. This creates an edge between $x_3$ and $x_5$ in $G'$ and shortens the path from $4$ to $3$, i.e., by a factor $3/4$.
		\item If $x_2 \in S_{x_3}$: Assume that $x_1\in S_{x_2}$. Since nodes in $P_{x_1 x_5}$ exponentiate in all but one direction, node $x_3$ will exponentiate either towards $x_2$ or $x_4$ (it must be that $x_4 \in S_{x_3}$ by the claim above). If $x_3$ exponentiates towards $x_4$, we obtain $x_5\in S'_{x_3}\setminus S_{x_3}$ as $x_5\in S_{x_4}$. If $x_3$ exponentiates towards $x_4$, we obtain $x_1\in S'_{x_3}\setminus S_{x_3}$ as $x_1\in S_{x_2}$. If $x_1 \not\in S_{x_2}$, we can apply the analysis of 1. for node $x_2$. 
		\qedhere
	\end{enumerate}
\end{proof}

\lemCorrectnessLemmaLight

\begin{proof}
	Let us adopt the notation of the proof of \Cref{lem:pathShortening}. Since \Cref{lem:pathShortening} holds for any light node $v$, after $j=O(\log \ad)$ iterations it must holds that $|P^j_{vw}| \leq 3$ because $\ad \in [\diam(G),n^{\delta/8}]$. Let the resulting path be $P=\{x_1,x_2,x_3,x_4\}$, where $x_4$ is a leaf node. It must be the case that if $x_2$ learns $S_{x_3 \narr x_2}$ for all possible nodes $x_3$, node $x_2$ becomes happy. Hence, in Step 1(e), $P$ shortens by one. Eventually, after two iterations, $P$ is of length one, and $x_1$ becomes happy.
	
For the second part of the claim it is sufficient to show that heavy nodes never become happy. Recall that heavy nodes are defined as nodes that are not light. Hence, for a heavy node $v$, there does not exist a neighbor $u \in N(v)$ such that $|G_{v \narr u}| \leq n^{\delta/8}$. This implies that during the algorithm, it is not possible for $|S_{v \narr u}| = |G_{v \narr u}| \leq n^{\delta/8}$ for any $u \in N(v)$. Hence, heavy nodes never become happy.
\end{proof}

\begin{proof}[Proof of \Cref{lem:CompressLightSubTrees} (Correctness)]
By \Cref{lem:correctnessLemmaLight}, we know that after $O(\log\ad)$ iterations all light nodes of $G$ become \happy, while all heavy nodes always remain \unhappy. In order to prove the correctness of \Cref{lem:CompressLightSubTrees}, we need to show that all light trees are compressed into the closest heavy node, if a heavy node exists, and that the whole tree is compressed into a single node if there are no heavy nodes. We consider both cases separately. Also consult \Cref{fig:exampleGraphs} for an illustration of both cases.

\textbf{Case 1 (there are heavy nodes).}
Consider a light node $v$. As there are heavy nodes, \Cref{obs:heavyCCandTVunambiguous} implies that there is a unique neighbor $u\in N(u)$ against which $v$ is light. Let $T_v=G_{v\narr u}$.  Now, by \Cref{lem:correctnessLemmaLight}, $v$ is happy at the end of the algorithm, i.e.,  there is a neighbor $u'\in N(v)$ for which $S_{v \narr u'} = G_{v \narr u'}$ and $|G_{v \narr u'}| \leq n^{\delta/8}$. The latter condition says that $v$ is light against $u'$ and due to the earlier discussion we deduce that $u=u'$ and $T_v=G_{v \narr u}=S_{v \narr u}\subseteq S_v$ holds.
In summary, for every light node $v$, the tree $T_v$ (that does not depend on the algorithm) is contained in $S_v$. By \Cref{obs:heavyCCandTVunambiguous}, heavy nodes induce a single connected component, and hence every light node $v$ is contained in the subtree of some light node $v'$ that has a heavy neighbor $u'$. Since we are in a tree, $u'$ is the closest heavy node for $v'$, and in particular, for all light nodes $v \in T_v'$. By  \Cref{lem:correctnessLemmaLight}, $v'$ is happy at the end of the algorithm, and $u'$ remains unhappy. Performing Step 2 fulfills the first claim of \Cref{lem:CompressLightSubTrees}. Step 3 is never performed, since there are no happy nodes left in the graph.

\textbf{Case 2 (all nodes of $G$ are light).} Step 2 of $\clst$ is never performed, since all (light) nodes are happy due to \Cref{lem:correctnessLemmaLight}.  For the sake of analysis, let each node $v$ put one token on each incident edge $\{v,u\}$ for which $G_{v\narr u}\subseteq S_v$ holds.  As all (light) nodes are happy, i.e., there is a neighbor $u$ such that $G_{v\narr u}\subseteq S_v$ holds, the total number of tokens is at least as large as the number of nodes. Since the graph is a tree, at least one edge receives two tokens. Let $\{u,v\}$ be such an edge and observe that $G_{v\narr u}\subseteq S_v$ and $G_{u\narr v}\subseteq S_u$. Due to \Cref{lem:newLemmaThreeFour}, $G_{v\narr u}\cup G_{u\narr v}=G$ and after Step 3 of \clst both nodes have the complete tree in their memory and both nodes trigger a compression of the whole tree into the largest ID node. 
  
The edge $\{u,v\}$ with the above properties is not unique, but after Step 3, the endpoints of any edge having these properties yield the exact same compression.
\end{proof}

\subsubsection{Local and Global Memory Bounds} \label{sssec:memoryBounds}

The most difficult part is proving the memory bounds when there are heavy nodes. If there were no memory limitation, \Cref{lem:lPhases,lem:lConstant} (building up on versions of \Cref{lem:CompressLightSubTrees,lem:CompressPaths,lem:Decompress} without memory limitations) already imply that after $O(1)$ phases of \cc, there is exactly one node left in the graph. Denote this node by $r$. Node $r$ has never been compressed by definition. For the sake of analysis, we assume a rooting of $G$ at $r$. We emphasize that fixing a rooting is only for analysis sake, and we do not assume that the tree is actually rooted beforehand. We define $T(v,r)$ as the subtree rooted at $v$ (including $v$ itself), as if tree $G$ was rooted at $r$. 

\begin{observation}
    Consider tree $G$ with at least one heavy node during an arbitrary iteration of \clst. For every light node $v$, it holds that $T(v,r)=T_v$, and every heavy node $u$ has a unique subtree $T(u,r)$. 
\end{observation}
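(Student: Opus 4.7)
The observation splits into two halves; the one about heavy nodes is essentially immediate, since $G$ is a tree and once it is rooted at a specified vertex $r \in V(G)$ every vertex has a well-defined, unique rooted subtree. The only preliminary check is that $r$ is actually present in the current $G$, but this is guaranteed by definition: $r$ is never compressed during any phase of $\cc$, so it survives into every intermediate graph, in particular into the graph on which this call of $\clst$ operates. The substantive content is therefore the equality $T(v,r) = T_v$ for light $v$, which I now sketch.

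Since $G$ contains a heavy node, \Cref{obs:heavyCCandTVunambiguous} guarantees that each light node $v$ is light against a unique neighbor $u$, so $T_v = G_{v \narr u}$ is unambiguously defined, and by \Cref{lem:allLight} every node of $T_v$ is itself light. My strategy is to show that $r \notin T_v$, whence pure tree geometry does the rest: the unique $v$-to-$r$ path in $G$ must then pass through $u$, making $u$ the parent of $v$ in the rooting at $r$, and hence $T(v,r) = G_{v \narr u} = T_v$ by definition of the rooted subtree.

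To establish $r \notin T_v$ I would prove the stronger claim that $r$ is itself heavy in the current $G$. This is a short bookkeeping step using \Cref{lem:CompressLightSubTrees}: since $G$ has at least one heavy node, the current call of $\clst$ compresses \emph{every} light node of $G$ into the nearest heavy node; were $r$ light, it would be compressed during this very phase, contradicting the defining property that $r$ is never compressed in any phase of $\cc$. Hence $r$ is heavy, and since $T_v$ consists entirely of light nodes, $r \notin T_v$ as required.

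The only real obstacle is this last bridge, linking the \emph{global} definition of $r$ as the surviving node after $O(1)$ phases of $\cc$ to the \emph{local} heavy/light classification in the current iteration of $\clst$; everything else is either tree geometry or a direct appeal to the structural lemmas already proved in \Cref{ssec:ccDefs}.
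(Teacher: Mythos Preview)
Your proof is correct and, in fact, more detailed than what the paper provides: the paper states this observation without proof, treating it as self-evident once $r$ has been introduced as the node that survives all phases. Your argument---show $r$ is heavy in the current $G_i$ (else it would be compressed in this very phase, contradicting its defining property), conclude $r \notin T_v$ since $T_v$ is all light, then read off $T(v,r) = G_{v \narr u} = T_v$ from tree geometry---is exactly the reasoning the paper leaves implicit.

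One small remark on hygiene: you invoke \Cref{lem:CompressLightSubTrees} to argue that light nodes are compressed, and this observation is itself used in establishing the memory half of that same lemma. This is not circular, because you only need the \emph{correctness} half of \Cref{lem:CompressLightSubTrees} (proved in \Cref{sssec:progressAndCorrectness} without any appeal to the present observation), and the paper is explicit that the definition of $r$ relies on memory-free versions of the subroutine lemmas. It would strengthen your write-up to flag this explicitly, but the logic is sound as it stands.
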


Recall that the definition of $T_v$ for a light node $v$ was independent from any algorithmic treatment. Still, it holds that $T_v$ equals $T(v,r)$ (that depends on our algorithm as the node $r$ depends on it).
The next lemma states that a node only exponentiates into the direction of root $r$ if it is safe to do so in terms of memory constraints. In spirit, it is very similar to \Cref{lem:lightExpo}, with the slight difference that it applies to all nodes, and we prove the claim using a hypothetical rooting of the tree.

\begin{lemma} \label{lem:memoryDirectionExponentiation} 
    Let $v$ be any node with a parent $u$ (according to the hypothetical rooting at $r$). If in some iteration of \clst when there are heavy nodes, node $v$ exponentiates in the direction of $u$, i.e., it performs $\expo(X)$ with $u \in X$, the size of the resulting set $S_{v \arr u}$ is bounded by $|T(v,r)| \cdot \ad$.
\end{lemma}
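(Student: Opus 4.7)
The plan is a short case split on whether $|T(v,r)|$ lies above or below the threshold $n^{\delta/8}$ that drives the light/heavy dichotomy of the algorithm. A useful preliminary identity: because $u$ is the parent of $v$ in the rooting at $r$, we have $G_{v\narr u}=T(v,r)$ exactly, so the target inequality is just $|S_{v\arr u}|\leq |G_{v\narr u}|\cdot \ad$.

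First suppose $|T(v,r)|\geq n^{\delta/8}$. Since $v$ does exponentiate in the direction of $u$, the rules of Steps~1(c)--(d) of $\clst$ force $u\notin \fd$ (and, in the subcase $|\fd|=0$, also $u\neq \ld$); otherwise $v$ would either become \sad or skip $u$ in the call to $\expo$. Applying \Cref{lem:mainProbing}, the post-exponentiation size satisfies $|S_{v\arr u}|\leq n^{\delta/8}\cdot \ad$. The guarantee from the lemma is formally about the hypothetical ``exponentiate in all directions'' experiment, but only the summand indexed by $u\in X$ in the definition of $\expo(X)$ contributes to $S_{v\arr u}$, so the same bound holds for the actual partial exponentiation. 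Combining with $n^{\delta/8}\leq |T(v,r)|$ immediately yields $|S_{v\arr u}|\leq |T(v,r)|\cdot \ad$.

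Now suppose $|T(v,r)|< n^{\delta/8}$, i.e., $|G_{v\narr u}|<n^{\delta/8}$. By \Cref{def:lightheavy}, $v$ is then light against $u$. Under the standing hypothesis that the current graph contains a heavy node, \Cref{obs:heavyCCandTVunambiguous}(ii) states that every light node is light against exactly one neighbor, so that neighbor must be $u$ and $T_v=G_{v\narr u}=T(v,r)$. Invoking the already proved \Cref{lem:lightExpo} then gives the stronger bound $|S_{v\arr u}|\leq |T_v|=|T(v,r)|\leq |T(v,r)|\cdot \ad$ since $\ad\geq \diam(G)\geq 1$.

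The main obstacle will be in the first case: one has to be careful to translate the guarantee of \Cref{lem:mainProbing}, stated for the full-direction thought experiment, into a bound on the actual $S_{v\arr u}$ produced by $\expo(X)$, and to cleanly relate the algorithm's local decision of not exponentiating into $\fd$ (or into $\ld$ when $\fd=\emptyset$) with the hypothesis ``$v$ exponentiates toward $u$''. Once this bookkeeping is in place, the two cases combine into the stated bound without further work.
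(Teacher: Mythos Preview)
Your proof is correct, but it is organized differently from the paper's. The paper does not split on the size of $|T(v,r)|$; instead it argues directly that, since $v$ exponentiates toward its parent $u$, the single excluded direction $w$ (either the unique element of $\fd$, or $\ld$ when $\fd=\emptyset$) must be a child of $v$ and hence lie in $T(v,r)$. It then cases on whether $w=\fd$ or $w=\ld$: in the former, $|G_{v\arr w}|>n^{\delta/8}$ forces $|T(v,r)|>n^{\delta/8}$ and \Cref{lem:mainProbing}(i) gives $|S_{v\arr u}|\leq n^{\delta/8}\cdot\ad<|T(v,r)|\cdot\ad$; in the latter, \Cref{lem:mainProbing}(ii) gives $|S_{v\arr u}|\leq |S_{v\arr w}|\leq |G_{v\arr w}|\leq |T(v,r)|$, which is in fact tighter than the claimed bound. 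Your approach instead thresholds on $|T(v,r)|$ versus $n^{\delta/8}$ and, in the small case, delegates to \Cref{lem:lightExpo} rather than redoing its internal case analysis. This is more modular and reuses an already-proved lemma, while the paper's version is shorter and self-contained, and extracts the sharper bound $|T(v,r)|$ (without the $\ad$ factor) in the $\ld$ case.
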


\begin{proof}
    If $v$ performs $\expo(X)$ such that $u \in X$, there must exists either $w = \fd$ or $w = \ld$ such that $w \in T(v,r)$. If $w = \fd$, it implies that $v$ is heavy, and by \Cref{lem:mainProbing} $(i)$, we have that $|S_{v \arr u}| \leq n^{\delta/8} \cdot \ad < |T(v,r)| \cdot \ad$. If $w = \ld$, by \Cref{lem:mainProbing} $(ii)$, we have that $|S_{v \arr u}| \leq |G_{v \arr w}| \leq |T(v,r)|$.
\end{proof}

\begin{observation} \label{obs:EXPduplicates}
    When node $v$ performs an exponentiation step, multiple nodes $w$ can send the same node to $v$, resulting in duplicates in set $S_v$. After every exponentiation step, node $v$ has to locally remove these duplicates. As a result, when bounding the memory of a node, we have to take into account the momentary spike in global memory due to duplicates. This momentary spike can at most result in an $\ad$ factor increase in the memory bounds. 
\end{observation}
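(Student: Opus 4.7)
The plan is to bound, for a single exponentiation $\expo(X)$ executed at $v$, how many times any fixed node $x$ can appear in the multiset $\bigcup_{u\in X}\bigcup_{w\in S_{v\arr u}} S_{w\narr r_w(v)}$ before duplicates are removed. The ratio between this pre-deduplication multiset size and the final deduplicated $|S_v|$ is exactly the transient ``spike'' factor that needs to be controlled, and I want to show it is $O(\ad)$.

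The key structural observation is that $x \in S_{w\narr r_w(v)} \subseteq G_{w\narr r_w(v)}$ forces $w$ to lie on the unique $v$--$x$ path in the tree $G$. Indeed, by definition of $G_{w\narr r_w(v)}$, the shortest path from $w$ to $x$ does not pass through $r_w(v)$, which is the neighbor of $w$ on the path to $v$; hence the shortest $v$--$x$ path in $G$ must traverse $w$. Consequently, the set of candidate ``senders'' $w$ of $x$ during one exponentiation step at $v$ is contained in the vertex set of the $v$--$x$ path in $G$, which has at most $\dist_G(v,x)+1 \leq \diam(G)+1 \leq \ad+1$ vertices.

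From this, each element of the deduplicated $S_v$ appears at most $O(\ad)$ times in the pre-deduplication multiset, and summing over all nodes simultaneously performing an exponentiation in the same round, the transient pre-deduplication global multiset is at most $O(\ad)$ times the post-deduplication global memory. This is precisely the $\ad$-factor overhead that the bound in \Cref{lem:rootingGlobalMemory} is set up to absorb.

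The only subtlety I anticipate is ensuring that distances here are measured in the original tree $G$, rather than in the virtual graph on which the progress analysis of \Cref{lem:pathShortening} takes place: since $S_{w\narr r_w(v)}$ is defined as a subset of $G_{w\narr r_w(v)}$, the relevant bound is indeed $\diam(G)\leq \ad$, so the argument does not feed back circularly into the growing memory sets. Modulo this caveat, the proof is a direct counting argument over vertices of the unique $v$--$x$ path in $G$.
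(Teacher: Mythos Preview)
Your proposal is correct and follows essentially the same argument as the paper: both show that if $x$ is received by $v$ from some $w$ (via $S_{w\narr r_w(v)}$), then $w$ must lie on the unique $v$--$x$ path in $G$, whose length is at most $\ad$. Your write-up is slightly more explicit about why $w$ must lie on this path, but the idea is identical.
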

\begin{proof}
     When node $v$ exponentiates, nodes $w$ send $S_{w \narr r_w(v)}$ and not $S_w$. Consider node $x$ that $v$ has received via exponentiation, and consider the unique path $P_{vx}$ between $v$ and $x$. Since only nodes $w\in P_{vx}$ could have sent $x$ to $v$, and $|P_{vx}|\leq D \leq \ad$, $x$ has at most $\ad$ duplicates in $S_v$.
\end{proof}

\begin{lemma} \label{lem:rootingGlobalMemory}
	In \clst, the global memory never exceeds $O(n \cdot \ad^3)$.
\end{lemma}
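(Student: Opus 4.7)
The plan is to split the analysis into two cases depending on whether $G$ contains a heavy node. If not, then by \Cref{cor:sizeOfGraph} we have $|V(G)|\leq 2n^{\delta/2}$, and since every $S_v$ stays within $2n^{\delta/4}$ by the \full-cap enforced in Step~1, the global memory is trivially $O(n^{3\delta/4})=O(n\cdot \ad^3)$. The interesting case is when a heavy node exists, so from here on I would assume this and fix the hypothetical rooting at the node $r$ that survives all $O(1)$ phases of $\cc$ (whose existence is guaranteed by the memory-free version of \Cref{lem:lPhases,lem:lConstant}), using the notation $T(v,r)$ for the subtree below $v$.

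The core of the proof is to bound the contribution of each node $v$ to the global memory. I would first separate the two mechanisms that enlarge $S_v$: the $\expo$ steps (1(c), 1(d)i and the final updates in Steps~1(e), 2, 3), and the symmetric back-insertion of Step~1(d)ii. To handle the $\expo$ contributions, define
\[C_v \;=\; \text{all nodes ever added to } S_v \text{ by } \expo,\]
and further split $C_v = C_{v\arr u}\cup\bigcup_{w\in N(v)\setminus u} C_{v\arr w}$, where $u$ is the parent of $v$ in the rooting at $r$. For the non-parent directions, each $C_{v\arr w}$ is contained in $G_{v\arr w}$, which is a subset of $T(v,r)$; summing over all such $w$ still keeps the total bounded by $|T(v,r)|$. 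For the parent direction, I would invoke \Cref{lem:memoryDirectionExponentiation} applied to the \emph{last} iteration in which $v$ exponentiated in the direction of $u$ (earlier sets are subsumed), which gives $|C_{v\arr u}|\leq |T(v,r)|\cdot \ad$. Thus $|C_v|\leq (1+\ad)\,|T(v,r)|$.

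Next I would account for Step~1(d)ii: whenever $v$ inserts some $w$ into $S_v$ via $\expo$, this step may force $w$ to insert $v$ into $S_w$. This at most doubles the total count, so $\sum_v|S_v|\leq 2\sum_v|C_v|$. Combining with the per-node bound and applying \Cref{lem:globalmemory} with $D\leq \ad$,
\[\sum_{v\in V}|S_v| \;\leq\; 2(1+\ad)\sum_{v\in V}|T(v,r)| \;\leq\; 2(1+\ad)(\ad+1)n \;=\; O(n\cdot \ad^2).\]
Finally, \Cref{obs:EXPduplicates} contributes another factor of $\ad$, since at the moment an exponentiation is received each incoming vertex can be reported by up to $\ad$ nodes along its path to $v$ before local deduplication, producing a transient peak of $O(n\cdot \ad^3)$ in global memory.

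The main obstacle I anticipate is justifying the ``last iteration'' argument for $C_{v\arr u}$ rigorously: one must check that the set added to $S_v$ in any \emph{earlier} iteration in which $v$ exponentiated toward $u$ is contained in the set added in the final such iteration, which requires that $S_{v\arr u}$ only grows over time and that \pro does not alter $S_v$ (guaranteed by \Cref{lem:mainProbing}). Handling the peak caused by duplicates also needs care, because it must be bounded \emph{before} local deduplication per round, not after; one must argue that each duplicate of a vertex $x$ in $S_v$ corresponds to a distinct sender on the unique path from $v$ to $x$, whose length is at most $\diam(G)\leq \ad$, yielding the claimed extra $\ad$ factor.
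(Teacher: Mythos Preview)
Your proposal is essentially the paper's own proof: the case split on the existence of a heavy node, the definition of $C_v$ as the $\expo$-only additions, the decomposition into the parent direction (bounded via \Cref{lem:memoryDirectionExponentiation} at the last such iteration) versus the non-parent directions (bounded by $|T(v,r)|$), the factor-$2$ for Step~1(d)ii, the application of \Cref{lem:globalmemory}, and the final $\ad$ factor from \Cref{obs:EXPduplicates} all match exactly.

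One small omission: you do not separately account for the global memory consumed \emph{during} Step~1(a) itself. The subroutine $\pro$ may perform a temporary $\expo(N(v))$ in its Step~3(b); this does not alter the persistent $S_v$ (as you note), but it does incur a transient memory cost that your $C_v$ analysis does not cover. The paper handles this in one line by invoking the global-memory clause of \Cref{lem:mainProbing}, which already asserts $O(n\cdot\ad^3)$ for $\pro$; you should add that sentence.
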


\begin{proof}

    The global memory $O(n \cdot \ad^3)$ of \pro (Step 1(a)) follows from \Cref{lem:mainProbing}. Hence, we analyze the global memory excluding Step 1(a).
    
    When all nodes are light, by \Cref{cor:sizeOfGraph}, the size of the graph is $\leq 2n^{\delta/8}$. When taking duplicates into account (\Cref{obs:EXPduplicates}), since $\ad\leq n^{\delta/8}$, even if the whole graph is in the local memory of every node, this does not violate global memory constraints. For the rest of the proof assume that there is at least one heavy.   
    
    Consider an arbitrary iteration $j$ of the algorithm when there are heavy nodes. Define set $C_v \subseteq S_v$ as the set of nodes that $v$ has added to $S_v$ as a result of performing $\expo$ in all iterations up to iteration $j$. Let $u$ be the parent of $v$ (according to the hypothetical rooting at $r$). For that $u$, define $C_{v \arr u} \coloneqq C_v\cap G_{v\arr u}$. For a node $v$, we have
	\begin{align*}
		|C_v| &\leq |C_{v \arr u}| + \sum_{w \in N(v) \setminus u} |S_{v \arr w}| \leq |T(v,r)|\cdot \ad + |T(v,r)| = (1 + \ad) |T(v,r)|
	\end{align*}

	The bound on $|C_{v \arr u}|$ is obtained by applying \Cref{lem:memoryDirectionExponentiation} for the last iteration where $v$ has exponentiated in the direction of $u$, and the bound on the sum is by the definition of $T(v,r)$. Observe the crucial difference between $S_v$ and $C_v$. Set $S_v$ may contain nodes that are \emph{not} a result of $v$ performing $\expo$, but rather the result of Step 1(d)ii of the algorithm, where some other node $w$ has added $v$ to $S_w$. However, this can result in at most a factor-2 overcounting for every node. Combining this with the duplicates of \Cref{obs:EXPduplicates} results in global memory
	\begin{align*}
	   \ad \cdot \sum_{v \in V} |S_v| = \ad \cdot \sum_{v \in V} 2 |C_v| \leq \ad \cdot \sum_{v \in V} 2 (1+\ad) |T(v,r)| = O(n \cdot \ad^3),
	\end{align*}
	where the bound on $\sum_{v \in V} |T(v,r)|$ is due to \Cref{lem:globalmemory}.
\end{proof}

\begin{lemma} \label{lem:rootingLocalMemory}
	In \clst, the local memory of a node $v$ never exceeds $O(n^\delta)$.
\end{lemma}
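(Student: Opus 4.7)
The plan is to track $|S_v|$ throughout \clst by induction over the iterations and a case analysis over the algorithm steps, and to show $|S_v|$ is always $O(n^{\delta/2}) = O(n^\delta)$.

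First, I would analyze the initialization: the algorithm sets $S_v \leftarrow N(v)$ and immediately marks every node with $|S_v| > n^{\delta/8}+1$ as \sad. Since \sad nodes never grow their set thereafter, every still-\act node satisfies $\deg(v) = |S_v| \leq n^{\delta/8}+1$. The memory of the high-degree \sad nodes is part of the initial input representation and is charged to the input distribution, not to algorithmic growth. By the definition of the \full state, the induction hypothesis at the start of each iteration is that every \act node has $|S_v| < 2n^{\delta/4}$.

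Second, for the exponentiation steps 1(c) and 1(d)i, I would apply \Cref{lem:mainProbing} directionwise: in each direction $u$ that $v$ actually exponentiates into, the updated $|S_{v \arr u}|$ is at most $n^{\delta/8}\cdot \ad \leq n^{\delta/4}$, while for the single skipped direction (the unique element of \fd, or $\ld$ when $\fd = \emptyset$) $|S_{v \arr u}|$ is unchanged and still at most $2n^{\delta/4}$. Summing across the at most $n^{\delta/8}+1$ directions of $v$ gives
\[
|S_v|^{\text{new}} \;\leq\; \deg(v)\cdot n^{\delta/4} + 2n^{\delta/4} \;=\; O(n^{3\delta/8}).
\]
By \Cref{obs:EXPduplicates}, message duplicates along distinct paths can inflate the received payload by at most a factor $\ad = n^{\delta/8}$ before local deduplication, producing a transient peak of $O(n^{3\delta/8}\cdot n^{\delta/8}) = O(n^{\delta/2})$, still well within $O(n^\delta)$.

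The main obstacle is Step 1(d)ii, which adds to $S_v$ every $w$ with $v \in S_w$: the probing controls what $v$ knows in each of its own directions but gives no immediate bound on how many $w$ contain $v$ in their memory. My plan is to mirror the global-memory argument of \Cref{lem:rootingGlobalMemory} on a per-node basis. Fix the hypothetical rooting at the node $r$ left after the $O(1)$ phases, and for each $u \in N(v)$ group the contributing $w$ by $w \in G_{v \arr u}$. Each such $w$ occupies one slot of $S_{w \arr r_w(v)}$, a set whose size is bounded by $n^{\delta/8}\cdot \ad$ via \Cref{lem:memoryDirectionExponentiation}, and these slots can be charged to the rooted subtrees using \Cref{lem:globalmemory}; summed over the $O(n^{\delta/8})$ directions of $v$, this contributes an additional $O(n^{\delta/2})$. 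Combining with the bound from Step 1(c)/1(d)i yields $|S_v| = O(n^{\delta/2}) = O(n^\delta)$. Once $v$ becomes \full, \sad, or \happy it stops participating, so $|S_v|$ freezes at the value just bounded and the claim persists to the end of \clst.
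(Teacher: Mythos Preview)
Your overall strategy---case analysis over the steps of an iteration, using the Probing \Cref{lem:mainProbing} to bound each exponentiated direction by $n^{\delta/8}\cdot\ad$, and accounting for the $\ad$-factor duplicate spike via \Cref{obs:EXPduplicates}---matches the paper. Your treatment of Steps 1(c) and 1(d)i is essentially the paper's argument. However, two genuine gaps remain.

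\textbf{Step 1(d)ii.} Your plan to ``mirror the global-memory argument on a per-node basis'' does not work as written. \Cref{lem:globalmemory} bounds $\sum_{w\in V}|T(w,r)|$, a sum over all nodes; it gives no control over how many $w$ contain a \emph{fixed} $v$ in their memory. Likewise, \Cref{lem:memoryDirectionExponentiation} bounds $|S_{w\to \mathrm{parent}(w)}|$ by $|T(w,r)|\cdot\ad$, which neither matches $S_{w\to r_w(v)}$ in general (since $r_w(v)$ need not be $w$'s parent) nor yields the per-direction constant $n^{\delta/8}\cdot\ad$ you claim. Charging ``one slot per $w$'' to rooted subtrees is a global-memory accounting device; it cannot be localized to a single $v$. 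The paper's route is much more direct: since $v$ performs Step 1(d) only when $\fd=\emptyset$, \Cref{lem:mainProbing}(ii) guarantees that a \emph{hypothetical} exponentiation of $v$ in direction $\ld$ would yield at most $n^{\delta/8}\cdot\ad$ nodes, and the symmetrization in 1(d)ii cannot add more than that in the $\ld$ direction (cf.\ the proof of \Cref{lem:lightNeverFullSad}). Combined with the already-bounded directions, this gives $|S_v|\le \deg(v)\cdot n^{\delta/8}\cdot\ad$.

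\textbf{Step 1(e).} You omit this step entirely. Your sentence ``once $v$ becomes \happy it stops participating, so $|S_v|$ freezes'' hides the fact that $|S_v|$ \emph{grows} during Step 1(e) before $v$ becomes \happy: $v$ pulls in the subtrees $T_{w,r_w(v)}$ from nodes $w\in S_v$. The paper handles this with a separate one-line bound: when $v$ becomes \happy against $u$, $|S_v|\le |T(v,r)|+|S_{v\to u}|\le n^{\delta/8}+2n^{\delta/4}$. You need an analogous argument.

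Finally, two minor omissions relative to the paper: you do not treat Step 1(a) (the paper defers to \Cref{lem:mainProbing}) and you do not separate out the case where the tree has no heavy nodes (handled via \Cref{cor:sizeOfGraph}). Both are easy, but should be mentioned.
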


\begin{proof}
    The local memory of \pro  (Step 1(a)) follows from \Cref{lem:mainProbing}. Hence, we analyze the local memory excluding Step 1(a).

	   When all nodes are light, by \Cref{cor:sizeOfGraph}, the size of the graph is $\leq 2n^{\delta/8}$. When taking duplicates into account (\Cref{obs:EXPduplicates}), since $\ad\leq n^{\delta/8}$, even if the whole graph is in the local memory of every node, this does not violate global memory constraints. For the rest of the proof assume that there is at least one heavy.
	   
	   Consider the start of an arbitrary phase $i$. If $\deg(v) > n^{\delta}$, we defer the discussion to \Cref{lem:MPCdetails} on the \mpc implementation details. Assuming $\deg(v) \leq n^{\delta}$, we prove the claim by induction. During the algorithm, the size of the local memory is at most of order $|S_v| \cdot \ad$ (the extra $\ad$ factor is due to \Cref{obs:EXPduplicates}). The claim clearly holds in the first iteration when $S_v$ is initialized as $N(v)$. Observe that if $\deg(v) > n^{\delta/8}+1$, node $v$ becomes $\mathsf{sad}$. Hence, we can further assume that $\deg(v) \leq n^{\delta/8}+1$. Assume the claim holds in iteration $j$. We perform a case distinction on the different changes of $S_v$, and show that for a node $v$, it holds that $|S_v|=O(n^{7\delta/8})$, implying that $|S_v| \cdot \ad = O(n^\delta)$ since $\ad \leq n^{\delta/8}$.
	\begin{itemize}
		\item Step 1(c) and $|\fd|=1$, \vspace{1mm} \\ $|S_v|$ becomes at most $|S_{v \arr \fd}| + (\deg(v)-1) \cdot n^{\delta/8} \cdot \ad \leq 2n^{\delta/4} + (n^{\delta/8})^3 < n^{7\delta/8}$. The term $|S_{v \arr \fd}|$ has a (loose) upper bound of $2n^{\delta/2}$, since $v$ is not full. Observe that exponentiating in all directions except $\fd$ yields $\leq n^{\delta/8} \cdot \ad$ nodes per direction by \Cref{lem:mainProbing} $(i)$, and that $\ad \leq n^{\delta/8}$ by assumption.
		\item Step 1(d) and $|\fd|=0$, \vspace{1mm} \\ $|S_v|$ becomes at most $\deg(v) \cdot n^{\delta/8} \cdot \ad \leq (n^{\delta/8}+1) \cdot (n^{\delta/8})^2 < n^{7\delta/8}$. Observe that exponentiating in any direction yields $\leq n^{\delta/8} \cdot \ad$ nodes per direction by \Cref{lem:mainProbing} $(ii)$ ($\fd$ is empty), and that $\ad \leq n^{\delta/8}$ by assumption.
		\item Step 1(e), \vspace{1mm}
		\\ If a node $v$ becomes happy against $u$, $|S_v|$ becomes $|T(v,r)|+|S_{v \arr u}| \leq n^{\delta/8} + 2n^{\delta/4} < n^{7\delta/8}$, where $n^{\delta/8}$ is an upper bound for $|T(v,r)|$ since it is light, and $2n^{\delta/4}$ is (loose) upper bound on $|S_{v \arr u}|$ since $v$ is not full.
	\end{itemize}
	
	Hence, the claim holds in iteration $j+1$.
\end{proof}

\begin{proof}[Proof of \Cref{lem:CompressLightSubTrees} (Memory bounds)]
The local memory bounds follow from \Cref{lem:rootingLocalMemory}, and the global memory bounds follow from   \Cref{lem:rootingGlobalMemory}.
\end{proof}

\subsubsection{Probing} \label{sssec:probing}

Our probing procedure is an integral part of \clst, as it steers the exponentiation of every node such that, informally, a node never learns a (significantly) larger neighborhood in the direction of the root (which is imagined only for the analysis), than in the direction of its subtree.

\lemMainProbing*

\falgo{$\pro(\ad)$}{
	\item For every neighbor $u \in N(v)$, compute 
	$B_{v \arr u} = \sum_{w \in S_{v \arr u}} |S_{w \narr r_w(v)}|$.
	
	\item Define $\fd \coloneqq \{ u \in N(v) \mid B_{v \arr u} \geq n^{\delta/8} \cdot \ad \}$.
	
	\item If $|\fd|>0$, define $\ld \coloneqq \emptyset$. Otherwise, let $u_{\max} = \argmax_{u \in N(v)} \{B_{v \arr u}\}$ and if $B_{v \arr u_{\max}} \geq \ad \cdot B_{v \arr u'}$ for all $u' \in N(v) \setminus u_{\max}$
	\begin{enumerate}
		\item define $\ld \coloneqq u_{\max}$,
		\item otherwise,  perform $\expo(N(v))$ and define $\ld \coloneqq  \argmax_{u \in N(v)} \{ |S_{v \arr u}|\}$.
	\end{enumerate}

	\item Return $\fd, \ld$.
}

\begin{lemma} \label{lem:DApprox}
	If a node $v$ were to perform $\expo(u)$ for a neighbor $u \in N(v)$, it would hold that $B_{v \arr u} / \ad \leq |S_{v \arr u}| \leq B_{v \arr u}$.
\end{lemma}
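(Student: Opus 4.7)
I would handle the upper bound first: by the definition of $\expo(u)$, performing this exponentiation replaces $S_v$ by $U \coloneqq \bigcup_{w \in S_{v \arr u}} S_{w \narr r_w(v)}$, and every element of every set appearing in this union lies in $G_{v \arr u}$ (any path in the tree from $v$ into such a set must pass through $u$ first). Hence the new $S_{v \arr u}$ equals $U$, and the size of a union is at most the sum of the cardinalities of its pieces, which is exactly $B_{v \arr u}$.

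For the lower bound, the plan is to apply a double-counting identity
\[
B_{v \arr u} \;=\; \sum_{w \in S_{v \arr u}} |S_{w \narr r_w(v)}| \;=\; \sum_{x \in U} \bigl|\{\, w \in S_{v \arr u} : x \in S_{w \narr r_w(v)} \,\}\bigr|,
\]
so that it suffices to bound the multiplicity on the right-hand side by $\ad$ for every $x \in U$; this would give $B_{v \arr u} \leq \ad \cdot |U| = \ad \cdot |S_{v \arr u}|$.

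The key structural observation I would use is that $x \in S_{w \narr r_w(v)}$ forces $w$ to be a strict intermediate vertex on the unique tree path from $v$ to $x$. Unfolding definitions, $x \in S_{w \narr r_w(v)}$ means $x \in S_w$ together with $x \notin G_{w \arr r_w(v)}$, so $x$ lies in a subtree of $w$ not containing $v$, which in a tree is equivalent to $w$ being on the path from $v$ to $x$ together with $w \neq x$. The constraint $w \in S_{v \arr u}$ additionally excludes $w = v$. Consequently the candidate $w$'s form a subset of the internal vertices of the $v$-to-$x$ path, of which there are at most $\dist_G(v,x) - 1 \leq \diam(G) - 1 < \ad$, using the hypothesis $\ad \geq \diam(G)$.

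The main thing to nail down will be the invariant $y \notin S_y$ for every node $y$ throughout $\clst$, which I need in order to exclude $w = x$ in the step above. I plan to prove it by induction over the iterations: the invariant holds at initialization since $S_y = N(y)$; no call $\expo(X)$ executed by $y$ can introduce $y$ into $S_y$, because every set $S_{z \narr r_z(y)}$ used in such a call omits the nodes of $G_{z \arr r_z(y)}$, and $y \in G_{z \arr r_z(y)}$ for every $z \in S_{y \arr u'}$; the symmetrization in Step 1(d)ii only adds a neighbor of $y$ to $S_y$, never $y$ itself; and the happy-step 1(e) unions in subtrees $T_{w, r_w(y)} = G_{w \narr r_w(y)}$ which also exclude $y$. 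With this invariant the combinatorial bound closes the proof.
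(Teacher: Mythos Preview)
Your argument is correct and follows the same double-counting approach as the paper: both show that a fixed node $x$ can be contributed to the sum $B_{v\arr u}$ only by nodes $w$ lying on the unique $v$--$x$ path (excluding $v$), of which there are at most $\ad$. Note that your invariant $y\notin S_y$ is extra work you do not actually need---even allowing $w=x$, the multiplicity is still bounded by $\dist(v,x)\le D\le \ad$, which already yields the lemma.
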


\begin{proof}
	Recall the definition of $B_{v \arr u}$. Let us compute how many times a node $w$ in $B_{v \arr u}$ can be overcounted. Consider the unique path $P_{vw}$ from $v$ to a node $w$. Observe that out of the nodes in $S_{v \arr u}$, node $w$ is in  $S_{x \narr v}$ only for nodes $x \in P_{vw}$. Since $|P_{vw}| \leq D + 1 \leq \ad+1$, any node $w$ is overcounted at most $\ad$ times, completing the proof.
\end{proof}

\begin{proof}[Proof of \Cref{lem:mainProbing}]
	Combining the condition $|B_{v \arr u}| \geq n^{\delta/8} \cdot \ad$ of Step 2 and the $\ad$-factor overcounting of \Cref{lem:DApprox} proves the properties of $\fd$. The properties of $\ld$ hold by definition: in the case of Step 3(a), $u_{\max}$ is the largest direction by \Cref{lem:DApprox}, and in the case of Step 3(b), we exponentiate and find the absolute values. Local memory is respected in Step 1, since node $v$ only aggregates an integer from every other node in $S_v$. More importantly, it is respected in Step 3: a node performing $\pro$ has $\deg(v)<n^{\delta/8}+1$ (otherwise it is sad), $\expo(N(v))$ is only performed if all directions yield $\leq n^{\delta/8} \cdot \ad$ nodes (\fd is empty), and we are promised that $\ad \leq n^{\delta/8}$.
	
	Global memory is respected by a clever observation similar to \Cref{lem:rootingGlobalMemory}. Similarly to \Cref{sssec:memoryBounds}, assume we have a rooting at some node $r$, and that node $u$ is the parent of $v$. We want to bound the size of the resulting set $S_{v \arr u}$ if node $v$ performs $\expo(N(v))$. In particular, we want to show that  $|S_{v \arr u}| \leq |S_{v \arr w}| \cdot \ad^2 \leq |T(v,r)| \cdot \ad^2$ for \emph{some} node $w \in N(v) \setminus u$. Towards contradiction, assume that $|S_{v \arr u}| > |S_{v \arr w}| \cdot \ad^2$ for all $w \in N(v) \setminus u$. It must then hold that
	\begin{align*}
	   B_{v \arr u} \geq |S_{v \arr u}| > |S_{v \arr w}| \cdot \ad^2 \geq B_{v \arr w}/\ad \cdot \ad^2 = B_{v \arr w} \cdot \ad
	\end{align*}
	
	for all $w \in N(v) \setminus u$ by \Cref{lem:DApprox}. However, this implies that $u$ would have been chosen as $u_{\max}$, \ld would have been defined as $u_{\max}$, and $\expo(N(v))$ would have never been performed; we have arrived at a contradiction. It holds that $|S_{v \arr u}| \leq |T(v,r)| \cdot \ad^2$, which bounds set $S_v$ of every node by $(\ad^2 + 1) \cdot |T(v,r)|$, and by \Cref{lem:globalmemory}, the global memory is bounded by
	\begin{align*}
	    \sum_{v \in V} |S_v| \leq (\ad^2 +1) \sum_{v \in V} |T(v,r)| \leq (\ad^2 +1) (D+1) \cdot n = O(n \cdot \ad^3).
	\end{align*}

	Regarding \mpc implementation, $\pro$ only performs $\expo(N(v))$ (implementability proven in \Cref{lem:MPCdetails}) and computes $B_{v \arr u}$, which is only a modified version of $\expo(N(v))$: instead of nodes $w$ sending $S_{w \narr r_w(v)}$ to node $v$, they only send $|S_{w \narr r_w(v)}|$.
\end{proof}

\subsection{\maxid: Single Phase (\cp)} \label{ssec:ccSinglePhasePaths}

Let us prove the following lemma, which allows us to compress all paths in the tree into single edges. This operation does not create new paths or disconnect the graph.

\lemCompressPaths*

We describe a algorithm, which we denote as \cp, and which we run on every path $P \subseteq G$. A path only includes consecutive degree-2 nodes. Similarly to \clst, every node $v \in P$ has some set $S_v$ in its memory, which we initialize to $N_P(v)$. Every node performs $\expo(N(v))$ until $S_v$ no longer grows, whereupon, for every node $v$, it holds that $S_v=P$. The highest ID node $w \in P$ figures out the endpoints $x,y$ of path $P$ in $G$ (which either have degree 1 or $\geq 3$). Then, w.l.o.g., assume that $\text{ID}(x)>\text{ID}(y)$, whereupon $w$ compresses $P$ into $x$. By the definition of compression, node $w$ also creates edge $\{x,y\}$. 

\begin{proof}[Proof of \Cref{lem:CompressPaths}]
	After performing \cp, every node $v\in P$ learns path $P$, i.e., it holds that $S_v=P$, after $O(\log \ad)$ rounds, since the path is of length at most $\diam(G)$ and $\ad \in [\diam(G),n^{\delta/8}]$. Node $w$ can learn $x,y$ by asking for the neighbors (that are in $G \setminus P$) of the leaf.
	
    Because $|P| \leq D \leq \ad \leq n^{\delta/8}$, the local memory of a node is bounded by $n^{\delta/4}$ (when taking \Cref{obs:EXPduplicates} into account). The global memory is respected since in the worst case, all nodes have at most $\ad^2$ nodes in memory (when taking \Cref{obs:EXPduplicates} into account). Compressing and creating a new edge $\{x,y\}$ comprises of sending a constant sized message to both $x$ and $y$. Even in the case when $x$ or $y$ are endpoints to multiple paths, their total incoming message sizes are $O(\deg(x))$ and $O(\deg(y))$. The small caveat to this scheme is that if $\deg(x)$ or $\deg(y)$ are $>n^{\delta/8}$, we have to employ the aggregation tree structure as discussed in \Cref{lem:MPCdetails}. The implementation details of performing \expo can also be found in \Cref{lem:MPCdetails}.
\end{proof}

\subsection{\maxid: Single Reversal Phase} \label{ssec:ccReversalPhases}

A single reversal phase consist of steps \dcp and \dclst. In the former, we essentially reverse \cp, and in the latter, we reverse \clst. We prove the following.

\lemDecompress*

Let us introduce both steps formally.

\begin{itemize}
	\item \dcp. For every path $P$ that was compressed in phase $i$ into node $x$, node $x$ decompresses $P$ from itself. 

	\item \dclst. Every node $v$ that had compressed $T_v$ into a neighbor $u$ (or itself), decompresses $T_v$ from $u$ (or itself).
\end{itemize}

\begin{proof}[Proof of \Cref{lem:Decompress}]
	As long as the nodes $X$ that a node $v$ wants to decompress are in its local memory, both steps are clearly correct and implementable in $O(1)$ low-space \mpc steps. Observe that all nodes $v$ that decompress a node set $X$, have at some point compressed set $X$ and hence, have had $X$ in local memory (in the form of $S_v$). By simply retaining set $X$ in memory until it is time to decompress, we fulfill the requirement.
\end{proof}

\section{Connected Components (\coco)} \label{sec:CC}
\label{sec:ConnectedComponentsTheReal}
By \Cref{lem:maxid}, we can solve \maxid on any tree in $O(\log \ad)$ time using \cc. The algorithm requires $O(m \cdot \ad^3)$ words of global memory and value $\ad \in [\diam(G),n^{\delta/8}]$ as input. This section is mostly devoted to showing how to use \cc to solve the connected components (\coco) problem. 

\begin{definition}[The \coco Problem] \label{def:CCproblem}
Given a graph with unique identifiers for each node, and disconnected components $C_1,\dots,C_k$, every node $v \in C_i$ outputs the maximum identifier of $C_i$.
\end{definition}

Observe that \cc actually solves \coco for the case when the input graph is a single tree. We show how to extend \cc to solve \coco for forests, effectively proving the upper bounds of the following theorem.

\thmCC*

The proof is contained in \Cref{ssec:CCproof} with references to subroutines from \Cref{ssec:ccForestsAndStability,ssec:ccDiameter,ssec:ccProcessing}. In \Cref{ssec:ccRooting}, we show how to modify the algorithm of \Cref{ssec:CCproof} to obtain a rooting.

\subsection{Proof of \texorpdfstring{\Cref{thm:CCMainTheorem}}{Lg}} \label{ssec:CCproof}

There are three steps to extending \cc and proving \Cref{thm:CCMainTheorem}: (1) reducing the global memory to $O(n+m)$; (2) removing the need to know $\diam(G)$ in order to give $\ad$ as input; (3) generalizing it from trees to forests while maintaining component-stability. We address all steps separately.

\begin{enumerate}
    \item By applying \Cref{lem:ccPrePost} before executing \cc, we reduce the number of nodes in $G$ by a polynomial factor in $\ad$. This reduces the global memory to a strict $O(n+m)$.
    
    \item By employing the guessing scheme of \Cref{ssec:ccDiameter}, we perform multiple (sequential) executions of \cc. Every execution is given a doubly exponentially growing guess for $\ad$. The guessing scheme does not violate global memory $O(n+m)$ and results in a total runtime of $O(\log \diam(G))$.
    
    \item By the discussion in \Cref{ssec:ccForestsAndStability}, we can execute \cc on forests such that the runtime becomes $O(\log D)$, where $D$ is the largest diameter of any component. Moreover, when executing \cc on forests, it is component-stable.
\end{enumerate}

\subsection{\coco: Pre- and Postprocessing} \label{ssec:ccProcessing}

The aim of our \emph{preprocessing} is to reduce the number of nodes in the input graph $G$ by a factor of $\poly(\ad)$ (in fact $\ad^3$ would suffice), resulting in graph $G'$. By executing \cc on $G'$, we achieve a strict $O(n+m)$ global memory for one execution. When reducing the number of nodes, we must not disconnect the graph, and also keep the knowledge of the maximum ID inside the remaining graph.

After the connected components problem is solved on $G'$, we must extend the solution to the nodes in $G \setminus G'$ such that the solution is consistent. Extending the solution simply means informing every node in $G \setminus G'$ of \ID, which is the maximum identifier of the graph. We call this stage \emph{postprocessing}. This section is devoted to proving the following lemma.

\begin{lemma} \label{lem:ccPrePost}
	Consider a tree $G$ with $n$ nodes. The number of nodes can be reduced by a factor of $\poly(\ad)$ such that the resulting graph $G'$ remains connected, and one of the remaining nodes knows the maximum ID set $V_G \setminus V_{G'}$. If connected components is solved in $G'$, the solution can be extended to $G$. Both obtaining graph $G'$ from $G$ and extending the solution from $G'$ to $G$ takes $O(\log \ad)$ low-space \mpc rounds using $O(n+m)$ words of global memory.
\end{lemma}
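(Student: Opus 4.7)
Plan: I will realize \pre and \post as $\Theta(\log \ad)$ iterations of two complementary local primitives that together shrink the tree by a constant factor per iteration. In each preprocessing iteration I perform: (i) a \textbf{rake}, in which every current leaf forwards its stored maximum $\id$ to its unique neighbor, which updates its own $\id$ to the maximum of the received value and its current $\id$, after which the leaf is deleted; and (ii) a \textbf{local compress}, in which every maximal degree-$2$ chain has a constant fraction of its interior vertices contracted into designated chain-neighbors, transferring their $\id$ and inheriting the vacated chain-edge. Exploiting the hypothesis $\ad \le n^{\delta/8}$ for primitive (ii), each maximal chain has length at most $n^{\delta/8}$ and therefore fits in the $n^{\delta}$ local memory of a single machine, so I route each chain to its own machine via the standard low-space \mpc sort-and-route and execute the contraction locally. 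Since chains are vertex-disjoint in a tree, the total memory is $O(n+m)$, and both primitives run in $O(1)$ rounds.

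A standard Miller--Reif-style potential argument then shows that each (rake + local compress) iteration removes at least a constant fraction of the surviving vertices. Primitive (ii) shrinks each chain by a constant factor (trivially, since the whole chain is processed inside one machine), and once the tree has no degree-$2$ vertices the number of degree-$\ge 3$ vertices is bounded by the number of leaves, so primitive (i) removes a constant fraction of these as well. Iterating for $c \log \ad$ rounds with a sufficiently large constant $c$ yields a graph $G'$ with $|V(G')| \le n/\ad^3$, which is small enough that \cc run on $G'$ uses global memory $O(|E(G')| \cdot \ad^3) = O(n+m)$. Connectivity is preserved throughout (rake deletes only leaves; local compress replaces a three-vertex subpath by a single edge), and by induction on iterations the maximum $\id$ among removed vertices is always held by some surviving vertex; in particular, one vertex of $G'$ holds $\max\{\id_w : w \in V_G \setminus V_{G'}\}$, as required.

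For \post I reverse the sequence of operations. After \cc finishes on $G'$, every surviving vertex of $G'$ knows $\ID$. For each preprocessing iteration, in reverse order, every vertex reintroduced by (i) or (ii) receives $\ID$ from its absorber in a single round; the set of vertices an absorber needs to notify is precisely the set it swallowed during the corresponding forward iteration, so each reintroduced vertex receives exactly one message, the memory stays $O(n+m)$, and the total round count across all reversals is $O(\log \ad)$. At the end, every vertex of $G$ holds $\ID$, correctly extending the \coco solution from $G'$ to $G$.

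The main obstacle will be implementing primitive (ii) — identifying each maximal degree-$2$ chain, routing it onto a single machine, performing the contraction, and installing the new chain-endpoint edges back into the global graph — within $O(1)$ \mpc rounds and $O(n+m)$ memory. The length bound $\ad \le n^{\delta/8}$ is what makes the local contraction trivial once the chain lives on one machine, so the real work is to show that the chain-identification-and-routing step itself can be performed in $O(1)$ rounds using standard low-space \mpc sort primitives; this in turn relies on a simple argument that the number of chains and the total chain-length together fit in $O(n+m)$, which is immediate from the degree sequence of the current graph.
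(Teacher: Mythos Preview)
Your proposal has a genuine gap in the compress primitive. You claim that each maximal degree-$2$ chain can be routed to a single machine in $O(1)$ rounds ``via the standard low-space \mpc sort-and-route,'' but sort-and-route requires every element to already carry a key identifying its destination. A degree-$2$ vertex has no such key: it does not know any canonical identifier for its chain (the chain's minimum ID, its endpoints, anything) without first solving a connectivity-type problem \emph{on the chain itself}. Determining, for every degree-$2$ vertex, which chain it belongs to is precisely the kind of problem you are trying to avoid; with chains of length $L$ the natural pointer-jumping approach needs $\Theta(\log L)$ rounds, and there is no $O(1)$-round shortcut via sorting alone. You correctly flag this step as ``the main obstacle'' and ``the real work,'' but you do not discharge it---you only assert it.

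The paper sidesteps the issue with a different compress primitive. Instead of identifying and collapsing whole chains, each iteration computes a large independent set $Z$ among the degree-$2$ vertices (at least a $1/8$ fraction of them, via a derandomized $3$-wise-independent marking argument, \Cref{lem:largeIS}) and contracts only the vertices in $Z$. This is a purely local operation---each $v\in Z$ talks only to its two neighbors---and genuinely runs in $O(1)$ rounds irrespective of chain length. Combined with raking all leaves, a constant fraction of the $\ge n/2$ vertices of degree at most $2$ disappears in each iteration, yielding the required $\poly(\ad)$ shrinkage after $\Theta(\log\ad)$ iterations.

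A secondary issue: your argument uses that chains have length at most $\ad\le n^{\delta/8}$, but in the setting where this lemma is actually invoked (\Cref{ssec:ccDiameter}) $\ad$ is a \emph{guess} that may underestimate the true diameter, and the preprocessing must run obliviously to whether the guess is correct. The paper's IS-based compress has no dependence on chain length and is therefore immune to bad guesses; your chain-routing approach would not be.
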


Let us restate a known result that is going to be an essential tool in our preprocessing. We present a proof sketch to  explicitly reason the memory bound. 

\begin{lemma}[\cite{Czumaj2020}] \label{lem:largeIS}
	There is an $O(1)$-round sublinear local memory (component unstable) \mpc algorithm that, given a subset  $U\subseteq V$ of nodes of a graph $G=(V,E)$ with $d_G(u)=2$ for all $u\in U$, computes a subset $S\subseteq U$ that is an independent set in $G$ and satisfies $|S|\geq |U|/8$.
	The global memory used by the algorithm is $O(|U|)+O(|M|\cdot \log n)=O(n)$.
\end{lemma}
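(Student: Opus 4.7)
The plan is to exploit the structural restriction $d_G(u)=2$ for $u \in U$: the induced subgraph $G[U]$ has maximum degree at most $2$, so it is a disjoint union of paths and cycles, and any independent set in $G[U]$ is automatically independent in $G$. It therefore suffices to produce $S \subseteq U$ of size at least $|U|/8$ that is independent in this union of paths and cycles, in a constant number of \mpc rounds with sublinear local memory.

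First, I would have every $u \in U$ learn which of its (at most two) neighbors in $G$ also lie in $U$; since $d_G(u)=2$, this inspects a constant amount of information per node and costs $O(1)$ rounds and $O(|U|)$ global memory. Then I would consider the following standard randomized process: each $u \in U$ draws an independent bit $b_u \in \{0,1\}$ uniformly, and places itself in $S$ iff $b_u=1$ and $b_v=0$ for every $G[U]$-neighbor $v$ of $u$. By construction $S$ is independent in $G[U]$, hence in $G$. For a node $u$ with two $G[U]$-neighbors we have $\Pr[u \in S]=1/8$, and the probability is only larger ($\geq 1/4$) for path endpoints, so $\mathbb{E}[|S|] \geq |U|/8$. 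The indicator $[u \in S]$ depends on only three coordinates of the random string, so $3$-wise independence of $\{b_u\}$ already suffices for this expectation bound.

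Next I would derandomize. A $3$-wise independent hash family $\mathcal{H}$ from $U$ into $\{0,1\}$ admits seeds of length $O(\log n)$, giving $|\mathcal{H}| = \poly(n)$. Assign the $\poly(n)$ seeds to distinct groups of machines; each $u$ evaluates its bit under each seed (a constant amount of local work per seed), forms the candidate set $S_h$ per seed $h$, and the sizes $|S_h|$ are aggregated via a constant-depth $n^{\delta}$-ary sum tree, which is a standard $O(1)$-round \mpc primitive. By averaging, some seed $h^\star \in \mathcal{H}$ yields $|S_{h^\star}| \geq |U|/8$; broadcast $h^\star$ and output $S_{h^\star}$. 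The whole wrapper runs in $O(1)$ rounds with $O(|U|)$ memory for the graph plus $\poly(n) \cdot O(\log n)$ for the seed bank, which matches the stated $O(n)$ global memory (up to the precise meaning of the term $O(|M|\log n)$).

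The main obstacle is honoring both the $O(1)$-round budget and the $O(n)$ global-memory budget simultaneously during derandomization. In particular, one has to: (i) use a tight $3$-wise independent construction so that the number of seeds enumerated in parallel fits in $O(n)$ global memory, rather than a generic family that would blow up the memory; and (ii) pipeline the per-seed evaluation and the sum-aggregation so that no machine ever stores more than $n^{\delta}$ words. If the seed-bank enumeration is too expensive, the fallback is the method of conditional expectations applied to the seed bits: because the estimator $\mathbb{E}[|S|\mid \text{prefix of seed}]$ decomposes as a sum of contributions each depending on $O(1)$ bits, one can fix $\Theta(\delta \log n)$ bits of the seed per round and finish in $O(1)$ rounds, again using only $O(n)$ total memory.
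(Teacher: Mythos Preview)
Your proposal is correct and ultimately coincides with the paper's argument: the same random marking process, the same observation that $3$-wise independence suffices for $\mathbb{E}[|S|]\ge |U|/8$, and the same derandomization via the method of conditional expectations on an $O(\log n)$-bit seed, fixed in chunks of $\Theta(\delta\log n)$ bits using a constant-depth aggregation tree. Your ``primary'' route of enumerating all $\poly(n)$ seeds in parallel does not fit the $O(n)$ global-memory budget (aggregating $|S_h|$ for $n^{\Omega(1)}$ seeds already exceeds it), but you correctly identify this obstacle and your fallback is exactly what the paper does.
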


\begin{proof}[Proof Sketch]
	Consider the following random process: Each node marks itself with probability $1/2$. If a node is marked and no neighbor is marked, it joins the set $S$, otherwise it does not. The probability of a node to be marked and not having any of its neighbors marked is $1/8$. Thus, the expected size of $S$ is $|U|/8$. Further, note that this analysis still holds if the randomness for the nodes is $3$-independent. 
	$|U|$ coins that are $3$-independent can be created from a bitstring of length $O(3\cdot \log |U|)=O(\log n)$ (see \Cref{def:limitedIndependence} and \Cref{thm:seed}). 
	
	In order to deterministically compute the set $S$ we use the method of conditional expectation to compute a \emph{good} bit string. 
	For that purpose break the bitstring into $O(1)$ chunks of length at most $\delta/100\log n$. 
	Then we deterministically choose the bits on these segments such that the expected size of $S$ is remains $|U|/8$ conditioned on all already determined segments of random bits. 
	To fix one segment introduce the indicator random variable $S_v$ that equals $1$ if and only if $v\in S$. Let $\phi$ be the event that fixes to bitstring to what is already there and for $\alpha\in [n^{\delta/100}]$ let $\phi_{\alpha}$ be the event that the to be fixed segment equals $\alpha$.  Knowing the Ids of its neighbors and the already fixed part of the bitstring, each machine can for each $v\in U$ that it holds compute the values 
	$S_{v,\alpha}=\Exp[S_v \mid \psi=\alpha \wedge \phi]$. Then, nodes fix the current segment to the $\alpha_0$ such that minimizes $\sum_{v\in U}S_{v,\alpha}$. By the method of conditional expectation we have $\Exp[|S| \mid \psi_{\alpha_0}\wedge\phi]\leq |U|/8$. 
	At the end the whole bitstring is fixed and we have deterministically selected an independent set $S$ satisfying $|S|\leq |U|/8$.
	
	For an \mpc implementation, we need to be able to globally, i.e., among all machines, to agree on the good bit string obtained from the method of conditional expectations.
	For this purpose, consider an aggregation tree structure, where the machines are arranged into a (roughly) $n^{\delta/2}$-ary tree~\Cref{def:aggTreeStructure} with depth $O(1/\delta)$.
	In this tree, each machine can choose the (locally) good bit segments (conditioned on the previous segments) of $\delta/100 \log n$ bits.
	Notice that there are at most $n^{\delta/100}$ such bit segments.
	Now, we can convergecast the expected size of $S$, given a segment, to the root.
	Then, the root can decide on the \emph{good} (prefix of a) bit string.
	In each round, each machine receives a $n^{\delta/2} \cdot n^{\delta/100} \ll n^{\delta}$ bits, which fits the local memory.
	Since we have  $O(n^{1 - \delta)}$ machines, the total memory requirement to store the bits is $O(n^{\delta/2 + \delta/100 + (1 - \delta)}) = O(n)$.
	
	\begin{definition}[\cite{Vadhan12}]
		\label{def:limitedIndependence}
		For $N,M,k\in\mathbb{N}$ such that $k\leq N$, a family of functions $\mathcal{H}=\{h:[N]\to [M]\}$ is $k$-wise independent if for all distinct $x_1,\dots,x_k\in[N]$, the random variables $h(x_1),\dots,h(x_k)$ are independent and uniformly distributed in $[M]$ when $h$ is chosen uniformly at random from $\mathcal{H}$.
	\end{definition}
	
	\begin{theorem}[\cite{Vadhan12}]\label{thm:seed}
		For every $a,b,k$, there is a family of $k$-wise independent hash functions $\mathcal{H}=\{h:\{0,1\}^a\to\{0,1\}^b\}$ such that choosing a random function
		from $\mathcal{H}$ takes $k\cdot\max\{a,b\}$ random bits. \qedhere
	\end{theorem} 
\end{proof}

Next, we introduce elementary operations \rk, \con, which we use during preprocessing, and \ins, \ex, which we use during postprocessing. Operation \ins can be thought of as the reversal of \rk, and operation \ex as the reversal of \con. Recall the definition of compression and decompression from the beginning of the section.

\begin{definition} \label{def:elemOperations}
	For a degree-1 node $v$ define the following two operations.
	\begin{itemize}
		\item[$-$] $\rk(v)$: Node $v$ compresses into its unique neighbor $u$
		
		\item[$+$] $\ins(v)$: Node $v$ that underwent $\rk$ decompresses from $u$.
	\end{itemize}
	For a degree-2 node $v$ define the following two operations. 
	\begin{itemize} 
		\item[$-$] $\con(v)$: Node $v$ with neighbors $u$ and $w$ compresses into its highest ID neighbor.		
		\item[$+$] $\ex(v)$: Node $v$ that underwent $\con$ decompresses from $u$ (w.l.o.g. $\text{ID}(u)>\text{ID}(w)$).
	\end{itemize}
\end{definition}

As long as we ensure that $\ins(v)$ and $\ex(v)$ are executed on nodes which have undergone $\rk(v)$ and $\con(v)$ operations, respectively, we obtain the following observation.

\begin{observation}
	The  operations $\rk(v)$, $\ins(v$), $\con(v)$ and $\ex(v)$ can be implemented in $O(1)$ low-space \mpc rounds using $O(n+m)$ global memory on all nodes $v \in Z \subseteq V$ in parallel, if $Z$ is an independent set containing only degree-1 and degree-2 nodes.
\end{observation}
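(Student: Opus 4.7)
The plan is to argue that because $Z$ is independent and contains only nodes of degree at most $2$, the four local operations act on essentially disjoint constant-size neighborhoods, so the only nontrivial implementation question is how to handle many simultaneous updates at a common non-$Z$ neighbor. First, I would observe that each individual operation only touches the node $v$ itself, its (at most two) incident edges, and for $\con(v)$ one new edge $\{u,w\}$ — a constant amount of information per $v \in Z$. Since $Z$ is independent, no two operations attempt to rewrite the same edge simultaneously; and since $G$ is a forest, two distinct degree-$2$ nodes $v,v' \in Z$ cannot share both endpoints, so no multi-edge can be produced by parallel $\con$ operations.

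Next, I would verify the global-memory bound. For $\rk$ and $\con$, each $v \in Z$ sends $O(1)$ constant-size messages to its (at most two) neighbors, plus one announcement of the new edge $\{u,w\}$ for $\con$, for $O(|Z|) = O(n)$ total communication; the resulting graph also contains at most $|Z|$ new edges, so the stored graph remains of size $O(n+m)$. For $\ins$ and $\ex$, each raked/contracted $v$ already locally remembers the neighbor it was compressed into (and, for $\ex$, the edge it must remove), so the auxiliary data is $O(|Z|)$, and each $v$ only needs to receive back the single identifier $\id_u$ it copies into itself.

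The main obstacle is local memory at a node $u \notin Z$ that is adjacent to many nodes of $Z$: such a $u$ could be the target of $\Theta(\deg(u))$ simultaneous writes in $\rk$/$\con$, or have to disseminate $\id_u$ to $\Theta(\deg(u))$ of its neighbors in $\ins$/$\ex$. When $\deg(u) \le n^\delta$ this fits in a single \mpc round directly. When $\deg(u) > n^\delta$, the standard aggregation/broadcast tree over the machines holding edges incident to $u$ — whose $O(1)$-round implementation is given in \Cref{lem:MPCdetails} — both gathers the constant-size requests from, and broadcasts $\id_u$ to, all such neighbors within the $O(n^\delta)$ local memory budget, while using only $O(n+m)$ global memory across all high-degree endpoints since $\sum_{u} \deg(u) = O(m)$.

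Finally, I would note that the reversals $\ins$ and $\ex$ are well-defined under the stated precondition: they are applied only to nodes that previously underwent $\rk$ or $\con$, so the pointer to the neighbor of compression (and, for $\ex$, the endpoint pair $\{u,w\}$ of the inserted edge) is available as a piece of $O(1)$ local data stored at $v$ when the forward operation was performed. Thus the same argument — constant-size per-node work, independence of $Z$ ruling out conflicts, and the aggregation tree of \Cref{lem:MPCdetails} handling high-degree targets — yields an $O(1)$-round implementation with $O(n+m)$ global memory for all four operations executed in parallel over $Z$.
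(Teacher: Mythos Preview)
Your argument is correct and, in fact, considerably more detailed than what the paper supplies: in the paper this statement is asserted as an \emph{observation} with no accompanying proof beyond the remark that $\ins$ and $\ex$ are only applied to nodes that previously underwent $\rk$ and $\con$. Your treatment of the high-degree-neighbor case via the aggregation tree, and your check that independence of $Z$ together with the forest structure rules out edge conflicts and multi-edges, are exactly the right points to make explicit.
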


Let us introduce preprocessing and postprocessing formally. Note that preprocessing is performed directly on the input graph $G$, resulting in smaller graph $G'$. Whereas postprocessing is performed on a solved version of $G'$, i.e., with all nodes $v$ having $\id_v=\ID$, resulting in a solved version of the input graph $G$. Both of the following routines use some constant $c$ in their runtime in order to reduce the number of nodes by a factor of $\ad^c$. Initialize $G_0$ as the input graph $G$.

\begin{itemize}
	\item \pre. For $j = 0,\dots, c \log \ad$ iterations:
	\begin{enumerate}
		\item Let $H$ be the subgraph induced by all degree-2 nodes in $G_j$. Compute an independent set $Z \in H$ of size at least $|H|/8$ using \Cref{lem:largeIS}.
		\item $\con(v)$ for every $v \in Z$.
		\item $\rk(v)$ for every degree-1 node $v$. If two leaves are neighbors, perform \rk only on the higher ID one.
	\end{enumerate}
	\item \post. For $j = c \log \ad,\dots,0$ iterations:
	\begin{enumerate}
		\item ${\sf Insert}(v)$ for every node $v$ that performed $\rk(v)$ in iteration $j$ of \pre. 
		\item $\ex(v)$ for every node $v \in Z$ in iteration $j$ of \pre. 
	\end{enumerate}
\end{itemize}

\begin{proof}[Proof of \Cref{lem:ccPrePost}]
	Performing \pre takes $O(\log \ad)$ time since Steps 1--3 can be performed in constant time. Let $G$ ($G'$) be the graph before (after) performing \pre. Graph $G'$ has a $\poly(\ad)$ fraction less nodes that graph $G$ because a constant fraction of nodes in a tree have degree $\leq 2$, and we compress all degree $\leq 2$ nodes in the graph in each iteration. 
	
	Performing \post simply reverses \pre while extending the current solution, so it has the same runtime as \pre. Both \pre and \post can be implemented in low-space \mpc using $O(n+m)$ global memory, since other that \Cref{lem:largeIS}, nodes only exchange constant sized messages with their neighbors, and don't store anything non-constant in local memory.
\end{proof}

\subsection{\coco: Removing Knowledge of the Diameter} \label{ssec:ccDiameter}

Our \cc algorithm requires value $\ad \in [D,n^{\delta/8}]$ as input, where $D$ denotes the diameter of the input tree. We show that we don't actually need to know $D$ in order to give value $\ad$ as input. We achieve this by sequentially executing \cc with a doubly exponential guess for $\ad$ as follows.

\begin{align*}
	\ad = \ad_1,\ad_2,\dots,\ad_{\log \log n^{\delta/8}} = 2^{2^1},2^{2^2},\dots,2^{2^{\log \log n^{\delta/8}}}~.
\end{align*}

We proceed with the next guess only if the previous has failed to terminate after a runtime of $O(\log \ad_i)$. Detecting a failure within the given runtime, and making sure that a wrong guess does not violate memory constraints is a delicate affair, and is discussed in a separate paragraph. If all of our guesses fail, it must be that $D > n^{\delta/8}$. In this case, we can run the deterministic $O(\log n)$ time connected components algorithm of Coy and Czumaj \cite{ccderandom} with the requirement that all nodes output the maximum ID of the component (their algorithm is component-stable for the same reasons the algorithm in this paper is). Hence, we can safely assume that $D \leq n^{\delta/8}$. Assuming that failure detection can be performed within $O(\log \ad_i)$ and the given memory constraints, we show that the algorithm terminates successfully for some guess $\ad$, and that the runtime resulting from our guessing scheme is acceptable. Eventually for some guess $l$, it holds that $D \leq \ad_l$ and $\ad_{l'}<D$ for all $l'<l$. In particular, it holds that $D \leq \ad_l \leq D^2$. Since $\ad_l \in [D,\min(n^{\delta/8},D^2)] \subseteq [D,n^{\delta/8}]$, \cc will terminate successfully for $\ad_l$. Our guessing scheme results in a runtime of at most
\begin{align*}
	\sum_{i=1}^{l} O(\log \ad_i) &\leq \left( \sum_{i=0}^{\infty} \frac{1}{2^i} \right) O(\log D) + O(\log D^2) = O(\log D)~.
\end{align*} 

\paragraph{Detecting Failure.} Let us consider the case when our guess $\ad$ for the diameter is $<D$. It is possible for the algorithm to terminate even with a wrong diameter guess. However, we want to show that when the guess is wrong, we are able to detect it in $O(\log \ad)$ time even when the algorithm has not terminated. We also want to show that using a wrong diameter guess does not violate our memory constraints. 

Our aim is to show that there exists a constant $c$ such that if, after $c \log \ad$ rounds, the algorithm is unsuccessful (failed), we can manually terminate the execution and move on to the next diameter guess. The algorithm can be seen as unsuccessful, if after a constant number of phases, the graph is larger than a singleton (\Cref{lem:lConstant}). Note that failure cannot be detected during preprocessing, since there we simply free an appropriate amount of memory for the actual algorithm. The exact number of phases can be deduced from \Cref{lem:lConstant}. Phases consist of algorithms \clst and \cp which both have a runtime of $O(\log \ad)$ by \Cref{lem:CompressLightSubTrees} and \Cref{lem:CompressPaths}, respectively. The exact constant in \Cref{lem:CompressLightSubTrees} can be computed from \Cref{lem:correctnessLemmaLight,lem:pathShortening,lem:mainProbing}. If the diameter guess is wrong, it will simply result in removing too small subtrees, and possibly a graph larger than a singleton being left after the phases. The exact constant in \Cref{lem:CompressPaths} can be computed from its proof in \Cref{ssec:ccSinglePhasePaths}. If the diameter guess is wrong, it may still result in \cp terminating successfully. However, it may also result in nodes not learning the whole path they are in, which we can detect and manually terminate the execution. 

It is left to show that for a wrong diameter guess, the local memory $O(n^\delta)$ and global memory $O(n+m)$ are not violated. Both cases are surprisingly straightforward. The former holds since all of the local memory arguments of the section are independent of $\ad$ (they only use its upper bound $n^{\delta/8}$). The latter holds by performing \pre for $3 \log \ad$ iterations before every execution, since all of our lemmas use at most $O((n+m)\cdot \ad^3)$ global memory. 

\subsection{\coco: Forests and Stability} \label{ssec:ccForestsAndStability}

When executing algorithm \cc and the scheme developed in \Cref{ssec:CCproof} on forests instead of trees, we have to consider how the disjoint components can affect each other with regards to runtime, memory, and component-stability (\Cref{def:componentStability}).

\begin{definition}[Component-stability, \cite{componentstable}] \label{def:componentStability}
	A randomized \mpc algorithm $A_{\mpc}$ is component-stable if its output at any node $v$ is entirely, deterministically, dependent on the topology and IDs (but independent of names) of $v$’s connected component (which we will denote $CC(v)$), $v$ itself, the exact number of nodes $n$ and maximum degree $\Delta$ in the \emph{entire} input graph, and the input random seed $\mathcal{S}$. That is, the output of $A_{\mpc}$ at $v$ can be expressed as a deterministic function $A_{\mpc}(CC(v),v,n,\Delta,\mathcal{S})$. A deterministic \mpc algorithm $A_{\mpc}$ is component-stable under the same definition, but omitting dependency on the random seed $\mathcal{S}$.
\end{definition}

If we were to execute \cc on a forest, nodes from disjoint components would never communicate with each other, rendering the runtime and memory arguments local. Hence, the algorithm is compatible with forests. In the scheme developed in  \Cref{ssec:CCproof} nodes from disjoint components communicate with each other only during preprocessing, when we employ the $O(1)$ time independent set algorithm of \Cref{lem:largeIS}. Since the independent set is used to reduce the number of nodes \emph{globally}, all of the runtime and memory arguments are still compatible with forests, as long as the given value $\ad$ is in $[D,n^{\delta/8}]$, where $D$ is the largest diameter of any component. 

What is left to argue is that if the input graph is a forest, \cc and the scheme developed in \Cref{ssec:CCproof} are component-stable. This however follows directly from the stability definition (\Cref{def:componentStability}) and our problem definition (\Cref{def:CCproblem}), because we require nodes to output the maximum identifier of their component, which is fully independent of other components. 

\subsection{Computing a Rooted Forest} \label{ssec:ccRooting}
In this section, we show how to use the connected components  algorithm of \Cref{thm:CCMainTheorem}, with minor adjustments, to root a forest. We prove the following. 

\thmTreeRooting*

First, we execute the algorithm of \Cref{thm:CCMainTheorem} in order for every node to learn the maximum ID of its component. Using this knowledge, we execute a modified version of the same algorithm that roots (in a component-stable way) every component towards the single node (per component) that is left after $\ell$ phases. The modifications are the following.
\begin{enumerate}
	\item Redefine compression and decompression as follows 
	\begin{itemize}
		\item Compress $X$ into $v$: remove $X$ (and its incident edges) from the graph. For any edge $\{x,y\}$ with  $x\in X$ and $v\neq y \notin X$ we introduce a new edge $\{v,y\}$. 
		\item Decompress $X$ from $v$: Decompress $X$ from $v$: add $X$ (and its incident edges) back to the graph. If set $X$ is a subtree, orient the revived edges towards the single node to which the subtree is attached to. If set $X$ is a path, orient the revived edges in the same direction as edge $\{v,y\}$ that was added during the compression step of $X$ into $v$.
	\end{itemize}
	\item In the derandomization of \Cref{lem:largeIS}, the process is run independently on each connected component. For each component, we use an aggregation tree (recall \Cref{def:aggTreeStructure}) that consist of nodes only in the corresponding component with the maximum ID node as a root. Then, the good bitstrings can be determined through the independent aggregation trees.
\end{enumerate}

\begin{proof}[Proof of \Cref{thm:RootingMainTheorem}]
	The first modification to the algorithm ensures a rooted forest. We prove it by induction, with the base case being a rooted graph $G_{\ell} = \{v\}$.  If graph $G_i$ is rooted in the beginning of a reversal phase $i$, \dcp extends the rooting of (some) single edges to paths, and \dclst extends the rooting of (all) leaf nodes to subtrees, resulting in a rooted graph $G_{i-1}$ (recall that the reversal phase indices are in decreasing order). The same exact logic also holds for \post, where we extend the rooting to the nodes that were removed during \pre. 
	
	The second modification ensures component-stability. Since the maximum ID node of each component is responsible only for its own component, we can choose the bitstring independently of the other components and create the broadcast tree (like in \Cref{lem:largeIS}) for each component separately and independently. 
	Then, the orientation of the rooting (i.e., which node will become the root) only depends on the topology and the maximum ID of the component, making it independent of the other components.
\end{proof}

\section{Solving \lcl Problems}
\label{sec:LCLSection}
In this section, we show a useful application of our forest rooting algorithm. In particular, we show that all problems contained in a wide class of problems that has been heavily studied in the distributed setting, called Locally Checkable Labelings (\lcl{}s), can be solved in $O(\log D)$ deterministic rounds.

Informally, \lcl{}s are a restriction of a class of problems called \emph{locally checkable problems}. These problems satisfy that, given a solution, it is possible to check whether the solution is correct by checking the constant radius neighborhood around each node separately. Examples of these problems are classical problems such as maximal independent set, maximal matching, and $(\Delta+1)$-vertex coloring, but also more artificial problems, such that the problem of orienting the edges of a graph such that every node must have an odd number of outgoing edges. 

The restriction that is imposed on locally checkable problems to obtain the class of \lcl{}s is to require that the number of possible input and output labels that are required to define the problem must be constant, and moreover only graphs of bounded degree are considered. In the distributed setting, and in particular in the \local model of distributed computing, \lcl{}s have been extensively studied, see, e.g., \cite{balliu2020, BBOS18, CP19, BHKLOS18, ChangKP19, BCMOS21, Chang2020, Grunau2022}. In particular, the imposed restriction makes it possible to prove very interesting properties on them, and to develop generic techniques to solve them.  For example, we know that, if we restrict to forests, there are \lcl{}s that can be solved in $O(1)$ rounds, there are \lcl{}s that require $\Omega(\log^* n)$ rounds, but we also know that there is nothing in between, even if randomness is allowed (e.g., there are no \lcl{}s with complexity $\Theta(\sqrt{\log^* n})$). Interestingly, techniques that have been developed to study \lcl{}s have then often been extended and used to understand locally checkable problems in general (that is, problems that are not necessarily \lcl{}s).

Since the \local model is very powerful, and allows to send arbitrarily large messages, any solvable problem can be solved in $O(D)$ rounds. 
In this section, we provide an \mpc{} algorithm for solving any solvable \lcl problem on forests. The algorithm is deterministic, component-stable, and runs in $O(\log D)$ time in the low-space \mpc model using $O(n + m)$ words of global memory (formal statement in \Cref{thm:LCLSolver}). Moreover, our algorithm can be used \emph{even for unsolvable \lcl{}s}, that is, problems for which there exists some instance in which they are unsolvable. Hence, given any \lcl, our algorithm produces a correct output on any instance that admits a solution, and it outputs "not solvable" on those instances where the \lcl is not solvable. 

\medskip

\noindent\textbf{\Cref{thm:LCLSolver}.}  
	\emph{All \lcl problems on forests with maximum component diameter $D$ can be solved in $O(\log D)$ time in the low-space \mpc model using $O(n+m)$ words of global memory. The algorithm is deterministic and does not require prior knowledge of $D$.}

\medskip

Notice that it is enough to prove \Cref{thm:LCLSolver} for rooted forests, since we can first root the forest by spending the same runtime and memory (\Cref{thm:CCMainTheorem}), and then solve the \lcl. 

The remaining of the section is structured as follows: we start by giving a formal definition of \lcl{}s (\Cref{subsec:lcls}); we proceed by providing a high-level overview of our algorithm (\Cref{ssec:lcloverview}); then, we provide the definition of the concept of ``compatibility tree'', that will be useful later (\Cref{ssec:compTree}); in Section \ref{sec:lclsolver} we give the explicit algorithm, called \sr; in \cref{sec:cs,sec:gs,sec:cs,sec:cp,sec:dcp,sec:dcs} we show some properties of the subroutines used in \sr, and we bound the time complexity of each of them; finally, we put things together and prove the main theorem of this section in \Cref{sec:proofsec}. 

We note that, for the sake of simplicity, algorithm \sr is described for trees, but we will show in \Cref{sec:proofsec} that it can also be executed on forests.

\subsection{Locally Checkable Labelings}\label{subsec:lcls}

Locally Checkable Labeling (\lcl) problems have been introduced in a seminal work of Naor and Stockmeyer \cite{naor-stockmeyer}.  The definition they provide restricts attention to problems where the goal is to label nodes (such as vertex coloring problems), but they remark that a similar definition can be given for problems where the goal is to label edges (such as edge coloring problems). A modern way to define \lcl problems that captures both of the above types of problems (and combinations thereof) consists of labeling of \emph{half-edges}, i.e., pairs $(v,e)$ where $e$ is an edge incident to vertex $v$. Let us first formally define half-edge labelings, and then provide this modern \lcl problem definition.

\begin{definition}[Half-edge labeling]\label{def:halfedge}
	A \emph{half-edge} in a graph $G = (V,E)$ is a pair $(v,e)$, where $v \in V$, and $e=\{u,v\} \in E$.
	We say that a half-edge $(v,e)$ is incident to some vertex $w$ if $v = w$.
	We denote the set of half-edges of $G$ by $H = H(G)$.
	A \emph{half-edge labeling} of $G$ with labels from a set $\Sigma$ is a function $g \colon H(G) \to \Sigma$.
\end{definition}

We distinguish between two kinds of half-edge labelings: \emph{input labelings}, that are labels that are part of the input, and \emph{output labelings}, that are provided by an algorithm executed on input-labeled instances. Throughout the paper, we will assume that any considered input graph $G$ comes with an input labeling $\ginn \colon H(G) \to \sinn$ and will refer to $\sinn$ as the \emph{set of input labels}; if the considered \lcl problem does not have input labels, we can simply assume that $\sinn = \{\bot\}$ and that each half-edge is labeled with $\bot$.

Informally, \lcls are defined on bounded-degree graphs, where each node may have in input a label from a constant-size set $\sinn$ of labels, and must produce in output a label from a constant-size set $\sout$ of labels. Then, an \lcl is defined through a set of locally checkable constraints that must be satisfied by all nodes.
\begin{definition}[\lcl] \label{def:lcl}
	An \lcl problem $\Pi = (\sinn,\sout,C,r)$ is defined as follows:
	\begin{itemize}
		\item $\sinn$ and $\sout$ are sets of constant size that represent, respectively, the possible input and output labels.
		\item The parameter $r$ is a constant called \emph{checkability radius of $\Pi$}.
		\item $C$ is a set of constant size, containing allowed neighborhoods. Each element $c_i = (G_i, v_i)$ of $C$, where $G_i = (V_i,E_i)$, is such that:
		\begin{itemize}
			\item $G_i$ is a graph satisfying that $v_i \in V_i$ and that the eccentricity of $v_i$ in $G_i$ is at most $r$;
			\item Every half-edge of $G_i$ is labeled with a label in $\sinn$ and a label in $\sout$.
		\end{itemize}
	\end{itemize}
\end{definition}

\begin{definition}[Solving an \lcl] \label{def:solveLCL}
	In order to solve an \lcl on a given graph $G = (V,E)$ where to each element $(v,e) \in V \times E$ is assigned an input label from $\sinn$, we must assign to each element $(v,e) \in V \times E$ an output label from $\sout$ such that, for every $v \in V$, it holds that $(G_r(v),v) \in C$, where $G_r(v)$ is the subgraph of $G$ induced by nodes at distance at most $r$ from $v$ and edges incident to at least one node at distance at most $r-1$ from $v$.
\end{definition}

\begin{example}[Maximal Independent Set]
    In the maximal independent set problem, the goal is to select an independent set of nodes that cannot be extended. That is, selected nodes must not be neighbors, and non-selected nodes must have at least one neighbor in the set.
    
    For this problem, $\sinn = \{\bot\}$. Then, we can use two possible output labels, $1$ to indicate nodes that are in the set, and $0$ to indicate nodes that are not in the set. Hence,  $\sout = \{0,1\}$. Finally, we need to define $r$ and $C$. For this problem, it is sufficient to pick $r=1$. In $C$, we put all possible pairs $(G,v)$ satisfying the following:
    \begin{itemize}
        \item $G$ is a star centered at $v$;
        \item $G$ has at most $\Delta$ leaves (and there can be $0$ leaves);
        \item For each node in $G$, either all incident half-edges are output labeled $0$, or all incident half-edges are output labeled $1$;
        \item If $v$ is labeled $1$, then all leaves are labeled $0$;
        \item If $v$ is labeled $0$, then at least one leaf is labeled $1$.
    \end{itemize}
    
    Hence, the idea is that MIS can be checked by just inspecting the radius-$1$ neighborhood of each node, which is a star, and we just list all stars that are valid.
\end{example}

While \cref{def:lcl} gives an easy way to define problems, such a definition is not the most convenient for proving statements about \lcl{}s. In order to make our proofs more accessible, we consider an alternative definition of \lcls, called node-edge formalism. It is known that, on trees and forests, any \lcl defined as in \cref{def:lcl} can be converted, in a mechanical way, into an \lcl described by using this formalism, such that the obtained \lcl has the same asymptotic complexity of the original one \cite{BCMOS21}.

\begin{definition}[Node-edge-checkable \lcl]\label{def:nodeedge}
	In this formalism, a problem $\Pi$ is a tuple $(\sinn,\sout,C_V,C_E)$ satisfying the following:
	\begin{itemize}
		\item As before, $\sinn$ and $\sout$ are sets of constant size that represent, respectively, the possible input and output labels;
		\item $C_V$ and $C_E$ are both sets of multisets of pairs of labels, where each pair is in $\sinn \times \sout$, and multisets in $C_E$ have size $2$.
	\end{itemize}
\end{definition}

\begin{definition}[Solving a node-edge checkable \lcl] \label{def:solveNodeEdgeLCL}
	Solving an \lcl given in this formalism means that we are given a graph $G = (V,E)$ where to each element $(v,e) \in V\times E$ is assigned a label $i_{v,e}$ from $\sinn$, and to each element $(v,e) \in V\times E$ we must assign a label $o_{v,e}$ from $\sout$ such that:
	\begin{itemize}
		\item For every node $v \in V$ it holds that the multiset $M_v = \{(i_{v,e},o_{v,e}) ~|~ e \text{ is incident to } v\}$ satisfies $M_v \in C_V$;
		\item For every edge $e \in E$ it holds that the multiset $M_e = \{(i_{v,e},o_{v,e}) ~|~ e \text{ is incident to } v\}$ satisfies $M_e \in C_E$.
	\end{itemize}
\end{definition}
Hence, in the node-edge checkable formalism, we are given a graph where each half-edge (that is, an element from $V \times E$) is labeled with a label from $\sinn$, the task is to label each half-edge from a label from $\sout$, and the \lcl constraints are expressed by listing tuples of size at most $\Delta$ representing allowed configurations for the nodes, and tuples of size $2$ representing allowed configurations for the edges. In \cite{BCMOS21} it has been shown that any \lcl $\Pi$ defined on trees or forests can be converted into a node-edge checkable \lcl $\Pi'$ satisfying that the complexity of $\Pi$ and $\Pi'$ differ only by an additive constant. Hence, for the purposes of this work, we can safely restrict our attention to node-edge-checkable \lcl{}s.

\begin{example}[Maximal Independent Set]
    Sometimes, defining an \lcl{} in the node-edge checkable formalism is non-trivial. MIS is an example of problems in which the conversion requires a bit of work (it can be done mechanically, though, as shown in \cite{BCMOS21}). We hence use MIS as an example for this formalism.
    
    As before, $\sinn = \{\bot\}$, and hence when listing the elements in $C_V$ and $C_E$ we will not specify the input labels. This time, it is not actually possible to use just $2$ labels as output. In fact, we define $\sout = \{0, 1, \mathrm{P}\}$. Then, $C_V$ contains all multisets of size at most $\Delta$ satisfying that:
    \begin{itemize}
        \item All elements are $1$, or
        \item one element is $\mathrm{P}$ and all the others are $0$.
    \end{itemize}
    Then, $C_E = \{\{1,0\},\{1,\mathrm{P}\},\{0,0\}\}$. In other words, nodes in the MIS output $1$ on all their incident half-edges, nodes not in the MIS output $\mathrm{P}$ on one incident half-edge and $0$ on all the others. The label $\mathrm{P}$ is used to prove maximality. That is, nodes not in the set must point to one neighbor in the set by using the label $\mathrm{P}$. In fact, on the edge constraint, $\mathrm{P}$ is only compatible with $1$. Observe that, given a solution for the standard MIS problem, a solution for this variant can be produced with just one round of communication.
\end{example}

\subsection{Overview}\label{ssec:lcloverview}
On a high level, our algorithm works as follows. We describe it from the point of view of a single node, and for a single tree of the forest. Firstly, we root the tree, obtaining that each node knows the edge connecting it to its parent. Then, the algorithm proceeds in phases, and in total the number of phases is going to be a constant that depends on the amount of memory available to the machines. In each phase, we \emph{compress} the tree into a smaller tree, as follows:
\begin{itemize}
    \item all subtrees containing less than a fixed amount of nodes are compressed to their root;
    \item all paths are compressed into a single edge.
\end{itemize}
Each phase is going to require $O(\log D)$ time. In other words, this part of our algorithm works similar to the standard rake-and-compress algorithm.
Moreover, while compressing the tree, we maintain some information about the \lcl{} that we are trying to solve. This information is called \emph{compatibility tree}.

The compatibility tree, for each node and for each edge, keeps track of the possible configurations that they can use. At the beginning, for each node, these configurations correspond to the configurations in $C_V$ that are compatible with the given input, and for each edge, these configurations correspond to the configurations in $C_E$ that are compatible with the given input. 

When compressing a subtree into a single node, we update the list of the configurations usable on that node, in such a way that each configuration satisfies the following: if the node uses it, then it is possible to assign a labeling on the subtree compressed into that node, in such a way that, for each node and edge in the compressed subtree we use only configurations allowed by the compatibility tree before the compression.

Similarly, when compressing a path, for the new edge that we add, we store a list of configurations satisfying that, if we label the first and last half-edge of the removed path with the labels of the configuration, then we can complete the compressed path by only using configurations allowed by the compatibility tree before the compression.

At the end, we obtain that the whole tree is recursively compressed on a single node $v$. If the compatibility tree does not allow any configuration for $v$, then we know that the \lcl{} is unsolvable. Otherwise, we can pick an arbitrary configuration allowed by the compatibility tree and assign it to $v$. By performing this operation, we know that we can safely put back the paths and subtrees that were compressed on $v$ and have the guarantee that we can label them using only allowed configurations. Hence, we again proceed in phases, where we put back compressed paths and subtrees in the opposite order in which they have been compressed, and each time we assign labels allowed by the compatibility tree. At the end, we obtain that the whole tree is labeled correctly, and hence the \lcl is solved.

\subsection{Compatibility Tree} \label{ssec:compTree}
A compatibility tree is an assignment of sets of allowed configurations to nodes and edges, where this time configurations are not just multisets, but they are tuples. In other words, we may allow a node to use a configuration, but only if the labels of that configuration are used in a very specific order.
\begin{definition}[Compatibility Tree] \label{def:compTree}
    A compatibility tree of a tree $G = (V,E)$ is a pair of functions $\phi$ and $\psi$, where $\phi$ maps each node $v \in V$ into a set of tuples of size at most $\Delta$, and $\psi$ maps each edge $e \in E$ into a set of tuples of size $2$.
\end{definition}

In order to specify how the compatibility tree is initialized, it is useful to first assign an order to the edges incident to each node, and to the nodes incident to each edge. This ordering is called port numbering assignment. Observe that an arbitrary port numbering assignment can be trivially computed in $1$ round of communication.
\begin{definition}[Port Numbering]\label{def:portnumbering}
    A node port numbering is a labeling of every half-edge satisfying that, for each node $v$, half-edges incident to $v$ have pairwise distinct values in $\{1,\ldots,\deg(v)\}$. An edge port numbering is a labeling of every half-edge satisfying that, for each edge $e$, half-edges incident to $e$ have pairwise distinct values in $\{1,2\}$. A port numbering is the union of a node port numbering and an edge port numbering.
\end{definition}

Assume that the tree $G$ is already provided with a port numbering. The compatibility tree of $G$ is initialized as follows. 
For each node $v$, $\phi(v) = \{  (\ell_1,\ldots,\ell_{\deg(v)}) ~|~ \{(i_1,\ell_1),\ldots,(i_{\deg(v)},\ell_{\deg(v)})\} \in C_V \}$, where $i_j$ is the input assigned to the half-edge incident to $v$ with node port number $j$.
For each edge $e$, $\psi(e) = \{  (\ell_1,\ell_2) ~|~ \{(i_1,\ell_1),(i_2,\ell_2)\} \in C_E \}$, where $i_j$ is the input assigned to the half-edge incident to $e$ with edge port number $j$. In other words, we initialize $\phi$ and $\psi$ with everything that is allowed by the constraints of the problem, in all possible orders that are compatible with the given input.

We can observe that, by construction, $\phi$ and $\psi$ still encode the original problem. In other words, we can now forget about $C_V$ and $C_E$, and try to find a labeling assignment that is valid according to $\phi$ and $\psi$. We make this observation more formal in the following statement.
\begin{observation}\label{obs:comp}
    The \lcl problem $\Pi$ is solvable if and only if there is a labeling $g_{\mathrm{out}} : H \rightarrow \sout$ that solves $\Pi$ that satisfies that:
    \begin{itemize}
        \item For each node $v$, let $\ell_j$ be the label assigned by $g_{\mathrm{out}}$ to the half-edge incident to $v$ with port number $j$. It must hold that $(\ell_1,\ldots,\ell_{\deg(v)}) \in \phi(v)$.
        \item For each edge $e$, let $\ell_j$ be the label assigned by $g_{\mathrm{out}}$ to the half-edge incident to $e$ with port number $j$. It must hold that $(\ell_1,\ell_2) \in \psi(e)$.
    \end{itemize}
\end{observation}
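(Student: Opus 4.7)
The plan is to prove both directions of the biconditional by direct unpacking of how $\phi$ and $\psi$ were just defined. By construction, $\phi(v)$ lists exactly those ordered tuples $(\ell_1,\ldots,\ell_{\deg(v)})$ whose associated input-output multiset $\{(i_1,\ell_1),\ldots,(i_{\deg(v)},\ell_{\deg(v)})\}$ lies in $C_V$, where the $i_j$ are the input labels read off via the fixed port numbering; and symmetrically $\psi(e)$ encodes $C_E$. So the whole statement should follow purely from translating between the multiset formulation of \Cref{def:nodeedge} and the tuple formulation of \Cref{def:compTree}.

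For the forward direction, I would start from an output labeling $g_{\mathrm{out}}$ that solves $\Pi$ in the sense of \Cref{def:solveNodeEdgeLCL}, fix a node $v$, and let $i_j$ and $\ell_j$ denote the input and output labels on the half-edge incident to $v$ with node port number $j$. By assumption $M_v = \{(i_j,\ell_j)\}_{j=1}^{\deg(v)} \in C_V$, and by the very definition of $\phi(v)$ this immediately upgrades to $(\ell_1,\ldots,\ell_{\deg(v)}) \in \phi(v)$. The same two-element unpacking using $C_E$ and $\psi$ handles the edge constraints.

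The reverse direction is symmetric: starting from a labeling that respects $\phi$ and $\psi$, I would observe that each ordered tuple was placed in $\phi(v)$ precisely because the input-aligned multiset was in $C_V$, which yields $M_v \in C_V$ for every $v$; the analogous statement for edges follows from the definition of $\psi$. Reassembled, this says that $g_{\mathrm{out}}$ satisfies \Cref{def:solveNodeEdgeLCL} and hence solves $\Pi$.

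The only point that requires care, and what I would flag as the ``main obstacle'' (although it is a bookkeeping one rather than a conceptual one), is making sure the port numbering used to initialize $\phi$ and $\psi$ is the same one used when reading off tuples from $g_{\mathrm{out}}$, so that the correspondence between ordered tuples and multisets is a genuine bijection with the input coordinates $i_j$ fixed consistently. Since the port numbering is fixed once as part of the setup (\Cref{def:portnumbering}) and never changes, this matching is automatic, and the whole argument collapses to inspection of the two pairs of definitions.
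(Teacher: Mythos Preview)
Your proposal is correct and matches the paper's approach: the paper does not give a formal proof of this observation at all, instead noting just before the statement that ``by construction, $\phi$ and $\psi$ still encode the original problem,'' which is precisely the definition-unpacking you carry out in detail. Your write-up is simply an explicit version of what the paper leaves implicit.
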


On a high level, when compressing a subtree into a node $v$, we will redefine $\phi(v)$ and discard some tuples. The discarded tuples are the ones satisfying that, if node $v$ uses such a configuration, there is no way to complete the labeling of the subtree in a valid way.
Similarly, when compressing a path, we will define $\psi(e)$, where $e$ is the new (virtual) edge that we use to replace the path, in such a way that, if $\psi(e)$ contains the tuple $(\ell_1,\ell_2)$ and we label the first half-edge of the compressed path with $\ell_1$ and the last half-edge with $\ell_2$, then we can correctly complete the labeling inside the path. Here it should become clear why we use tuples and not just multisets: it may be that $v$ can use a label on the half-edge connecting it to one child (because that subtree can be competed by starting with that label), but the same label cannot be used on the half-edge connecting $v$ to a different child. A similar situation could happen on a compressed path: it could be that it is possible to label $\ell_1$ the half-edge connecting the first endpoint to the path and $\ell_2$ the half-edge of the second endpoint to the path, but not vice versa.

In the algorithm, we will solve the problem $\Pi$ in the tree obtained by compressing some subtrees into single nodes, and some paths into single edges.
We now formally define what it means to partially solve an \lcl{} $\Pi$ w.r.t.\ a compatibility tree $(\phi, \psi)$. Observe that, in a tree $G$ obtained after performing some compression steps, a node $v$ may have a degree that is smaller than the size of the tuples given by $\phi(v)$, that always have size equal to the original degree of $v$, denoted by $\mathrm{origdeg}(v)$, and hence the ports incident to $v$ may be just a subset of $\{1,\ldots,\mathrm{origdeg}(v)\}$. 
\begin{definition}[Partially solving an \lcl w.r.t.\ the compatibility tree]\label{def:partialsol}
    Let $G$ be a tree, and let $(\phi, \psi)$ be a compatibility tree for $G$. A solution for $\Pi$ that is correct according to $\phi$ and $\psi$ is a labeling $g_{\mathrm{out}}$ satisfying that:
    \begin{itemize}
         \item For each node $v$, let $\mathrm{origdeg}(v)$ be the size of the tuples given by $\phi(v)$, and let $P(v) \subseteq \{1,\ldots,\mathrm{origdeg}(v)\}$ be the subset of ports of $v$ that are present in $G$. For each $j \in P(v)$,  let $\ell_j$ be the label assigned by $g_{\mathrm{out}}$ to the half-edge incident to $v$ with port number $j$. There must exist labels $\ell_k$, for all $k \in  \{1,\ldots,\mathrm{origdeg}(v)\} \setminus P(v)$, such that it holds that $(\ell_1,\ldots,\ell_{\mathrm{origdeg}(v)}) \in \phi(v)$.
        \item For each edge $e$, let $\ell_j$ be the label assigned by $g_{\mathrm{out}}$ to the half-edge incident to $e$ with port number $j$. It must hold that $(\ell_1,\ell_2) \in \psi(e)$.
    \end{itemize}
\end{definition}
In other words, solving the \lcl in the tree obtained by performing some compression steps, means to pick, for each node, a configuration allowed by $\phi$, in such a way that all edges that are still present have a configuration allowed by $\psi$.

\subsection{The Algorithm} \label{sec:lclsolver}
Let $\Pi = (\sinn, \sout, C_V, C_E)$ be the considered \lcl problem, and let $G_0 = G$ denote a rooted input tree with root $r$. The high-level idea of our approach is to first initialize $\phi_0$ and $\psi_0$ as the functions $\phi$ and $\psi$ shown in \Cref{ssec:compTree}. Then, we perform the following distinct parts.

\paragraph{Steps 1--2 of \sr.} 
From $G_0$, we iteratively derive a sequence $G_1, G_2, \dots, G_t$ of smaller trees until eventually, for some $t=O(1)$, it holds that $G_t$ consists of a single node (the root $r$). In the meanwhile, we also update the compatibility tree, and compute $\phi_j$ and $\psi_j$ for all $0 < j \le t$. The sequence is derived such that $G_j$ ($0 < j \leq t$) is obtained from $G_{j-1}$ by first compressing all subtrees of size $\leq n^{\delta/2}$ into their respective roots (we refer to the roots of the subtrees and not (necessarily) the actual root node $r$), and then compressing all paths into single edges. Throughout the sequence of compatibility trees, we maintain the following property, which we prove in Lemmas \ref{lem:cs-compat} and \ref{lem:cp-compat}.

\begin{claim} \label{claim:gi}
	Let $1 \leq j \leq t$.
	If there exists a correct solution for $G_{j-1}$ according to $\phi_{j-1}$ and $\psi_{j-1}$ (w.r.t.\ \Cref{def:partialsol}), then there exists also a correct solution for $G_{j}$  according to $\phi_{j}$ and $\psi_{j}$ (w.r.t.\ \Cref{def:partialsol}). Moreover, given any correct solution for $G_j$, we can transform it into a correct solution for $G_{j-1}$.
\end{claim}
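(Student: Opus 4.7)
The plan is to prove \Cref{claim:gi} by decomposing the transition $G_{j-1} \to G_j$ into its two constituent compression stages and invoking a bidirectional compatibility lemma for each. Concretely, let $G'_{j-1}$ denote the intermediate tree produced by applying \cs to $G_{j-1}$, and let $(\phi'_{j-1}, \psi'_{j-1})$ be the updated compatibility tree that \cs outputs alongside it; then $G_j$ is obtained from $G'_{j-1}$ by \cp, with compatibility tree $(\phi_j, \psi_j)$. The claim then follows by chaining \Cref{lem:cs-compat}, which asserts the bidirectional equivalence of solvability between $(G_{j-1}, \phi_{j-1}, \psi_{j-1})$ and $(G'_{j-1}, \phi'_{j-1}, \psi'_{j-1})$, with \Cref{lem:cp-compat}, which asserts the analogous equivalence between $(G'_{j-1}, \phi'_{j-1}, \psi'_{j-1})$ and $(G_j, \phi_j, \psi_j)$.

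For the forward direction, I would start from a correct solution $g_{\mathrm{out}}$ for $G_{j-1}$ in the sense of \Cref{def:partialsol} and apply \Cref{lem:cs-compat} to obtain a correct solution for $G'_{j-1}$. The reason this works is that, by design, $\phi'_{j-1}(v)$ is built to retain exactly those tuples of $\phi_{j-1}(v)$ whose coordinates on the half-edges of $v$ pointing into the compressed subtree are realized by some valid completion of that subtree; the labels that $g_{\mathrm{out}}$ already assigns to those half-edges furnish such a witness, so the restriction of $g_{\mathrm{out}}$ to the surviving half-edges is valid for $(\phi'_{j-1}, \psi'_{j-1})$. Applying \Cref{lem:cp-compat} next promotes this to a solution for $G_j$: for every compressed path $P$ replaced by a virtual edge $e = \{u,w\}$, the pair of labels that $g_{\mathrm{out}}$ assigns to the first and last half-edges of $P$ lies in $\psi_j(e)$ by construction of $\psi_j$.

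For the reverse direction, I would undo the two stages in the opposite order. Starting from a correct solution for $G_j$, \Cref{lem:cp-compat} guarantees that for each virtual edge $e$ of $G_j$ carrying a tuple $(\ell_1, \ell_2)$, we can reinsert the original path and fill in its interior labels using the witness that $(\ell_1, \ell_2) \in \psi_j(e)$, yielding a correct solution for $(G'_{j-1}, \phi'_{j-1}, \psi'_{j-1})$. Symmetrically, \Cref{lem:cs-compat} then lets us reinsert each compressed subtree: the tuple in $\phi'_{j-1}(v)$ used at its root $v$ is, by definition of $\phi'_{j-1}$, a tuple of $\phi_{j-1}(v)$ equipped with an explicit valid completion of the subtree, and we simply deploy this completion. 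Composing both reinsertions delivers a correct solution for $(G_{j-1}, \phi_{j-1}, \psi_{j-1})$.

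The main obstacle lies not in this composition itself but in establishing the two per-stage lemmas. In particular, for \Cref{lem:cs-compat} one must verify that the update rule for $\phi$ during \cs correctly records which tuples of the original $\phi(v)$ remain realizable after committing the ports of $v$ into a subtree whose topology is no longer visible in $G_j$; one must also be careful that tuples in $\phi(v)$ stay indexed by \emph{original} ports of $v$ even as some of $v$'s neighbors disappear, as codified in \Cref{def:partialsol}. Once these invariants are in hand, \Cref{claim:gi} follows by direct composition, and iterating it over $j = 1, \dots, t$ yields the full equivalence between solvability on $G_0$ and solvability on the final single-node tree $G_t$.
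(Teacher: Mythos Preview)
Your proposal is correct and follows exactly the approach the paper takes: the paper states that \Cref{claim:gi} is established via \Cref{lem:cs-compat} and \Cref{lem:cp-compat}, and your chaining of these two per-stage lemmas through the intermediate tree $G'_{j-1}$ is precisely how the claim is meant to be read. The only cosmetic remark is that the path-compression routine in this section is \acp rather than \cp, and that \cs updates only $\phi$ while \acp updates only $\psi$, so your $\psi'_{j-1}$ equals $\psi_{j-1}$ and your $\phi_j$ equals $\phi'_{j-1}$; this does not affect the argument.
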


\paragraph{Steps 3--4 of \sr.}
For all $0 \le i \le t$, we define $\dot{G}_i$ to be $G_i$ where to each node $v$ is assigned a configuration $c(v) \in \phi_i(v)$ in such a way that the assignment $c$ induces a labeling $g_{\mathrm{out}}$ that is correct according to $\phi_{i}$ and $\psi_{i}$ (w.r.t.\ \Cref{def:partialsol}).
The sequence is derived such that $\dot{G}_i$ is obtained from $\dot{G}_{i+1}$ by decompressing the subtrees and paths that were compressed when $G_{i+1}$ was obtained from $G_{i}$, and simultaneously solving the problem, which is possible by Claim \ref{claim:gi}. Finally, the solution on $\dot{G}_0$ is a solution for $\Pi$ on $G$ by \Cref{obs:comp}.

\paragraph{The Algorithm.}

Let us now formally present the algorithm \sr (along with its subroutines) that solves any \lcl on rooted trees in $O(\log D)$ time. Note that the subroutines are state-changing functions, i.e., they modify their input graphs.

\falgo{$\sr(\Pi, G(V,E))$}{
	\item Initialize $\phi_0$ and $\psi_0$ according to \Cref{ssec:compTree}.
	\item Initialize phase counter $j \larr 0$. Repeat the following until the graph is a singleton.
	\begin{enumerate}
		\item $\css(G)$
		\item $\gs(G)$
		\item $\cs(G)$ \hfill \\// compress all subtrees of size $\leq n^{\delta/2}$ into single nodes, also compute $\phi_{j+1}$ 
		\item $\acp(G)$ \hfill \\// compress all paths into single edges, also compute $\psi_{j+1}$
		\item Update $j \larr j + 1$
	\end{enumerate}
	\item Set $c(v)$ to be an arbitrary element of $\phi_j(v)$, where $v$ is the obtained singleton.\\// the graph $\dot{G}_j = \dot{G}$ is obtained
	\item Initialize repetition counter $k \larr j$. Repeat the following while $k\geq0$.
	\begin{enumerate}
		\item $\dcp(\dot{G})$ \hfill \\// decompress the paths from phase $k$
		\item $\dcs(\dot{G})$ \hfill \\// decompress the subtrees from phase $k$, the graph $\dot{G}_{k-1}$ is obtained
		\item Update $k \larr k - 1$
	\end{enumerate}
}

Observe that the phase counter is incremented, while the repetition counter is decremented, which is inline with the indexing used in the previous high-level overview. Next, we give a brief introduction to the subroutines, before defining them formally in the following subsections. Recall that $T(v)$ denotes the subtree that is rooted at a node $v$ such that $v$ belongs to $T(v)$ and $G \setminus T(v)$ is connected.

\begin{itemize}
	\item {\ul{$\css$}}: Every node $v$ learns either the exact size of $T(v)$ or that $|T(v)| > n^{\delta/2}$. In the former case, $v$ marks itself as \emph{light}, and in the latter case, as \textit{heavy}. If a heavy node has a light child, it remarks itself as a \textit{local root}.
	\item {\ul{$\gs$}}: Every local root $v$ learns $T(u)$ for every light child $u$.
	\item {\ul{$\cs$}}: Every local root $v$ checks, for each half-edge $h$ connecting it to a light child $u$, what are the possible labelings of $h$ that allow to complete the labeling of $T(u)$ in such a way that it is valid according to $\phi$ and $\psi$. Then,  $v$ compresses all of these trees into itself, and updates $\phi$ in such a way that any \lcl solution on the remaining graph can be extended to a solution on $T(u)$ for every light child $u$. This is done according to the computed possible labelings of the half-edges.
	\item {\ul{$\acp$}}: For every path $P$ with some endpoints $u$ and $w$ (both have either degree 1 or $\geq 3$), compress $P$ into a new edge $\{u,w\}$, and assign an arbitrary edge port numbering to this edge. The value of $\psi(\{u,w\})$ is then defined in a way that any \lcl solution on the edge $\{u,w\}$ can be extended to a solution on $P$. After performing \cp, there are no degree-2 nodes left, which will be crucial for the analysis.

	\item {\ul{$\dcp$}}: The \lcl problem on the input graph is solved. In repetition $k$, we decompress all paths that were compressed during phase $k$. While decompressing, we extend the solution to the paths.
	\item {\ul{$\dcs$}}: The \lcl problem on the input graph is solved. In repetition $k$, every local root of phase $k$ decompresses all subtrees that it compressed during phase $k$. While decompressing, we extend the solution to the subtrees.
\end{itemize}

The following sections are rather self-contained and correspond to a specific subroutine that is called by \sr (in the order they are called). They are written from a node's point of view, with the proofs intertwining correctness, runtime, and \mpc details. 

\subsection{Solving \lcls: \css} \label{sec:css}

During the execution of the subroutine, every node $v$ maintains the following variables:
\begin{itemize}
	\item $i$: iteration counter
	\item $s(v)$: size of $T(v)$ until depth $2^i$ ($v$ is at depth 0)
	\item $C(v)$: set of all descendant nodes (if any) of $v$ at depth $2^i$ ($v$ is at depth 0).
\end{itemize}

The aim of the subroutine is to detect all heavy nodes, i.e., nodes which have a subtree of size $> n^{\delta/2}$ rooted at them. This can be thought of as a preprocessing step for \gs. All nodes are initially marked as active.

\falgo{$\css(G)$}{
	\item[] \hspace{-6mm} Each node $v$ initializes: $C(v) \larr$ set of children of $v$ in $G$, $s(v) \larr |C(v)|$, and $i \larr 0$.
	\item Repeat the following steps until $C(v) = \emptyset$ for every node $v$.
	\begin{enumerate}[leftmargin=*]
		\item If $v$ is active and all $u \in C(v)$ are also active,
		$v$ updates 
		\begin{itemize}
			\item[--] $C(v) \larr \bigcup_{u \in C(v)} C(u)$
			\item[--] $s(v) \larr s(v) + \sum_{u \in C(v)} s(u)$.
		\end{itemize}
		\vspace{1mm}
		Otherwise, $v$ marks itself as heavy, and it becomes inactive.
		\item If $s(v) > n^{\delta/2}$, $v$ marks itself as heavy, and it becomes inactive. Heavy nodes update $C(v) \larr \emptyset$.
		\item All active nodes update $i \larr i + 1$.
	\end{enumerate}
	\item[] \hspace{-6mm} Non-heavy nodes marks themselves as light.
}

\begin{figure}
	\centering
	\includegraphics[width=0.7\textwidth]{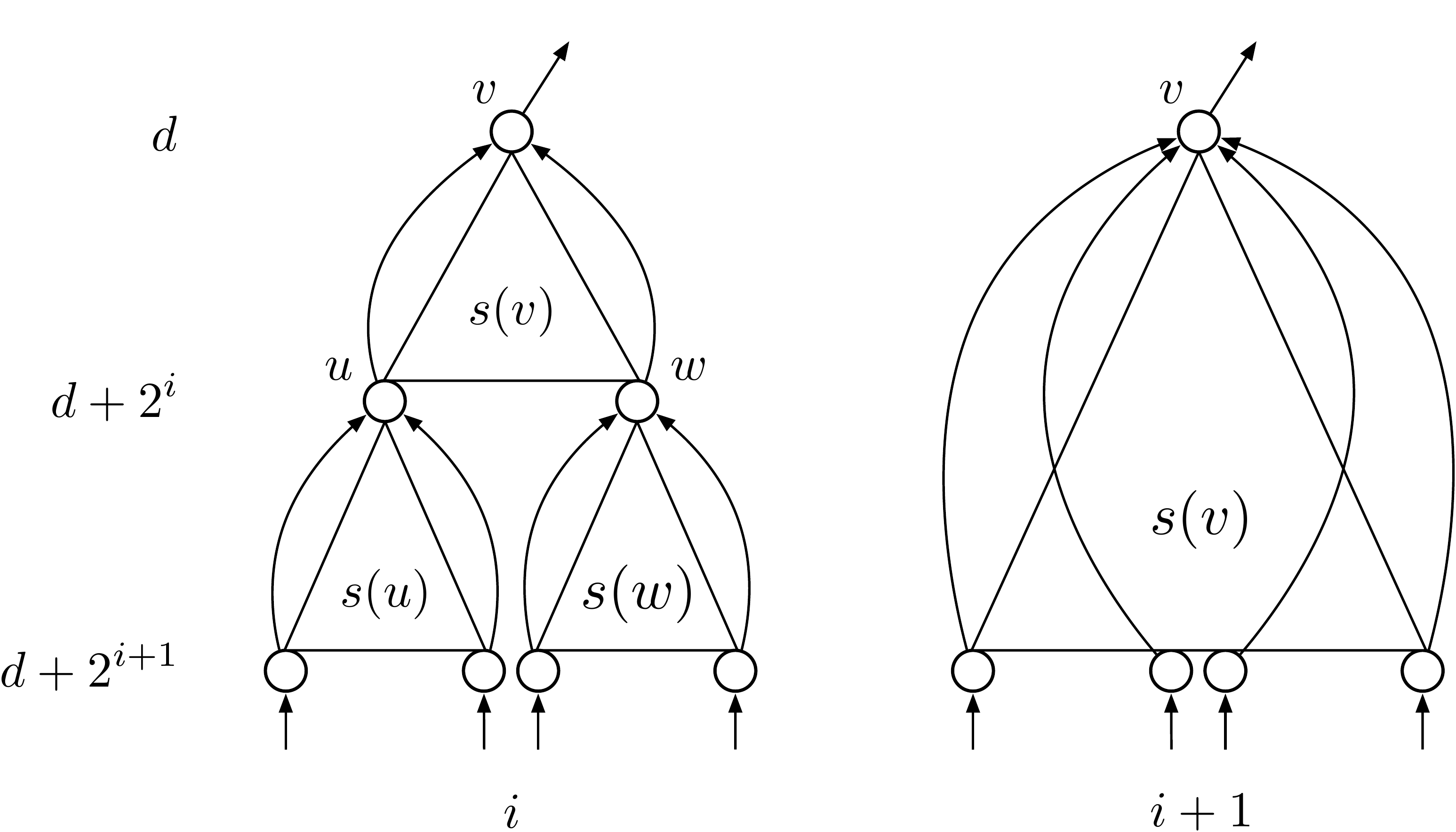}
	\caption{An illustration of an update in Step 1(a) of \css. The intuition is that node $v$ learns the size of the partial subtrees hanging from every child in $C(v)$. The values $d$ and $d + 2^i$ on the left refer to the depth of nodes $v$ and $u,w$, respectively, with regards to the whole tree. The oriented edges are not the actual edges of $G$, but rather a representation of sets $C(\cdot)$. The incoming edges of node $v$ are incident to the nodes in set $C(v)$ in both iterations $i$ and $i+1$. The value $s(v)$ in iteration $i+1$ is simply the sum of $s(v)$, $s(u)$ and $s(w)$ from iteration $i$.}
	\label{fig:countsubtreesizes}
\end{figure}

Upon termination, for every heavy node $v$ it holds that $|T(v)| > n^{\delta/2}$. Note that the ancestors of heavy nodes are also heavy; heavy nodes induce a single connected component in $G$.

\begin{lemma}[\css] \label{lem:css}
	Every node $v$ learns either the exact size of $T(v)$ or that $|T(v)| > n^{\delta/2}$. The algorithm terminates in $O(\log D)$ low-space \mpc deterministic rounds using $O(n+m)$ words of global memory. 
\end{lemma}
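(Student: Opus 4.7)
The plan is a doubling / pointer-jumping induction combined with a careful memory accounting. I would set up the invariant that, for every node $v$ still active at the start of iteration $i$,
\[
 s(v) = \bigl|\{w \in T(v) \setminus \{v\} : \dist(v,w) \le 2^i\}\bigr| \quad\text{and}\quad C(v) = \{w \in T(v) : \dist(v,w) = 2^i\}.
\]
The base case $i=0$ matches the initialization. For the inductive step, update rule 1(a) fires only when every $u \in C(v)$ is still active; by induction each $s(u)$ and $C(u)$ describe $T(u)$ up to depth $2^i$ below $u$. Since the subtrees $\{T(u) : u \in C(v)\}$ partition the descendants of $v$ at depth strictly greater than $2^i$, the reassignments $s(v) \leftarrow s(v) + \sum_u s(u)$ and $C(v) \leftarrow \bigcup_u C(u)$ extend the invariant from depth $2^i$ to $2^{i+1}$.

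The lemma's correctness claim is now immediate. If $v$ is marked heavy because some $u \in C(v)$ is already inactive, then by induction on when $u$ was marked we have $|T(u)| > n^{\delta/2}$, and $u \in T(v)$ gives $|T(v)| > n^{\delta/2}$. If $v$ is marked heavy by step 1(b), the invariant directly gives $|T(v)| = s(v) + 1 > n^{\delta/2}$. Otherwise $v$ exits the loop with $C(v) = \emptyset$ without being marked heavy, and the invariant forces $|T(v)| = s(v) + 1$ exactly. Termination occurs after $\lceil \log D \rceil + 1$ iterations, since $2^i > D$ makes $C(v) = \emptyset$ for every still-active $v$; each iteration is $O(1)$ \mpc rounds of a parent-side aggregation over the current $C(v)$.

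For memory, the central observation is a threshold effect: any active $v$ has $s(v) \le n^{\delta/2}$, because a node whose $s$-value exceeds that bound is pruned by step 1(b) (and the corresponding initialization case $\deg(v) > n^{\delta/2}$ is handled identically by counting children via the aggregation-tree machinery of \Cref{def:aggTreeStructure} before the first update). Thus $|C(v)| \le s(v) \le n^{\delta/2}$ at the start of every update, and the transient peak during step 1(a) is at most $|C(v)| \cdot \max_u |C(u)| \le n^{\delta}$, within the local-memory bound. Globally, each node $w$ of the tree lies in at most one $C(v)$ at any instant, namely that of its unique ancestor at the current jump-distance $2^i$; so $\sum_v |C(v)| \le n$ persistently, and the communication volume per round is also $O(n)$, giving $O(n+m)$ total memory as required.

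The main subtlety I anticipate is the transient memory during step 1(a), where $v$ must simultaneously hold its old $C(v)$ and all incoming $\{C(u)\}_{u \in C(v)}$. The reason this stays within the global budget is that the incoming sets are pairwise disjoint: distinct $u$ at depth $2^i$ below $v$ have disjoint subtrees $T(u)$. Collectively they equal the new $C(v)$, so their total size across all nodes is still bounded by $\sum_v |C(v)_{\mathrm{new}}| \le n$.
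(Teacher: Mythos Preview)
Your proof is correct and follows essentially the same approach as the paper: both argue by induction on the doubling invariant (what $s(v)$ and $C(v)$ encode at iteration $i$), derive correctness and the $O(\log D)$ bound from it, bound local memory via $|C(v)| \le s(v) \le n^{\delta/2}$ with transient peak at most $n^{\delta/2}\cdot n^{\delta/2}=n^{\delta}$, and bound global memory by observing that every node lies in at most one set $C(\cdot)$ at a time. Your write-up is, if anything, slightly more explicit than the paper's (you state the invariant with exact distance sets and handle the off-by-one $|T(v)|=s(v)+1$ that the paper glosses over).
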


\begin{proof}
	We first show that every active node $v$ maintains the correct values for $s(v)$ and $C(v)$ throughout the algorithm. In iteration $i=0$, values $s(v)$ and $C(v)$ are correct by initialization. During iteration $i$ in Step 1(a), $v$ updates its values only if $v$ together with all of its descendants in $C(v)$ are active, resulting in the correct values for iteration $i+1$ by construction (see \Cref{fig:countsubtreesizes}). A node $v$ marks itself as heavy only when $s(v) > n^{\delta/2}$ or when one of its descendants is heavy. Both conditions imply that $|T(v)| > n^{\delta/2}$. If neither conditions are met and $C(v)= \emptyset$ at some point, then the value $s(v)$ for node $v$ is the exact size of $T(v)$ and $v$ marks itself light.
	
	The algorithm terminates in $O(\log D)$ iterations (with each iteration taking $O(1)$ \mpc rounds), since an active node knows $T(v)$ until depth $2^i$ in iteration $i$, and the depth of a tree is $D$. Observe that for every light node $v$, it holds that $|C(v)| \leq s(v) \leq n^{\delta/2}$. Hence, local memory is never violated, because when a node updates $C(v)$ in Step 1(a), the resulting set is of size at most $n^{\delta/2} \cdot n^{\delta/2}$ nodes. Also, storing value $s(v)$ takes only $O(\log n)$ bits. Global memory is never violated, since, by design, a node $u$ is only kept in the set $C(\cdot)$ of exactly one node.
	
	Algorithm \css can be thought of as a modified version of graph exponentiation where nodes only keep track of the furthest away descendants. For the communication in Step 1(a) to be feasible, the set $C(v)$ is simply a set of IDs corresponding to the desired nodes. Observe that the communication in Step 1(a) is always initialized by $v$, and not by the descendants in $C(v)$ (nodes in $C(v)$ don't even know the ID of $v$). This is feasible, because, by design, every node has at most one ancestor that initializes communication.
\end{proof}

\subsection{Solving \lcls: \gs} \label{sec:gs}

After executing \css, by Lemma \ref{lem:css}, every node knows if it is heavy or light. Moreover, every heavy node $v$ knows if it has a light child or not. If so, node $v$ remarks itself from heavy to \textit{local root}. If there are no local roots in $G$, mark the actual root of the tree as a local root. During algorithm \gs, heavy nodes do nothing, and all other nodes (including local roots) maintain the following variables:
\begin{itemize}
	\item $i$: iteration counter
	\item $C(v)$: a subset of descendant nodes.
\end{itemize}

The procedure is as follows (see \Cref{fig:gathersubtrees} for an example).
\falgo{$\gs(G)$}{
	\item[] \hspace{-6mm} Each node $v$ initializes: $C(v) \larr$ set of light children of $v$ in $G$, and $i \larr 0$.
	\item Repeat the following steps until $C(v)$ for every local root $v$ consist of the union of subtrees $T(u)$ for every light child $u$.
	\begin{enumerate}[leftmargin=*]
		\item Every local root $v$ updates $C(v) \larr C(v) \cup \bigcup_{u \in C(v)} C(u)$.
		\item Every light node $w$ updates $C(w) \larr \bigcup_{u \in C(w)} C(u)$.
		\item Update $i \larr i + 1$.
	\end{enumerate}
}

\begin{figure}
	\centering
	\includegraphics[width=0.8\textwidth]{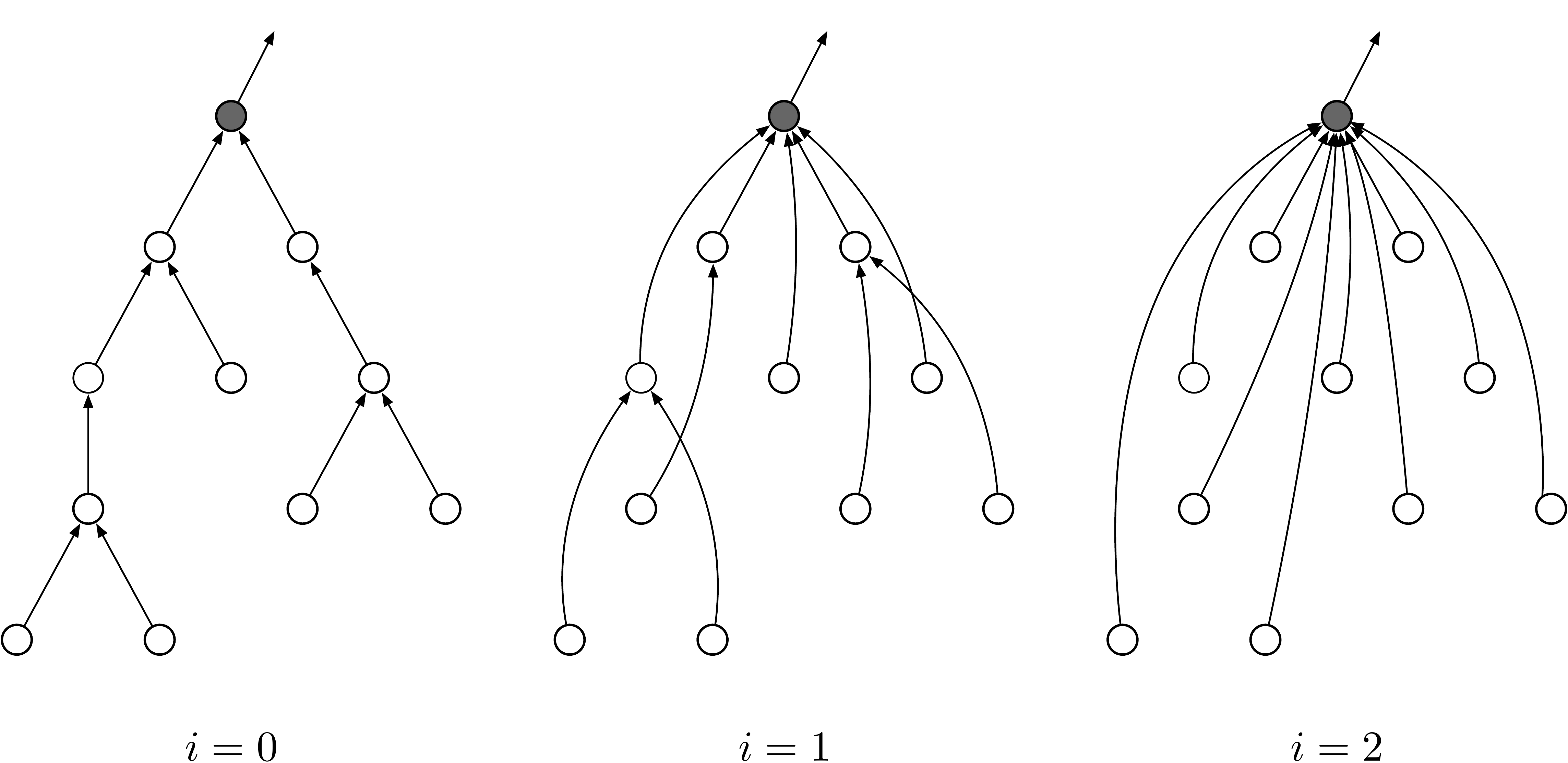}
	\caption{Three iterations of \gs, with a local root marked black. The directed edges are not necessarily the actual edges of $G$, but rather a representation of sets $C(\cdot)$. The incoming edges of a node $v$ are incident to the nodes in set $C(v)$. The figure illustrates how local roots behave differently than other nodes: local roots aggregate all descendants, while other nodes replace current ones with new ones.} 
	\label{fig:gathersubtrees}
\end{figure}

We phrase the algorithm in terms of the subtrees of the light children of a local root $v$, instead of the subtree of $v$ directly, and we do this for a simple reason: a local root $v$ may have children that are also local roots, in which case, $v$ does not want to learn anything in their direction.

\begin{lemma}[\gs] \label{lem:gs}
	Every local root has gathered the subtree $T(u)$ for every light child $u$. The algorithm terminates in $O(\log D)$ low-space \mpc deterministic rounds using $O(n + m)$ words of global memory.
\end{lemma}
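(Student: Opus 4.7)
The plan is to establish by induction on the iteration counter $i$ the following invariant at the start of iteration $i$: for every light node $w$, the set $C(w)$ equals the set of descendants of $w$ at depth exactly $2^i$ (empty if no such descendants exist); and for every local root $v$, the set $C(v)$ equals $\bigcup_u \{x \in T(u) : \dist_G(u,x) \le 2^i - 1\}$, where the union ranges over the light children $u$ of $v$. The base case $i = 0$ is exactly the initialization in Step 1 (noting $\dist(u,u) = 0$), and uses the elementary fact that every descendant of a light node is itself light, since $|T(w)| \le n^{\delta/2}$ forces $|T(u)| < n^{\delta/2}$ for every $u \in T(w) \setminus \{w\}$. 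In particular, Step 1(b) never references $C(u)$ of a heavy node or local root.

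For the inductive step, I would argue as follows. In Step 1(b), every $u \in C(w)$ is light, so by the inductive hypothesis $C(u)$ (which is untouched by Step 1(a) because $u$ is not a local root) consists of the descendants of $u$ at depth exactly $2^i$; concatenating the depth-$2^i$ path from $w$ to $u$ with the depth-$2^i$ descent from $u$ yields exactly the descendants of $w$ at depth $2^{i+1}$, as required. For local roots in Step 1(a), any $x \in T(u)$ with $\dist_G(u,x) = d \in [2^i, 2^{i+1}-1]$ has a unique ancestor $w$ at distance $d - 2^i \le 2^i - 1$ from $u$ (hence $w$ lies in the old $C(v)$) and at distance exactly $2^i$ above $x$ (hence $x \in C(w)$), so $x$ enters the updated $C(v)$; conversely no $x$ farther than $2^{i+1} - 1$ from any light child $u$ can be produced. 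Since every $T(u)$ has depth at most $D$, once $2^i \ge D$ we have $C(v) = \bigcup_u T(u)$; this occurs after $O(\log D)$ iterations. The loop exit condition is detected locally by each local root checking that its $C(v)$ did not grow in the last iteration.

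For memory, the key observation is that at any fixed moment a node $x$ belongs to $C(w)$ for at most one light ancestor $w$ (the unique one at the current depth $2^i$ above $x$) and to $C(v)$ for at most one local-root ancestor $v$ (its nearest local-root ancestor). This gives $\sum_{w \text{ light}} |C(w)| + \sum_{v \text{ local root}} |C(v)| = O(n)$ at every iteration, which is well within the global memory budget of $O(n+m)$ words. For local memory, every light node $w$ satisfies $|C(w)| \le |T(w)| \le n^{\delta/2} \ll n^{\delta}$. The one delicate case is a local root whose light children together contain more than $n^{\delta}$ nodes; this is handled by the standard low-space \mpc practice of simulating a high-degree node by an aggregation tree of machines, as already used elsewhere in the paper, so that Step 1(a)'s set union still executes in $O(1)$ rounds per iteration. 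The main obstacle I expect is precisely this bookkeeping for high-degree local roots, but it reduces to the same aggregation-tree machinery deployed in the earlier \clst and \cp proofs.
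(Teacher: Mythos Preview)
Your proof is correct and follows essentially the same approach as the paper: establish the depth-$2^i$ invariant (the paper states it as ``$C(v)$ contains $T(u)$ until depth $2^i - 1$'' without spelling out the induction), and bound global memory by observing that each node is stored in the $C(\cdot)$ of at most one ancestor. One simplification you miss: this lemma lives in the \lcl section, where the maximum degree $\Delta$ is a constant, so a local root has at most $\Delta$ light children, each with $|T(u)| \le n^{\delta/2}$, giving $|C(v)| \le \Delta \cdot n^{\delta/2} \le n^{\delta}$ directly---the high-degree case you worry about cannot occur, and no aggregation-tree machinery is needed here.
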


\begin{proof}
	By design, the set $C(v)$ of a local root $v$ contains $T(u)$ until depth $2^i - 1$ for every light child $u$ in iteration $i$. Since the depth of a tree is $D$, after at most $O(\log D)$ iterations, for every local root $v$, $C(v)$ contains $T(u)$ for every light child $u$. 
	
	Observe that $|T(u)| \leq n^{\delta/2}$ for every light node $u$. This implies that $|C(v)| \leq n^\delta$ for every local root $v$, since it has a constant number of children (the maximum degree of the graph is constant). Hence, local memory is respected. Global memory is never violated, since by design, a node $u$ is only kept in set $C(\cdot)$ of exactly one node.
	
	Similarly to \css, algorithm \gs can be thought of as a modified version of graph exponentiation. However, as opposed to \css, algorithm \gs actually gathers the whole subtrees into the memory of preselected nodes (local roots). Similarly to \css, we store IDs in the sets $C(\cdot)$ in order for the communication in Step 1 to be feasible. Also, the communication is made possible due to every node having at most one ancestor that initializes the communication.
\end{proof}

\subsection{Solving \lcls: \cs} \label{sec:cs} 

After executing \gs, by Lemma \ref{lem:gs}, every local root has gathered the IDs of the nodes in the subtree $T(u)$ for every light child $u$. 

\falgo{$\cs(G)$}{
	\item Perform the following step for every light child $u$ of every local root $v$.  Denote $e^*=(u,v)$. Every local root $v$ gathers the topology of $T(u)$, along with $\phi(w)$ for every node $w$ in $T(u)$ and $\psi(e)$ for every edge $e$ in $T(u)$. Every local root $v$ computes the set of labels $L(e^*)$ satisfying that, by labeling the half-edge $(v,e^*)$ with a label in $L(e^*)$, it is possible to complete the labeling in $T(u)$ by only using configurations allowed by $\phi$ and $\psi$.
	\item Every local root $v$ updates $\phi(v)$ by possibly discarding some tuples. Let $P(v)$ be the set of ports of $v$ connecting it to light nodes, and let $e_i$ be the edge reached from $v$ by following port $i$. A tuple $(\ell_1,\ldots,\ell_d)$ is kept in $\phi(v)$ if and only if $\ell_i \in L(e_i)$ for all $i\in P(v)$.
}

\begin{lemma} \label{lem:cs-compat}
    Let $\phi$ and $\psi$ be the compatibility tree before performing \cs, and let $\phi'$ be the updated compatibility tree after performing \cs. 
	Let $G'$ be the resulting graph after performing \cs. The following holds.
	\begin{itemize}
	    \item If there exists a correct solution for the input graph $G$ according to $\phi$ and $\psi$, then there exists also a correct solution for $G'$ according to $\phi'$ and $\psi$.
	    
	    \item Given any correct solution for $G'$ according to $\phi'$ and $\psi$, it can be transformed into a correct solution for $G$ according to $\phi$ and $\psi$.
	\end{itemize} 
\end{lemma}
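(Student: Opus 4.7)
The plan is to prove the two directions separately by unpacking the update rule for $\phi'$ together with the defining property of $L$. The key observation is that \cs leaves $\psi$ unchanged, does not alter $\phi(w)$ for any non-local-root $w$, and only refines $\phi(v)$ to $\phi'(v)$ at each local root $v$ by keeping those tuples whose entry on each light-child port $i \in P(v)$ lies in $L(e_i)$. Consequently, the constraints at all nodes and edges of $G'$ outside the local roots are literally the same as in $G$, so the only things to verify are agreement at the local roots and extendability into the compressed subtrees $T(u_i)$.

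For the forward direction, I would take a correct solution $g_{\mathrm{out}}$ for $G$ according to $(\phi,\psi)$ and restrict it to the half-edges that survive in $G'$. The configuration used at every non-local-root node still lies in $\phi(\cdot)=\phi'(\cdot)$, and every edge constraint of $G'$ is inherited from $G$ since $\psi$ is unchanged. For a local root $v$, let $(\ell_1,\dots,\ell_d)$ be its configuration under $g_{\mathrm{out}}$; this tuple is in $\phi(v)$ by hypothesis, and for each light-child port $i\in P(v)$ the restriction of $g_{\mathrm{out}}$ to $T(u_i)$ witnesses that $\ell_i\in L(e_i)$. By the update rule the same tuple therefore lies in $\phi'(v)$, so the restricted labeling is correct for $G'$ according to $(\phi',\psi)$.

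For the backward direction, I would start with a correct solution $g'_{\mathrm{out}}$ for $G'$ according to $(\phi',\psi)$ and extend it by relabeling each compressed subtree. Fix a local root $v$ and a light child $u_i$: the label $\ell_i$ that $g'_{\mathrm{out}}$ assigns to the half-edge $(v,e_i)$ is an entry of a tuple in $\phi'(v)$, so $\ell_i\in L(e_i)$ by the update rule. By the defining property of $L(e_i)$, there exists a labeling of $T(u_i)$ consistent with $\phi$ and $\psi$ that labels $(v,e_i)$ with $\ell_i$; I would plug this labeling into $g'_{\mathrm{out}}$. Since the subtrees $T(u_i)$ for distinct light children are vertex-disjoint and meet the rest of $G$ only through the half-edge on $v$'s side, the extensions can be chosen independently, and the resulting labeling on $G$ satisfies every node and edge constraint of $(\phi,\psi)$.

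The main subtlety I expect is making sure the configuration at a local root is consistent across all its light-child ports simultaneously: a single tuple in $\phi'(v)$ must fix labels for every such port at once, and the chosen completions of the $T(u_i)$'s must all begin with exactly those labels. The update rule is precisely designed so that this combinatorial matching is feasible, and the vertex-disjointness of the $T(u_i)$'s removes any interaction between the extensions; I would make this independence explicit in the write-up. A minor side case to handle is the degenerate situation in which the entire tree is light and the root is artificially marked as a local root, which reduces to the same argument with $P(v)$ covering all ports of $v$.
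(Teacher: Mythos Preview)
Your proposal is correct and follows essentially the same argument as the paper: restrict a solution on $G$ to $G'$ and check that the tuple at each local root survives the refinement (because the labels on light-child ports witness membership in $L(e_i)$), and conversely extend a solution on $G'$ into each compressed subtree independently using the defining property of $L(e_i)$ together with the vertex-disjointness of the $T(u_i)$'s. The paper's proof is terser but structurally identical; your write-up simply makes the role of $L(e_i)$ and the independence of the subtree extensions more explicit.
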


\begin{proof}
In order to show the first statement, suppose that there is a correct solution for $G$. Notice that, since $G'$ is a subgraph of $G$, if, for every $v\in G'$, we use the tuple in $\phi(v)$ of the correct solution in $G$, then we get a correct solution for $G'$. Hence, we need to ensure that the tuple used by $v$ is in $\phi'(v)$. But this is exactly what we do: every tuple excluded from $\phi(v)$ in Step 2 is not part of any correct solution for $G$, and hence the first statement holds. For the second statement, observe that from the definition of $G'$, it follows that any correct solution for $G'$ provides a partial solution for $G$ (all labels are fixed except the ones in the compressed subtrees), and this partial solution is part of a correct solution for $G$. Hence, a correct solution for $G$ can be obtained by extending the provided solution to the compressed subtrees. An extension is guaranteed to exist, since all non-extendable tuples of $\phi(v)$ were removed previously in Step 2. Note that this extension can be performed by all local roots simultaneously, since there are no dependencies between subtrees.  
\end{proof}

\begin{lemma}[\cs] \label{lem:cs}
	The algorithm terminates in $O(1)$ time in the low-space \mpc model and uses $O(n+m)$ words of global memory.
\end{lemma}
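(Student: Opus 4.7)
The plan is to verify the $O(1)$ round bound and the $O(n+m)$ global memory bound by a direct accounting, leveraging the size guarantees on light subtrees that \css and \gs provide. After \gs, by \Cref{lem:gs}, every local root $v$ already holds the set of node IDs of $T(u)$ for each of its light children $u$. Since $u$ is light, $|T(u)| \leq n^{\delta/2}$, and since the maximum degree $\Delta$ is an absolute constant, $v$ has $O(1)$ light children, giving it at most $O(n^{\delta/2})$ IDs to deal with.

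First I would implement Step 1. Because $v$ knows the IDs of every $w \in T(u)$, it can directly request from each such $w$ its local incidence information together with its compatibility entry $\phi(w)$, and analogously request $\psi(e)$ for each edge of $T(u)$. The compatibility entries are of constant size: $\phi(w)$ consists of tuples over $\sout$ of length at most $\Delta$, so $|\phi(w)| \leq |\sout|^{\Delta} = O(1)$, and similarly $|\psi(e)| = O(1)$. Hence the total payload requested by $v$ is $O(|T(u)|) = O(n^{\delta/2})$ per light child and $O(n^{\delta/2})$ in total, which fits in $v$'s local memory of $n^\delta$ words and is transmitted in $O(1)$ MPC rounds using the same routing scheme as in \gs (each queried node has a unique querying local root, so no machine is over-subscribed; when a node's degree exceeds $n^{\delta/8}$ we route via the aggregation tree structure, as in \Cref{lem:MPCdetails}). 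Once $T(u)$ with all its $\phi$/$\psi$ entries is held locally by $v$, computing $L(e^*)$ is a purely local dynamic program on a tree of size $O(n^{\delta/2})$ with constant state per node (for each candidate label of the half-edge $(v,e^*)$, run a bottom-up feasibility check), so it costs no extra MPC rounds. Step 2 is entirely local at $v$: enumerate the $O(1)$ tuples of $\phi(v)$ and keep only those whose port-$i$ coordinate lies in $L(e_i)$ for every $i \in P(v)$.

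For the global memory bound, the key observation is that the subtrees $T(u)$ taken over all light children $u$ of all local roots are pairwise node-disjoint, because local roots themselves are heavy and every light node belongs to exactly one such $T(u)$ (the one rooted at its closest light ancestor whose parent is a local root). Thus across all local roots, the total number of node records gathered in Step 1 is at most $n$, and the total number of edge records is at most $m$, so the global memory usage of \cs is $O(n+m)$, as claimed.

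The main (only) thing to be careful about is the routing in Step 1: nothing prevents a single machine from holding several queried nodes, nor a local root from issuing many simultaneous requests. This is handled exactly as in \css and \gs, where the crucial invariant is that every requested node has a \emph{unique} requesting local root (by the disjointness of the $T(u)$'s), so each node needs to answer only one query of constant size, and each local root's incoming traffic totals $O(n^{\delta/2})$ words; both fit within the low-space memory budget and can be scheduled in $O(1)$ rounds, completing the proof.
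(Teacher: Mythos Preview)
Your proof is correct and follows essentially the same approach as the paper: gather the constant-size $\phi$/$\psi$ data for each light subtree using the IDs already held after \gs, compute $L(e^*)$ locally, and bound global memory via the pairwise disjointness of the light subtrees (each light node is gathered by exactly one local root). The only minor remark is that your invocation of the aggregation-tree routing of \Cref{lem:MPCdetails} for high-degree nodes is unnecessary here, since in the \lcl setting the maximum degree is a constant; this is harmless but not needed.
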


\begin{proof}
	Gathering the topology of $T(u)$, along with $\phi(w)$ for every node $w$ in $T(u)$ and $\psi(e)$ for every edge $e$ in $T(u)$ is possible: by \Cref{lem:gs}, the local root $v$ knows the IDs of all nodes in $T(u)$, and hence node $v$ can simply gather all incident edges from all nodes in $T(u)$ in constant time, and reconstruct $T(u)$ locally. This does not break any memory constraints, since $v$ receives every edge from at most two nodes, the number of edges is bounded by the number of nodes, and the sets $\phi(w)$ and $\psi(e)$ are of constant size. Computing the sets $L(\cdot)$ does not require communication, and can be done locally in constant time. Finally, the removal (contraction) of nodes and updating the set $\phi(v)$ also takes constant time, concluding the runtime proof. Moreover, the global memory is not violated, since similarly to \gs, a node $u$ is gathered by only one local root $v$.
\end{proof}

\subsection{Solving \lcls: \acp} \label{sec:cp}

The aim of this algorithm is to compress all paths into single edges while retaining the compatibility information of the paths, i.e., if the  problem is solved, the solution can be extended to the paths that were compressed. As opposed to the previous subroutines, \acp does not capitalize on anything that is done by the previous routines. 

We begin with a slight detour and first show how all degree-2 nodes can compute their distance to the highest ID endpoint with the following algorithm. To keep things simple, we present an algorithm for a single path $H$ with two endpoints of degree 1.

\falgo{$\cd(G)$}{	
	\item Define degree-2 nodes as \textit{internal} nodes, degree-1 nodes as \textit{endpoints}, and the higher ID endpoint as the \textit{head}. The head is denoted by $h$. Assign weight $w(e) \larr 1$ for every edge $e$. Repeat the following steps until all internal nodes share a weighted edge with both endpoints; the weight of the edge equals the distance. 
	\begin{enumerate}[leftmargin=*]
		\item Every internal node with incident edges $e = \{u,v\}$ and $e' = \{v,w\}$, removes$^*$ $e$ and $e'$ from $H$ and replaces them with a new edge $e'' = \{u,w\}$ and sets $w(e'') \larr w(e) + w(e')$. \item[] $^*$If $u$ (or $w$) is an endpoint, node $v$ does not remove $e$ (or $e'$).
	\end{enumerate}
}

\begin{lemma}[\cd] \label{lemma:countdistances}
	Every degree-2 node knows its distance to both endpoints. The algorithm  does not require prior knowledge of $D$, terminates in $O(\log D)$ low-space \mpc rounds using $O(n+m)$ words of global memory.
\end{lemma}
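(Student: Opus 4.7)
The plan is to view \cd{} as parallel pointer-doubling along the path $H$ and maintain the invariant that, after round $i$, each surviving internal node $v$ has exactly two incident edges whose weights equal the hop-distances in the \emph{original} path $H$ from $v$ to the other endpoint of each such edge. I would prove this invariant by induction on $i$. The base case holds because initial edge weights are all $1$ and reflect the true hop-adjacencies in $H$. In the inductive step, when an internal node $v$ with incident edges $e=\{u,v\}$ of weight $w(e)$ and $e'=\{v,w\}$ of weight $w(e')$ replaces them by $e''=\{u,w\}$ of weight $w(e)+w(e')$, the new weight equals the true $H$-distance from $u$ to $w$, since the unique $uw$-path in $H$ traverses $v$ and the two sub-distances compose additively. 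The endpoint-preservation rule guarantees that endpoint-edges are never removed, so once one of $v$'s current neighbors becomes an endpoint the corresponding weight is frozen to the correct final distance.

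For the runtime, I would show by induction that the weight of each edge incident to a surviving internal node at least doubles per round until it hits an endpoint. Hence after $O(\log D)$ rounds every internal node has both of its incident edges directly connected to the two endpoints, weighted by the true distances; in particular each degree-$2$ node learns its distance to the head $h$ and to the other endpoint as claimed.

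For the MPC implementation, each round consists of each internal node reading $O(1)$ words from its two current neighbors and locally overwriting its own pair of incident edges, which fits in $O(1)$ low-space MPC rounds. Every node stores $O(1)$ words throughout, so local memory is respected and the global memory stays in $O(n+m)$. Termination without prior knowledge of $D$ is detected locally---an internal node flags itself ``done'' once both its current neighbors are endpoints---and aggregated globally in $O(1)$ rounds per iteration via the standard MPC aggregation tree used elsewhere in the paper.

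The main subtlety I expect is arguing that the parallel contraction step is well-defined, even though many adjacent internal nodes simultaneously remove and re-create edges. The resolution is to read the algorithm as each node only overwriting its own two incident edges based on reads of its current neighbors' edges: the removal of $\{u,v\}$ claimed by $v$ is naturally mirrored by $u$'s symmetric action on its own incident edge toward $v$, so the post-round graph is consistently and simultaneously defined by all nodes' local updates, and the invariant above is preserved.
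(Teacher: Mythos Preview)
Your proposal is correct and follows essentially the same approach as the paper: both maintain the invariant that edge weights equal true hop-distances in $H$ (by the same additive induction), and both derive the $O(\log D)$ bound from a geometric shrinking argument. Your doubling bound is in fact slightly sharper than the paper's factor-$3/2$ statement, and you make the parallel-update consistency explicit where the paper leaves it implicit, but these are refinements within the same proof strategy rather than a different route.
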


\begin{proof} 
	The weight of each edge $\{v,u\}$ in the graph equals the number of edges between $v$ and $u$ in the original graph. The base case being evident from the initialization of the weight of each edge, and the induction step from the update step $w(e'') \larr w(e) + w(e')$. Since the shortest path from either endpoint to any other node in the path decreases by a factor of at least $3/2$, \cd terminates in $O(\log D)$ time. 
	
	Creating edges and communicating through them can be done in constant time in \mpc, since storing an edge equals storing the ID of the neighbor. Observe that every internal node keeps exactly two edges in memory. In order to not break their local memory, endpoints do not keep track of any edges. Since all nodes keep at most two edges in memory, global memory is respected.
\end{proof}

Next, we show that, by using the distances computed with \cd, we can compress paths of all lengths in $O(\log D)$ time, while respecting both local and global memory. Recall that for a path, $h$ denotes its highest ID endpoint.

\falgo{$\acp(G)$}{
	\item Set $i \larr 0$ and execute the following steps for every path $H$ until it consists of one edge. 
	\begin{enumerate}
		\item Define an MIS set $Z_i \coloneqq \{ v \in H \mid \deg(v) = 2 \text{ and } \text{$d_G(v,h)$ is \textit{not} divisible by $2^{i+1}$} \}$.
		\item Every node $v \in Z_i$ with incident edges $e = \{u,v\}$ and $e' = \{v,w\}$ removes $e$ and $e'$ from $G'$ and replaces them with a new edge $e'' = \{u,w\}$ (with port $1$ connected to $u$ and port $2$ connected to $w$). Furthermore, for the new edge, $v$ sets $\psi(e'')$ to the set of all tuples $(\ell_1,\ell_2)$ satisfying that there exist two labels $x$, $y$ satisfying the following. Let $p_1$ be the port connecting $e$ to node $u$, let $p_2$ be the port connecting $e$ to $v$, let $p_3$ be the port connecting $v$ to $e$, let $p_4$ be the port connecting $v$ to $e'$, let $p_5$ be the port connecting $e'$ to $v$, and let $p_6$ be the port connecting $e'$ to $w$:
		\begin{itemize}
		    \item There is a tuple in $\psi(e)$ with label $\ell_1$ in position $p_1$ and label $x$ in position $p_2$;
		    \item There is a tuple in $\phi(v)$ with label $x$ in position $p_3$ and label $y$ in position $p_4$;
		    \item There is a tuple in $\psi(e')$ with label $y$ in position $p_5$ and label $\ell_2$ in position $p_6$.
		\end{itemize}
		\item Update $i \larr i + 1$.
	\end{enumerate}
}

From the perspective of the nodes $u$ and $w$, the new edge $e''$ replaces the old edges $e$ and $e'$, respectively. In other words, if $u$ was connected through port $j$ to edge $e$, now it is connected through port $j$ to edge $e''$. Notice that, the reason for which, at each step, we compute an MIS, is that, if MIS nodes replace their two incident edges of the path with a single edge, we still obtain a (shorter) path as a result.

\paragraph{Consecutive MIS.} Executing \cd gives us the means to compute consecutive maximal independent sets in \acp, which is not exactly obvious nor easily attainable using other means. If we were to compute an MIS directly with, e.g., Linial's \cite{linial} algorithm in every iteration of Step 2, we would end up with a total runtime of $O(\log D \cdot \log^* n)$. An alternative approach would be to employ the component-unstable $O(1)$-time algorithm that computes an independent set of size $\Omega(n/\Delta)$ by \cite{componentstable}. This approach also fails for multiple paths, since the algorithm in \cite{componentstable} does not give the guarantee that a constant fraction of nodes in \textit{all} paths join the independent set, leading to a total runtime of $O(\log n)$. To summarize, \cd is a novel approach to a very non-trivial problem, yielding component stability and a sharp $O(\log D)$ runtime.

\begin{lemma} \label{lem:cp-compat}
    Let $\phi$ and $\psi$ be the compatibility tree before performing \acp, and let $\psi'$ be the updated compatibility tree after performing \acp. 
	Let $G'$ be the resulting graph after performing \acp. The following holds.
	\begin{itemize}
	    \item If there exists a correct solution for the input graph $G$ according to $\phi$ and $\psi$, then there also exists a correct solution for $G'$ according to $\phi$ and $\psi'$.
	    
	    \item Given any correct solution for $G'$ according to $\phi$ and $\psi'$, it can be transformed into a correct solution for $G$ according to $\phi$ and $\psi$.
	\end{itemize}
\end{lemma}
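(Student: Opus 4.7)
The plan is to prove \Cref{lem:cp-compat} by induction on the number of iterations of \acp, with the base case being a single compression step. Within a fixed iteration the set $Z_i$ is an independent set of degree-$2$ nodes, so the compressions triggered by distinct elements of $Z_i$ act on disjoint sets of edges and half-edges and can be analyzed one at a time. It therefore suffices to prove both statements for the single transformation that removes a degree-$2$ node $v$ together with its two incident edges $e = \{u,v\}$ and $e' = \{v,w\}$ and introduces a new edge $e'' = \{u,w\}$ whose compatibility set $\psi'(e'')$ is defined as in \acp, while $\phi$ and every other entry of $\psi$ stay untouched.

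For the forward direction I start from a correct solution $g_{\mathrm{out}}$ for $G$ under $\phi,\psi$. Every half-edge not incident to $v$ via $e$ or $e'$ survives in $G'$ with its label unchanged; in particular the label $\ell_1$ that $g_{\mathrm{out}}$ assigns to the half-edge of $e$ at $u$ becomes the port-$1$ label of $e''$, and similarly $\ell_2$ (from $w$) becomes the port-$2$ label of $e''$. Since no node or edge outside $\{v,e,e'\}$ has its labeling altered, $\phi$ still holds at every surviving node and $\psi$ still holds at every surviving edge, and the only new obligation is $(\ell_1,\ell_2)\in \psi'(e'')$. Setting $x = g_{\mathrm{out}}(v,e)$ and $y = g_{\mathrm{out}}(v,e')$, the $\psi(e)$-, $\phi(v)$-, and $\psi(e')$-validity of $g_{\mathrm{out}}$ supply exactly the three witness conditions in the definition of $\psi'(e'')$, so $(\ell_1,\ell_2)\in\psi'(e'')$ as required.

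For the backward direction I start from a correct solution on the compressed graph. The labels assigned to the two half-edges of $e''$ form a pair $(\ell_1,\ell_2)\in \psi'(e'')$, so the defining property yields witnesses $x,y$. I extend the labeling to the removed elements by placing $x$ on the port-$p_2$ half-edge of $e$, $x$ and $y$ on the port-$p_3$ and port-$p_4$ half-edges of $v$, and $y$ on the port-$p_5$ half-edge of $e'$. The three witness conditions are precisely $\psi(e)$-, $\phi(v)$-, and $\psi(e')$-validity of this extension; all other nodes and edges carry over unchanged, so the full labeling is correct under $\phi,\psi$. Since distinct MIS nodes in the same iteration touch disjoint half-edges, these extensions are performed in parallel without conflict.

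Composing the one-step arguments over all rounds gives the lemma. The main subtlety I expect to guard against is port bookkeeping: an edge $e''$ created in an earlier iteration may itself appear as $e$ or $e'$ in a later iteration, and the six port indices $p_1,\dots,p_6$ in the definition of $\psi'(e'')$ must be read consistently with the convention that the new edge $e''$ receives port $1$ at $u$ and port $2$ at $w$. Once this convention is threaded carefully through the induction, both directions compose cleanly and yield \Cref{lem:cp-compat}.
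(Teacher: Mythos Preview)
Your proposal is correct and follows essentially the same approach as the paper: reduce to a single contraction step using that $Z_i$ is an independent set (so distinct contractions touch disjoint half-edges), then argue both directions via the witnesses $x,y$ in the definition of $\psi'(e'')$. Your treatment is in fact more explicit than the paper's, which dispatches both directions in two sentences; the one thing the paper includes that you take for granted is a short verification that $Z_i$ really is an (maximal) independent set in the current path, based on the distance parities computed by $\cd$.
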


\begin{proof}
     Assuming that $Z_i$ is indeed an MIS, the first statement follows from the fact that acting nodes (i.e., MIS nodes) are never neighbors and every set $\psi(\{u,w\})$ that an acting node $v$ creates only discards configurations that do not correspond to valid solutions for the subpath $(u,v,w)$. The second statement holds by the definition of labels $\ell_1,\ell_2,x,y$, since we can perform the process in reverse.
	
	Let us show that $Z_i$ constitutes an MIS. For any $i$, observe that the distances of the remaining nodes constitute all multiples of $2^i$ up until some number (the length of the path). Hence, every second node is not divisible by $2^{i+1}$ and joins $Z_i$, proving the statement.
\end{proof}

\begin{lemma}[\acp] \label{lem:cp}
	There are no degree-2 nodes left in the graph. The algorithm terminates in $O(\log D)$ low-space \mpc rounds using $O(n+m)$ words of global memory. 
\end{lemma}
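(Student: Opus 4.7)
The plan is to establish three things: that $Z_i$ is always a valid MIS on the surviving degree-2 nodes, that each path shrinks by a factor of two per iteration, and that the per-iteration cost is $O(1)$ rounds with $O(1)$ memory per node.

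First I would invoke \cd (\Cref{lemma:countdistances}) as a preprocessing step so every degree-2 node learns $d_G(v,h)$; this contributes an additive $O(\log D)$ rounds and $O(n+m)$ global memory, within our budget. The core structural claim that I would prove by induction on $i$ is: immediately before iteration $i$, the degree-2 nodes surviving on a path $H$ are exactly those whose distance $d_G(v,h)$ is a multiple of $2^i$, and consecutive surviving degree-2 nodes are adjacent in the current (contracted) graph. The base case $i=0$ is immediate. For the step, $Z_i$ selects surviving degree-2 nodes with distance $k\cdot 2^i$ where $k$ is odd; since adjacent surviving nodes have $k$ differing by $1$, alternating parities give an independent set, and every non-selected node (even $k$) has both path-neighbors in $Z_i$, so $Z_i$ is maximal. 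Removing $Z_i$ leaves exactly the multiples of $2^{i+1}$, now made adjacent by the edge replacement in Step~2(b).

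From the invariant, the number of degree-2 nodes on $H$ at least halves each iteration, so after $O(\log D)$ iterations only the two (non-degree-2) endpoints of each original path remain, connected by a single edge; hence no degree-2 nodes survive overall, giving the first half of the lemma. The per-iteration cost is $O(1)$ \mpc rounds: each $v\in Z_i$ only needs $\psi(e),\phi(v),\psi(e')$ of its two incident edges and itself (all of constant size), performs a constant number of local joins to form $\psi(e'')$, and sends a single message to each of $u$ and $w$ announcing the new edge and its port assignment. Because $Z_i$ is independent, no two acting nodes touch the same edge or endpoint simultaneously, so the edge rewrites do not conflict.

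For the memory bound, each active node stores $O(1)$ edges, $O(1)$ constant-sized label tables, and one distance value, so local memory is trivially $O(\log n)$ words. Globally, the contracted multigraph has at most $n$ vertices and at most $m$ edges at any moment (each contraction step replaces two edges by one), and the $\psi$ tables attached to edges have size bounded by $|\sout|^2 = O(1)$; adding the preprocessing memory of \cd, the total is $O(n+m)$ words.

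The main obstacle I anticipate is arguing cleanly that the distance-based MIS rule yields, across all paths executed in parallel, both independence and maximality after an arbitrary number of previous contractions; everything else is routine bookkeeping. The inductive invariant above is designed precisely to reduce that obstacle to checking parity of $k$ in the multiples of $2^i$, which is the only nontrivial arithmetic needed.
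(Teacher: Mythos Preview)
Your proposal is correct and follows essentially the same route as the paper: the paper's proof of \Cref{lem:cp} is a terse four sentences that invoke the fact that $Z_i$ is an MIS (established in the proof of \Cref{lem:cp-compat} via the same ``surviving distances are multiples of $2^i$'' observation you spell out), conclude constant-factor shrinkage and hence $O(\log D)$ iterations, and note $O(1)$ rounds and memory per iteration. Your write-up simply makes the inductive invariant and the independence/maximality check explicit, and adds the bookkeeping about \cd{} preprocessing and non-conflicting rewrites, none of which the paper bothers to state but all of which are implicit in its argument.
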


\begin{proof}
	Since $Z_i$ constitutes an MIS, every path shortens by a constant factor. After $O(\log D)$ iterations, every path is compressed into a single edge. Every iteration consists of a constant number of communication rounds, every node uses a constant amount of memory, and compressing paths into edges never creates new degree-2 nodes.
\end{proof}

\subsection{Solving \lcls: \dcp} \label{sec:dcp}

Assuming that the problem of interest is solved in the current graph, we essentially reverse \acp and iteratively extend the solution from certain edges to the paths that were previously compressed into those edges. By ``the problem is solved in the current graph'' we simply mean that the output labels of the half-edges in the current graph satisfies \Cref{def:partialsol}.

\falgo{$\dcp(\dot{G})$}{
	\item All nodes that performed \acp in phase $k$, know the last iteration $i$ and can perform the following until $i=0$.
	\begin{enumerate}
		\item Every node $v \in Z_i$ learns the fixed half-edge labels $(\ell_1,\ell_2)$ assigned to $e''=(u,w)$ ($e''$ is the edge $v$ had created). Node $v$ removes $e''$ from the graph and replaces it with $e=(u,v)$ and $e'=(v,w)$ (edges $e$ and $e'$ are the edges $v$ had removed). Furthermore, $v$ assigns half-edge labels $\ell_1,x$ to edge $e$ and labels $y,\ell_2$ to edge $e'$ such that the labeling satisfies $\psi(e)$, $\phi(v)$, and $\psi(e')$.
		\item Update $i \larr i-1$
	\end{enumerate}
}

\begin{lemma}[\dcp] \label{lem:dcp}
	The \lcl problem on the graph is solved according to \Cref{def:partialsol}, and the graph has the same node and edge sets as $G$ in phase $k$ before executing \acp. The algorithm terminates in $O(\log D)$ low-space \mpc rounds using $O(n+m)$ words of global memory. 
\end{lemma}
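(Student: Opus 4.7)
The plan is to prove the three claims of Lemma \ref{lem:dcp}, namely correctness of the extended labeling, recovery of the original graph structure, and the runtime/memory bounds, by an inductive argument that mirrors the iterations of \acp in reverse.

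First I would set up the inductive invariant. Let $G_{k,i}$ denote the graph at the end of iteration $i$ of \acp during phase $k$, and let $(\phi_{k,i}, \psi_{k,i})$ be the corresponding compatibility tree. The input to \dcp is a partial solution on $G_{k,i^*}$ (where $i^*$ is the last iteration of \acp) that is correct according to $\phi_{k,i^*}$ and $\psi_{k,i^*}$ per \Cref{def:partialsol}. The invariant I would maintain is: after \dcp processes iteration $i$, the current graph equals $G_{k,i-1}$ and the labeling on its half-edges is correct according to $\phi_{k,i-1}$ and $\psi_{k,i-1}$. The base case holds by assumption, and the final graph (after processing iteration $0$) will coincide with the input to \acp in phase $k$, as required by the second claim.

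For the inductive step, fix $v\in Z_i$ and the edge $e''=(u,w)$ it created in \acp, carrying labels $(\ell_1,\ell_2)\in\psi_{k,i}(e'')$. By the explicit construction of $\psi_{k,i}(e'')$ in \acp, the membership $(\ell_1,\ell_2)\in\psi_{k,i}(e'')$ is equivalent to the existence of labels $x,y$ whose joint placement on $(v,e)$, $(v,e')$ together with $\ell_1,\ell_2$ on the outer half-edges is simultaneously consistent with $\psi_{k,i-1}(e)$, $\phi_{k,i-1}(v)$, and $\psi_{k,i-1}(e')$. Thus $v$ can locally pick some such $x,y$ (node $v$ stored its two incident edges and their labels during \acp, and the sets $\psi_{k,i-1}(e)$, $\phi_{k,i-1}(v)$, $\psi_{k,i-1}(e')$ have constant size). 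Applying this simultaneously at every $v\in Z_i$ is safe since $Z_i$ is an independent set in $G_{k,i-1}$, so the new labels on disjoint triples do not interfere. This restores both the topology of $G_{k,i-1}$ and a valid labeling, completing the inductive step and thereby establishing the first two claims via \Cref{lem:cp-compat}.

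For the runtime and memory bounds, the procedure runs for $O(\log D)$ iterations because \acp has $O(\log D)$ iterations by \Cref{lem:cp}, and each reverse iteration is $O(1)$ \mpc rounds: an MIS node $v$ communicates only with $u$ and $w$ and reads the constant-size compatibility entries of $e$, $e'$, and itself. Global memory stays within $O(n+m)$ because the graph never exceeds the size of the pre-\acp graph of phase $k$, every compatibility entry is of constant size, and each $v\in Z_i$ needs only $O(1)$ persistent state (pointers to $u,w$, the iteration index, and pointers into the constant-size sets $\psi_{k,i-1}(e),\phi_{k,i-1}(v),\psi_{k,i-1}(e')$). The only subtlety is the \mpc implementation of waiting until the correct iteration to act: each node that participated in \acp stored its iteration index $i$ during compression, so it can be scheduled by $i^*-i$ in the obvious way, which is standard and already implicit in \acp.

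The main obstacle I expect is the bookkeeping for finding the labels $x,y$ at decompression time without exceeding local memory and without re-querying remote state. The argument above handles this because each $v$ already knows $\phi_{k,i-1}(v)$ from before it was compressed, and the adjacent edge compatibility sets $\psi_{k,i-1}(e)$ and $\psi_{k,i-1}(e')$ were set when the earlier (or initial) compatibility tree was computed; since all these sets are of constant size and $v$ participated in their creation, they can be retained locally at constant cost throughout phase $k$. Once this bookkeeping is established, correctness and complexity follow directly from the invariant together with \Cref{lem:cp-compat} and \Cref{lem:cp}.
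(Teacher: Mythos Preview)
Your proposal is correct and follows essentially the same approach as the paper: reverse the iterations of \acp, invoke \Cref{lem:cp-compat} for correctness of the extended labeling and recovery of the graph, and invoke \Cref{lem:cp} for the $O(\log D)$ runtime and $O(n+m)$ memory bounds. The paper's own proof is a two-sentence sketch of exactly this argument, whereas you have spelled out the inductive invariant, the use of the independence of $Z_i$, and the bookkeeping in more detail than the paper does.
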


\begin{proof}
	All we do is reversing the steps of \acp and extending the solution for $\dot{G}$ to the decompressed paths, resulting in a correct solution on the graph that has the same node and edge sets that we had before the compression. Since the computation of the solution is done locally, and extending the solution requires a constant amount of memory and communication, the lemma follows from \Cref{lem:cp-compat,lem:cp}.
\end{proof}

\subsection{Solving \lcls: \dcs} \label{sec:dcs}

The assumption for this algorithm, similarly to \dcp, is that the  \lcl problem on the graph is solved correctly according to \Cref{def:partialsol}. We also assume, and keep the invariant, that we do not only know the partial output assignment given to $\dot{G}$, but we also know, for each node $v$ of $G$, the tuple $c(v) \in \phi(v)$ assigned to it. This is especially useful at the beginning, when we have the root that is a singleton, and hence has no incident edges in the current graph, but we still want to know how to label its incident half-edges after decompressing the subtrees rooted at its children.

\falgo{$\dcs(\dot{G})$}{
	\item Every local root $v$ of phase $k$ decompresses every subtree $T(u)$ compressed into it during phase $k$, while simultaneously solving the \lcl problem on $T(u)$.
}

\begin{lemma}[\dcs] \label{lem:dcs}
	The \lcl problem on the graph is solved according to \Cref{def:partialsol}, and it has the same node and edge sets as $G$ in phase $k$ before executing \cs.  The algorithm terminates in $O(1)$ low-space \mpc rounds using $O(n+m)$ words of global memory. 
\end{lemma}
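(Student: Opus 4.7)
The plan is to establish the three claims of the lemma separately: correctness of the extended solution, the topology match with the pre-\cs graph, and the complexity bounds. The central object for the correctness argument is the invariant maintained at the beginning of \dcs: every node $v$ of the current graph has a fixed tuple $c(v) \in \phi(v)$ assigned to it, and the restriction of this assignment to the half-edges currently present in $\dot{G}$ is a valid partial solution with respect to $\phi$ and $\psi$ in the sense of \Cref{def:partialsol}. I would first argue that this invariant, combined with the way $\phi$ was updated during \cs, guarantees for each local root $v$ that the labels on the ports leading into the (formerly) compressed subtrees belong to the corresponding sets $L(e^*)$ computed in Step 1 of \cs. Hence, by the very definition of $L(e^*)$, a completion of the labeling inside each compressed subtree $T(u)$ that is consistent with $\phi$ and $\psi$ exists.

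Next, I would argue that every local root $v$ can actually compute such a completion, in parallel with every other local root and in $O(1)$ \mpc rounds. Here I would invoke \Cref{lem:gs} and \Cref{lem:cs}: by the time \cs is executed, every local root $v$ already holds in its local memory the full topology of each of its light children's subtrees together with the relevant $\phi$ and $\psi$ data. As long as this information is kept around until the corresponding \dcs call (which we can arrange, since it is of total size $O(n+m)$ stored disjointly among the local roots), the completion is a purely local computation on $v$'s machine: $v$ iterates over each $T(u)$, fixes the half-edge label incoming from $v$ to the value dictated by $c(v)$, and uses the existence proof from the previous paragraph to pick any valid labeling of the remaining half-edges in $T(u)$. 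The outputs are then shipped back to the nodes of $T(u)$ in $O(1)$ rounds, and the $c(\cdot)$ invariant is re-established for the decompressed nodes so that subsequent \dcs calls in deeper reversal phases can proceed inductively.

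For the topology claim, note that \dcs performs exactly the reverse of the contractions performed by \cs in phase $k$: each local root reinserts precisely the subtrees and half-edges that it previously removed. Combined with the topology guarantee of \dcp (\Cref{lem:dcp}) applied before \dcs, this yields that after \dcs the graph is identical, as a labeled multigraph, to $G$ at the start of phase $k$, before \cs was executed.

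Finally, for the resource bounds: the computation inside each local root is local, needs no additional \mpc rounds beyond the constant number required to ship the subtree labels back, and touches only data that was already charged to global memory in \cs/\gs. In particular, each node $u$ of $G$ is contained in the decompression workload of at most one local root, so the aggregate memory used by \dcs is $O(n+m)$. The main obstacle in the proof, I expect, is precisely the bookkeeping required to make the invariant about $c(\cdot)$ and the cached subtree information persist cleanly across the chain of reversal phases so that \dcs always finds the needed data in local memory; once that is in place, the rest of the argument follows directly from \Cref{lem:cs-compat} and the $O(1)$-round implementability of broadcasting constant-size data from a single machine to a subtree of size at most $n^{\delta/2}$.
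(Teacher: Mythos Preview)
Your proposal is correct and follows essentially the same approach as the paper's proof, which invokes \Cref{lem:cs-compat} for the extension of the solution to the decompressed subtrees, records $c(u)$ for each decompressed node to maintain the invariant, and appeals to \Cref{lem:cs} for the runtime and memory bounds. Your version is considerably more explicit about why the cached subtree data from \gs/\cs suffices and why each node lies in at most one local root's workload, but these are exactly the details the paper's two-sentence proof leaves implicit.
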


\begin{proof}
	The first statement follows from Lemma \ref{lem:cs-compat}, since the solution for $\dot{G}$ can be extended to the decompressed trees, resulting in a correct solution on the graph that has the same node and edge sets that we had before the compression. For each node $u$ in the decompressed trees, we store in $c(u)$ the tuple used to label its incident half-edges. The runtime follows from Lemma \ref{lem:cs}, and the memory is respected trivially.
\end{proof}

\subsection{Proof of \texorpdfstring{\Cref{thm:LCLSolver}}{Lg}} \label{sec:proofsec}

By the lemmas in \cref{sec:cs,sec:gs,sec:cs,sec:cp,sec:dcp,sec:dcs}, the problem is solved in the original forest, and all subroutines of \sr have time complexity $O(\log D)$ in the low-space \mpc model and use $O(n+m)$ words of global memory. All of the subroutines are clearly deterministic. What is left to prove is that
\begin{itemize}
	\item[(i)] after a constant number of phases in Step 2, the graph is reduced to a single node;
	\item[(ii)] after a constant number of repetitions in Step 4, the graph is expanded to its original form;
	\item[(iii)] if the input graph is a forest, the algorithm is component-stable, and the runtime becomes $O(\log D_{\max})$, where $D_{\max}$ denotes the maximum diameter of any component.
\end{itemize}

\begin{proof}[Proof of (i)]
	The proof is very similar to the proof of \Cref{lem:lConstant}, but we restate the claims for completeness. Let us recall what effectively happens during a phase. There are only two subroutines that alter the graph: in \cs, all subtrees of size $\leq n^{\delta/2}$ are compressed into the first ancestor $v$ with a subtree of size $> n^{\delta/2}$; then, in \acp, all paths are compressed into single edges, leaving no degree-2 nodes in the graph. Let $G_j$ and $n_j = |G_j|$ denote the graph and the size of the graph at the beginning of phase $j$, respectively. We claim that after one phase, the number of nodes in the graph drops by a factor of $\Theta(n^{\delta/2})$. Observe, that after \cs every leaf $w$ in the graph corresponds to a subtree of size $\geq n^{\delta/2}$ that was removed. Moreover, the same holds also after \acp. Hence,
	\begin{align*}
		n_{j} &\geq n_{j+1} +  |\{w \in G_{j+1} \mid \deg_{G_{j+1}}(w)=1\}| \cdot  n^{\delta/2} \\
		&> n_{j+1} + n^{\delta/2} \cdot n_{j+1}/2 \cdot  \\
		&= n_{j+1} (1+n^{\delta/2}/2)\text{,}
	\end{align*}
	implying that
	\begin{align*}
		n_{j+1} \leq \frac{n_j}{1+n^{\delta/2}/2}~.
	\end{align*}

	The first strict inequality stems from the fact that there are no degree-2 nodes left after phase $j$, and hence the number of leaf nodes in $G_{j+1}$ is strictly larger that $n_{i+1}/2$. It is clear that after $O(1/\delta)$ phases, the graph is reduced to one node. 
\end{proof}

\begin{proof}[Proof of (ii)]
	Let us recall what effectively happens during a repetition. Both subroutines \dcp and \dcs alter the graph by decompressing the paths and subtrees that were compressed previously in some phase. Hence, the number of repetitions is equal to number of phases, which is constant.
\end{proof}

\begin{proof}[Proof of (iii)]
	During the algorithm, the only communication between the components happens in order to start the subroutines in synchrony, which does not affect the \lcl solution. It does however affect the runtime, since smaller components may be stalled behind larger components. Hence, in all runtime arguments, $D$ can be substituted with $D_{\max}$.
\end{proof}

\paragraph{Extension to unsolvable \lcl problems.}

If the \lcl problem is unsolvable, we can detect it in the following way. If, during any phase of Step 2, a local root $v$ ends up with an empty set of tuples in $\phi(v)$, the original \lcl problem must be unsolvable. Node $v$ can then broadcast to all nodes in the graph to output label $\bot$ on their incident half-edges, indicating that there is no solution to the \lcl problem. 

\section{Conditional Hardness Results} \label{sec:Hardness}
In this section, we show that our algorithm for solving all \lcl{}s is optimal, assuming a widely believed conjecture about \mpc. 
By earlier work, we consider the following more convenient problem that is also hard under the conjecture.
We note that, due to technical reasons, our problem definition is slightly different to the one in~\cite{Ghaffari2019}.
Following in the footsteps of previous work, we will show that our version of the problem is also hard under the \vs conjecture.

\begin{definition}[The \diamConnectivity problem]\label{def:hardproblem}
Consider a graph that consists of a collection of paths of diameter $O(D)$, for some parameter $D$ satisfying $D \in \Omega(\log n)$ and $D = n^{o(1)}$.
Given two special nodes $s$ and $t$ of degree $1$ in the graph, the algorithm should provide the following guarantee:
If $s$ and $t$ are in the same connected component, then the algorithm should output YES. 
If $s$ and $t$ are in different connected components, the algorithm should output NO.
\end{definition}

\begin{lemma}\label{lemma: path-hardness}
    Assuming that the \vs conjecture holds, there is no deterministic low-space \mpc algorithm with $\poly(n)$ global memory to solve the \diamConnectivity problem in $o(\log D)$ rounds.
\end{lemma}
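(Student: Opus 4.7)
The plan is to reduce the \vs problem on an instance of size $N = \Theta(D)$ to the \diamConnectivity problem. This adapts the reduction of~\cite{Ghaffari2019} but carefully scales the underlying cycle instance to have size $\Theta(D)$, so that the paths produced by the reduction automatically have diameter $O(D)$---the property that the prior reduction did not guarantee.

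Given a \vs instance on $N$ nodes $v_1, \ldots, v_N$ (either one $N$-cycle, or two disjoint $(N/2)$-cycles on $\{v_1, \ldots, v_{N/2}\}$ and $\{v_{N/2+1}, \ldots, v_N\}$), I would remove two fixed edges that are present in both cases, namely $e_1 = \{v_1, v_2\}$ and $e_2 = \{v_{N/2+1}, v_{N/2+2}\}$, and designate $s := v_2$ and $t := v_{N/2+1}$. A short case analysis shows that in the one-cycle case both removed edges lie on the single cycle, yielding two paths one of which contains both $s$ and $t$; whereas in the two-cycle case each edge opens a different $(N/2)$-cycle, placing $s$ and $t$ on different paths. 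Hence $s$-$t$ connectivity decides the \vs instance, both $s$ and $t$ end up as degree-$1$ endpoints, and every resulting path has at most $N-1$ edges and hence diameter $O(D)$, as required by the format of \diamConnectivity.

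To match the parameter regime $D = \Omega(\log n)$ and $D = n^{o(1)}$, I would pad the instance with $\Theta(n/D)$ disjoint dummy paths of length $\Theta(D)$ avoiding $s$ and $t$. Since the padding is fixed and carries no information about the hidden cycle structure, it cannot help the algorithm distinguish the two cases. Then, supposing for contradiction a deterministic low-space algorithm $\mathcal{A}$ solves \diamConnectivity in $o(\log D)$ rounds with $\poly(n)$ global memory, running $\mathcal{A}$ on the constructed instance decides \vs on $N = \Theta(D)$ nodes in $o(\log N)$ rounds, contradicting the conjecture.

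The main obstacle is the memory regime: the \vs conjecture is typically phrased for near-linear total memory on an $N$-node input, while the lemma permits $\poly(n)$ total memory with $n \gg N$. I would therefore invoke a suitably strong form of the conjecture that remains valid in this regime, and formally argue that the data-independent padding can be ignored by any adversary, so that any hypothetical algorithm improving on $\Omega(\log D)$ would yield a corresponding improvement on the embedded $\Theta(D)$-node \vs instance. A secondary technical point is verifying that the same pair of edges to delete can be specified to the algorithm in a case-independent manner (e.g., as part of the input encoding), so that the reduction is genuinely oblivious to which of the two \vs cases is realized.
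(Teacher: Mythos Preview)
Your reduction has a genuine gap that you yourself flag as ``the main obstacle'' but do not actually resolve. You embed a \vs instance of size $N = \Theta(D)$ inside an $n$-node graph, where $D = n^{o(1)}$. In the low-space \mpc model the local memory is $n^{\delta}$ for a fixed constant $\delta$, so $n^{\delta} \gg N$; a single machine can hold the entire embedded cycle instance and decide it locally in $O(1)$ rounds. The data-independent padding does not help: precisely because it is data-independent, an adversary algorithm can ignore it (or recognize and discard it) and focus on the $\Theta(D)$-node core. The standard \vs conjecture gives no lower bound in this regime, and appealing to an unspecified ``suitably strong form'' of the conjecture is not a proof---no such strengthened form is established, and in fact any version that forbids solving an $n^{o(1)}$-size subinstance with $n^{\delta}$ local memory would be false.

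The paper's proof avoids this by running the reduction in the opposite direction. It starts from a full $n$-node \vs instance (where the conjecture does apply), and uses the hypothetical $o(\log D)$-round \diamConnectivity algorithm as a subroutine to shrink the instance by a factor of roughly $D$ per phase: randomly sample nodes with probability $1/D$ to cut the cycles into paths of length $O(D\log n)$, run the \diamConnectivity algorithm on all $O(n^{2})$ endpoint pairs in parallel to learn which endpoints share a path, then contract each path to a virtual edge. After $O(\log_{D} n)$ phases the instance has $n^{o(1)}$ nodes and can be finished off in $o(\log n)$ rounds by known methods. The total cost is $O(\log_{D} n)\cdot o(\log D) + o(\log n) = o(\log n)$, contradicting the conjecture on the original $n$-node instance. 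The key structural difference is that the paper never needs the conjecture to say anything about sub-polynomial-size instances.
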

\begin{proof} 
    On a high level, we show that we can use an algorithm for the \diamConnectivity problem to reduce the size of the given cycles by a multiplicative factor $D$, unless the given cycles are already too small. We then show that, by recursively applying this algorithm, we obtain a solution for the \vs cycles problem.
    
    In more detail, we are given a graph $G$ that is either one or two cycles, where each cycle is of length at least $n/2$. Let $D$ be in $\Omega(\log n)$ and in $n^{o(1)}$. We proceed in phases, starting in phase $i=0$. We assume that at the beginning of phase $i$ the graph $G$ contains at least $(n / D^i) / 2^i$ nodes and at most  $(n / D^i) \cdot 2^i$ nodes, and we guarantee that at the end of phase $i$ the graph contains at least $(n / D^{i+1}) / 2^{i+1}$ nodes and at most  $(n / D^{i+1} )\cdot 2^{i+1}$ nodes. If the graph contains two cycles, this factor-$D$ reduction will actually independently hold for the size of each cycle. This is performed by running the algorithm $A$ that solves the \diamConnectivity problem. We stop when the number of nodes is $n^{o(1)}$, which requires $O(\log_D n)$ phases. Then, we can spend $o(\log n)$ rounds to solve the problem with known techniques (e.g.,~\cite{Behnezhad2019}). If the \diamConnectivity problem could be solved in $o(\log D)$ rounds, we would obtain a total running time of $O(\log_D n) \cdot o(\log D) + o(\log n) = o(\log n)$, which violates the \vs conjecture. We now explain a single phase of the algorithm.

    In each phase $i$, we maintain the invariant that, if there are two cycles, the larger one contains at most $4^i$ times the nodes of the smaller one.
    Assume that at the beginning of phase $i$ there are at least $(4^{i} + 1)c D \log n$ nodes, for a sufficiently large constant $c$. If it is not the case, then we are done, because the number of nodes is in $n^{o(1)}$.
    
    Sample the nodes in $G$ with probability $1/D$ and turn each sampled node \emph{inactive}. At the end of the phase, only inactive nodes will remain, and by a standard Chernoff bound, with high probability, the number of inactive nodes is at least a factor $D/2$ and at most a factor $2D$ smaller than the original amount of nodes. Moreover, this holds independently on each cycle, and hence the ratio of the sizes of the obtained cycles can increase by at most a factor $4$, hence maintaining the invariant.
    
    We now show an upper bound on the length of the obtained paths, induced by active nodes. Consider a sequence of $c' D \log n$ nodes, for some sufficiently large constant $c'$. The probability that none of them is sampled is $(1 - 1/D)^{c' D \log n}$, and hence, with high probability, each path has length $O(D \log n)$.
    Moreover, since the shortest cycle has at least $c D \log n$ nodes, then, by fixing $c$ sufficiently larger than $c'$, we obtain that each cycle contains at least one sampled node with high probability, and hence $G'$ is a collection of paths, as required.

    We create many instances of the \diamConnectivity problem from these paths as follows.
    Fix a node $u$ with degree $1$ in $G'$.
    We set $u \coloneqq s$ and create an instance of \diamConnectivity for each possible choice of $t \neq s$, where $t$ is also a degree $1$ node.
    Notice that there can be at most $n$ of such choices.
    Furthermore, we do the same construction for all possible choices of $s$, which results in $O(n^2)$ instances of the \diamConnectivity problem.
    
    Suppose now that we have a deterministic $o(\log D)$ time algorithm $A$ to solve the \diamConnectivity problem. The paths have length $O(D \log n)$, and hence running this algorithm requires $o(\log (D \log n))$ rounds, which, by the assumption on $D$, is still in $o(\log D)$.
    Run $A$ independently on each of the $O(n^2)$ instances of the \diamConnectivity problem.
    On an instance where $s$ and $t$ are on the same path, the algorithm returns YES and otherwise NO.
    Hence, we can derive which endpoints in $G'$ are on the same path in $o(\log D)$ time.
    
    Then, we create a new instance of the \vs cycle problem as follows.
    For each pair $s$ and $t$ on the same path, we create a virtual edge between the inactive neighbors of $s$ and $t$ and remove the active nodes.
    The number of nodes decreases at least by a factor $D/2$ and at most by a factor $2D$, as required.
\end{proof}

Since a connected component algorithm clearly solves the \diamConnectivity problem, we obtain the following corollary.
\begin{corollary}
Assuming the \vs conjecture, there is no low-space memory \mpc algorithm to solve connected components in $o(\log D)$ rounds on forests.
\end{corollary}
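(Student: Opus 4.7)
The plan is a direct reduction: run the hypothetical faster connected components algorithm, then read off the answer to the \diamConnectivity problem from the component identifiers of the two distinguished nodes $s$ and $t$. The instances appearing in \Cref{def:hardproblem} are collections of paths, so in particular they are forests whose components have diameter $O(D)$, and any forest connected-components algorithm applies to them.

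In more detail, I would argue by contrapositive. Suppose there is a deterministic low-space \mpc algorithm $A$ with $\poly(n)$ global memory that solves connected components on forests in $o(\log D)$ rounds. Given an instance of the \diamConnectivity problem on a graph $G$ with distinguished endpoints $s$ and $t$, first execute $A$ on $G$, after which every node (and in particular $s$ and $t$) knows a canonical identifier of its connected component, for instance the maximum ID in that component as guaranteed by \Cref{thm:CCMainTheorem}-style outputs. In a final $O(1)$-round postprocessing step, have $s$ and $t$ each transmit their component identifier to a single prearranged coordinator machine, which compares them and outputs YES iff they match. Correctness is immediate: $s$ and $t$ lie in the same component exactly when their identifiers agree. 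The total round complexity is $o(\log D) + O(1) = o(\log D)$, contradicting \Cref{lemma: path-hardness}.

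The only point worth verifying is that the comparison step does not break the memory regime. Since $s$ and $t$ are fixed by the input and each sends a single $O(\log n)$-bit identifier, routing both words to a single machine is standard in the low-space \mpc model and uses only $O(1)$ additional words of global memory, so the reduction inherits the $\poly(n)$ global-memory bound from $A$. There is no genuine obstacle in the argument; the technical content was already contained in \Cref{lemma: path-hardness}, and the corollary is essentially a restatement of that hardness in the language of connected components.
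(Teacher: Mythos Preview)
Your proposal is correct and matches the paper's approach exactly: the paper states only that ``a connected component algorithm clearly solves the \diamConnectivity problem,'' and your argument is precisely the spelled-out version of that one-line reduction via \Cref{lemma: path-hardness}.
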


We now show that we can define an \lcl{} problem $\Pi$ for which we can convert any solution into a solution for the problem of \Cref{def:hardproblem} in constant time. This implies a conditional lower bound of $\Omega(\log D)$ for $\Pi$,  implying also that our generic solver, that runs in $O(\log D)$ rounds, is optimal. Instead of defining $\Pi$ by defining $C_V$ and $C_E$ formally, which makes it difficult to parse the definition, we provide a human understandable description of the constraints.
\begin{itemize}
    \item The possible inputs of the nodes are $0$ or $1$. In the instances that we create, all nodes will be labeled $0$, except for $s$, which will be labeled $1$.
    \item The possible outputs are on edges, and every edge needs to be either oriented or unoriented.
    \item All nodes of degree $2$ must have either both incident edges unoriented, or both incident edges oriented. If they are oriented, one must be incoming and the other outgoing.
    \item Any node of degree $1$ with input $1$ must have its incident edge oriented.
    \item Any node of degree $1$ with input $0$ must have its incident edge either unoriented, or oriented incoming.
\end{itemize}
We can observe some properties on the possible solutions for this problem:
\begin{itemize}
    \item The edges of a path are either all oriented or all unoriented.
    \item The edges of a path containing only nodes with input $0$ must all be unoriented, because a path needs to be oriented consistently, and endpoints with input $0$ must have their edge oriented incoming.
    \item All the edges of a path containing an endpoint with input $1$ must be oriented.
\end{itemize}
We can use an algorithm for $\Pi$ to solve the problem of \Cref{def:hardproblem} as follows. By giving $0$ as input to all nodes except $s$, and $1$ to $s$, and solving $\Pi$, we obtain a solution in which only the other endpoint of the path containing $s$ has an oriented incident edge, and we can hence check if this node is $t$. 
Since $\Pi$ is an \lcl, we obtain the following.
\begin{theorem}
    Assuming the \vs conjecture, there is no low-space memory \mpc algorithm to solve any solvable \lcl in $o(\log D)$ rounds on forests.
\end{theorem}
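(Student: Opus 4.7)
My plan is to prove the theorem by contraposition, converting a hypothetical $o(\log D)$-round algorithm for the \lcl{} $\Pi$ just defined into an $o(\log D)$-round algorithm for \diamConnectivity, and then invoking \Cref{lemma: path-hardness} for the contradiction. The structural observations about $\Pi$ already listed (paths are oriented monochromatically, paths with only input $0$ at both endpoints are fully unoriented, paths containing an endpoint labeled $1$ are fully oriented) give exactly the handle needed to read off connectivity from any solution.

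The reduction proceeds as follows. Suppose, towards contradiction, that there is a deterministic low-space \mpc algorithm $A$ that solves $\Pi$ on every forest of component diameter at most $D$ in $o(\log D)$ rounds using $O(n+m)$ global memory. Given an instance of \diamConnectivity (a collection of paths of component diameter $O(D)$, with distinguished degree-$1$ endpoints $s$ and $t$), I first spend $O(1)$ \mpc rounds to label every node with input $0$ except for $s$, which receives input $1$. I emphasize that this instance is always solvable in $\Pi$: orient every edge of the path containing $s$ away from $s$, and leave every other edge unoriented; the constraints at all degree-$1$ and degree-$2$ nodes are then satisfied. Hence $A$ produces a valid output in $o(\log D)$ rounds. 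In a final $O(1)$ rounds, node $t$ inspects its unique incident edge and returns YES if that edge is oriented and NO otherwise. Correctness follows because the path containing $s$ must be fully oriented (the endpoint with input $1$ forces it), while every other path has both endpoints labeled $0$ and so must be fully unoriented (otherwise the two endpoints would each need an incoming orientation, which is incompatible with a consistent orientation of the path). Thus $t$'s incident edge is oriented if and only if $t$ is in the same path as $s$.

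Putting the pieces together, the reduction gives a deterministic low-space \mpc algorithm for \diamConnectivity running in $O(1) + o(\log D) + O(1) = o(\log D)$ rounds with $O(n+m)$ global memory, contradicting \Cref{lemma: path-hardness}. Since $\Pi$ is a bona fide \lcl{} (constant input and output alphabets, checkability radius $1$, bounded degree $2$), this establishes the claim for a solvable \lcl{} and, a fortiori, rules out any asymptotically faster algorithm that handles all solvable \lcl{}s.

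The main obstacle I anticipate is essentially bookkeeping rather than any conceptual gap: verifying that $\Pi$ meets the formal requirements of \Cref{def:lcl} (in particular, that the informal constraint list can be encoded by finite $C_V$ and $C_E$, which is immediate since only a constant number of half-edge configurations arise) and confirming that the diameter parameter transfers correctly through the reduction. The latter requires noting that component diameters remain $O(D)$ under the input relabeling, so $\log$ of the diameter parameter used by $A$ is $\log D + O(1)$, which is absorbed into the $o(\log D)$ bound.
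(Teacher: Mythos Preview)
Your proposal is correct and follows essentially the same approach as the paper: reduce \diamConnectivity to $\Pi$ by labeling $s$ with input $1$ and all other nodes with $0$, run the hypothetical $o(\log D)$ solver, and read off connectivity from whether $t$'s incident edge is oriented, contradicting \Cref{lemma: path-hardness}. Your write-up is in fact more detailed than the paper's (you explicitly verify solvability of the instance and track the memory and diameter parameters through the reduction), but the underlying argument is identical.
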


\appendix 
	
\section{\mpc Implementation Details} \label{sec:MPCdetails}

Initially, before executing any algorithm, the input graph of $n$ nodes and $m$ edges is distributed among the machines arbitrarily. By applying \Cref{def:aggTreeStructure}, we can organize the input such that every node and it's edges are hosted on a single machine, or, in the case of high degree, on multiple consecutive machines. 

\begin{definition}[Aggregation Tree Structure, \cite{MPCbasictools}] \label{def:aggTreeStructure}
		Assume that an \mpc algorithm receives a collection of sets $A_1,\dots, A_k$ with elements from a totally ordered domain as input. In an aggregation tree structure for $A_1,\dots,A_k$, the elements of $A_1,\dots,A_k$ are stored in lexicographically sorted order (they are primarily sorted by the number $i \in \{1,\dots,k\}$ and within each set $A_i$ they are sorted increasingly). For each $i \in \{1,\dots,k\}$ such that the elements of $A_i$ appear on at least 2 different machines, there is a tree of constant depth containing the machines that store elements of $A_i$ as leafs and where each inner node of the tree has at most $n^{\delta/2}$ children. The tree is structured such that it can be used as a search tree for the elements in $A_i$ (i.e., such that an in-order traversal of the tree visits the leaves in sorted order). Each inner node of these trees is handled by a separate additional machine. In addition, there is a constant-depth aggregation tree of degree at most $n^{\delta/2}$ connecting all the machines that store elements of $A_1 ,\dots, A_k$.
\end{definition}

This section is dedicated to showing how \cc can be implemented in the low-space \mpc model. We only cover routine \clst, since the implementation details for \cp, \dclst, and \dcp are simple, and included in the corresponding proofs. 

In the proof of \clst, we have reasoned that the local memory of a node never exceeds $O(n^\delta)$, and that the total memory never exceeds $O(n \cdot \ad^3)$. However, we have to also ensure that the low-space \mpc's communication bandwidth of $O(n^\delta)$ is respected throughout the routines (\Cref{lem:MPCdetails}). Also, we have to address the possibility of a node having degree $>n^\delta$, since we work with arbitrary degree trees. 

If, during some iteration of \clst, the degree of a node is $>n^\delta$, it is clearly heavy, and does not partake in the ongoing iteration. In fact, if the degree is $>n^{\delta/8}+1$, it is also heavy and does not partake. 
Hence, in the following lemma, we can assume that every node $v$ and its edges are hosted on a single machine, and that $\deg(v) \leq n^{\delta/8}+1$.

\begin{lemma} \label{lem:MPCdetails}
	The following routines can be performed in $O(1)$ low-space \mpc rounds:
	\begin{enumerate}
		\item A node can detect whether it is happy or full, 
		\item $\expo(X), X \subseteq N(v)$, 
		\item If node $v$ is added in $S_w$ for some $w$, $v$ is able to add $w$ to $S_v$.
	\end{enumerate}
\end{lemma}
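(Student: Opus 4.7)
The plan is to use aggregation tree structures (\Cref{def:aggTreeStructure}) to store each set $S_v$ together with the direction label $r_v(w)$ for every $w\in S_v$, and to realize the three operations using standard $O(1)$-round low-space \mpc primitives: bottom-up aggregation along a constant-depth tree, global sort, and routing by key. By \Cref{lem:rootingLocalMemory}, $|S_v|=O(n^\delta)$ at all times, so each $S_v$ occupies only a constant number of machines, whose aggregation tree has constant depth. The direction labels are maintained as an invariant: when $w$ enters $S_v$ via $\expo$ through neighbor $u\in X$, we tag it with $r_v(w)=u$, and we will see that the analogous label can be piggybacked on the messages used in item~(3).

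For item~(1), the value $|S_v|$ is computed by a bottom-up sum along $v$'s aggregation tree in $O(1)$ rounds, which decides whether $v$ is \full. To decide whether $v$ is \happy against some $u\in N(v)$, $v$ first computes $|S_{v\arr u}|$ by the same aggregation restricted to entries tagged with $u$. If this exceeds $n^{\delta/8}$, then $u$ is not a witness. Otherwise, $v$ must verify $S_{v\arr u}=G_{v\narr u}$, i.e., that no reachable node is missing. Each $w\in S_{v\arr u}$ locally inspects its neighbors in $G$ (it has degree at most $n^{\delta/8}+1$) and reports, via $v$'s aggregation tree, whether every neighbor $x\neq r_w(v)$ lies in $S_{v\arr u}$; membership is tested by a binary search in the sorted aggregation tree. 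If this holds for every such $w$, then $v$ becomes \happy against $u$. All $u\in N(v)$ are processed in parallel.

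For item~(2), implementing $\expo(X)$: node $v$ broadcasts $X$ down its aggregation tree so that every $w\in S_{v\arr u}$ with $u\in X$ learns that it must transmit $S_{w\narr r_w(v)}$ back to $v$. Each such $w$ has $r_w(v)$ stored (by the symmetry maintained in item~(3), $v\in S_w$ whenever $w\in S_v$), and extracts $S_{w\narr r_w(v)}$ by filtering its own aggregation tree. The outgoing messages are keyed by their destination and shipped using a standard global sort, whose total size is within the global bound $O(n\cdot\ad^3)$ of \Cref{lem:rootingGlobalMemory}. After the sort, each $v$'s incoming batch is contiguous on consecutive machines, is deduplicated, and is inserted into a refreshed aggregation tree for $S_v$; every incoming element $z$ that arrived from some $w$ queried via direction $u$ is tagged with $r_v(z)=u$.

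For item~(3), we emit the collection of triples $\{(w,v,r_w(v)) : v\in S_w\}$ and sort it globally by the second coordinate in $O(1)$ rounds, again using that its size is within the global memory bound. After the sort, every $v$ sees the pairs $(w,r_w(v))$ contiguously and inserts each such $w$ into $S_v$ with the appropriate direction $r_v(w)$, which is attached to the message by carrying it along the chain of exponentiations that first brought $v$ into $S_w$. The main obstacle across all three items is that a single node may be simultaneously queried or referenced by many others; the central observation that unlocks $O(1)$-round implementation is that \Cref{lem:rootingLocalMemory} and \Cref{lem:rootingGlobalMemory} together confine the volume of all such messages within the per-machine and total memory budgets, so sorting and routing primitives never overflow.
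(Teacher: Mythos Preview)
Your proposal contains a genuine gap: you assume that the symmetry step of item~(3) keeps $v\in S_w$ whenever $w\in S_v$, and you rely on this so that $w$ can look up $r_w(v)$ when $v$ queries it in $\expo$. But in \clst the symmetry step (Step~1(d)ii) is executed only when $|\fd|=0$; in Step~1(c), a node with $|\fd|=1$ exponentiates without restoring symmetry. The paper's proof singles out exactly this asymmetry as the central difficulty: a node $v$ may be contained in $S_w$ for many $w$ with $w\notin S_v$, so $v$ can receive queries from far more than $n^\delta$ nodes it has no record of. Your argument that \Cref{lem:rootingLocalMemory} and \Cref{lem:rootingGlobalMemory} ``confine the volume of all such messages within the per-machine and total memory budgets'' handles the total volume but not the per-machine bandwidth: after sorting queries by destination, a single machine can hold $\Theta(n^\delta)$ queries to one node $w$, each requiring an answer of size up to $|S_w|$, so that machine would have to emit $\Theta(n^{2\delta})$ words.

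The paper resolves both issues simultaneously. It builds, for each node $i$, an aggregation tree $\mathcal{A}_i$ over the \emph{incoming} references $\{(x,i):i\in S_x\}$; the root of $\mathcal{A}_i$ fetches $S_i$ once and distributes it to the leaves, so every querier's request is answered at a leaf that already holds the relevant pair, regardless of whether symmetry holds. The residual bandwidth blowups of $O(n^{3\delta/2})$ (distributing $S_i$) and $O(n^{2\delta})$ (leaves answering) are absorbed by running the entire algorithm with parameter $\delta/2$ in place of $\delta$, so that the algorithmic bounds leave a factor-$n^{\delta/2}$ of slack inside each machine's $n^\delta$ budget. Your sorting-based route could in principle be repaired with the same $\delta/2$ trick, but you would still need either bidirectional direction labels (storing $r_w(v)$ alongside $r_v(w)$, propagated through the $\expo$ chain) or an incoming-edge structure like the paper's, because $w$ genuinely may not know which of its neighbors leads to a querier $v$.
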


\begin{proof} We prove the three statements separately. All three statements have the a common technical difficulty: it is possible for node $v$ to be included in $S_w$ for some $w$, such that $w \not\in S_v$, which causes communication bandwidth congestion. We address this common issue shortly, after reasoning about the separate challenges of each routine.
	\begin{enumerate}
		\item Since the property of being full depends solely on the size of $S_v$, it can be computed locally. In order for a node $v$ to detect if it is happy, $v$ only has to ask all nodes $w \in S_v$ for their degrees.
		\item In order for a node $v$ to perform $\expo(X), X\subseteq N(v)$, $v$ must ask a subset of nodes $w \in S_v$ for their $S_{w \narr r_w(v)}$, which is straightforward to implement.
		\item When needed, a node $w$ can inform nodes $v \in S_v$ that they have been added to $S_w$. After which it is straightforward for $v$ to add $w$ to $S_v$.
	\end{enumerate}
	
	In all of the routines above, it is possible for $v \in S_w$ for some $w$, such that $w \not\in S_v$. This can happen when $v$ does not maintain a symmetric view towards a direction in \fd in Step 1(c). This can cause $>n^\delta$ nodes querying node $v$, breaking the communication constraint of the low-space \mpc model. The following scheme resolves the issue. Let us first restate a tree structure that is useful to carry out computations on a set or on a collection of sets, in $O(1)$ low-space \mpc rounds and with $O(n+m)$ global memory. 
	
	Denote the collection of machines we are using for the algorithm as $M = \{M_1,M_2,\dots,M_l\}$. Let us allocate a new collection of empty machines $M' = \{M'_1,M'_2,\dots,M'_l\}$. For every node $w \in S_v$ of a node $v$ hosted by $M_j$, send a directed edge $(v,w)$ to $M'_j$. Let us call all edges of form $(x,y)$ as the \emph{outgoing} edges of $x$ and \emph{incoming} edges of $y$. Along with the edge, send the address of machine $M_j$.
	
	Define sets $A_1,\dots,A_k$ such that set $A_i$ contains all incoming edges of node $i$. Apply \Cref{def:aggTreeStructure} such that sets $A_1,\dots,A_k$ are stored in $M'$ in lexicographically sorted order (by the ID of $i \in \{1,\dots,k\}$ and within each set $A_i$, the elements are sorted increasingly). By \Cref{def:aggTreeStructure}, for each $i \in \{1,\dots,k\}$ such that the elements of $A_i$ appear on at least 2 different machines, there is a tree of constant depth containing the machines that store elements of $A_i$ as leafs and where each inner node of the tree has at most $n^{\delta/2}$ children. Let us denote this kind of tree as $\mathcal{A}_i$. 
	
	Observe that every $\mathcal{A}_i$ corresponds to a node $i$ that has $>n^{\delta}$ incoming edges, which is exactly the problematic case we have set out to deal with. The root of $\mathcal{A}_i$ can ask for $S_i$ from the machine in $M$ hosting node $i$, and distributes $S_i$ to all leaf nodes hosting the incoming edges (this requires a communication bandwidth of $O(n^{3\delta/2})$). We also establish a mapping from the machines in $M$ to machines in $M'$ such that the machine in $M$ hosting $u$ and $S_u$ knows the machines in $M'$ hosting edges $(u,v)$ for every $v \in S_u$. This is straightforward to implement, since when we distributed the edges to $M'$, we also distributed the corresponding addresses of machines in $M$.
	
	Let us describe what effectively happens when a node $u$ asks for $S_{v \narr r_v(u)}$ of node $v \in S_u$ if $v$ has $>n^\delta$ incoming edges. The machine $M_u \in M$ hosting node $u$ queries the machine $M_{(u,v)}' \in M'$ for $S_{v \narr r_v(u)}$, where $M_{(u,v)}'$ is the machine hosting edge $(u,v)$. Due to the design of the aggregation tree, machine $M_{(u,v)}'$ is a leaf of tree $\mathcal{A}_v$ and has at most $n^{\delta}$ elements (edges) stored on it. The queries to machine $M_{(u,v)}'$ comprise an incoming message size of $O(n^{\delta})$. Answering the queries would require a communication bandwidth of $O(n^{2\delta})$. 
	
	We can reduce the communication bandwidth of $O(n^{3\delta/2})$ (the distribution of $S_i$) and $O(n^{2\delta})$ (the leaves of $\mathcal{A}_v$ answering queries) to $O(n^\delta)$ by using $\delta/2$ instead of $\delta$ for the whole algorithm.
\end{proof}

\bibliographystyle{alphaurl}
\bibliography{mpc-cc-arxiv}
\appendix

\end{document}